\let\color@begingroup\relax
   \let\color@endgroup\relax}{}%
\def\fix@ieeecolor@hbox#1{%
  \hbox{\color@begingroup#1\color@endgroup}}
\patchcmd\@makecaption{\hbox}{\fix@ieeecolor@hbox}{}{\FAILED}
\patchcmd\@makecaption{\hbox}{\fix@ieeecolor@hbox}{}{\FAILED}
\newtheorem{thm}{Theorem}
\newtheorem{definition}{Definition}
\newtheorem{coro}{Corollary}
\newtheorem{lemma}{Lemma}
\newtheorem{problem}{\textbf{Problem}}
\newtheorem{remark}{Remark}
\newtheorem{assumption}{Assumption}
\newtheorem{mechanism}{Mechanism}
\newcommand{\baike}[1]{\textcolor{black}{#1}}
\title{Scalable Distributed Reproduction Numbers of Network Epidemics with Differential Privacy}
\author{Bo Chen$^*$\thanks{$^*$ indicates equal contribution.}, Baike She$^*$, Calvin Hawkins, Philip E. Par\'e, Matthew T. Hale
\thanks{B. Chen, B. She, and M.T. Hale are with the School of Electrical and Computer Engineering,
Georgia Institute of Technology, Atlanta, GA, USA. Emails: \texttt{\{bchen351,bshe6,matthale\}@gatech.edu}.}
\thanks{Work by C. Hawkins was performed while he was a PhD student in the 
Department of Mechanical and Aerospace Engineering
at the University of Florida,
Gainesville, FL, USA. Email: \texttt{cal.hawk.ch@gmail.com}.} 
\thanks{B. Chen, B. She, C. Hawkins, and M. Hale were supported by 
NSF under CAREER grant 2422260,
AFOSR under grant FA9550-19-1-0169,
ONR under grant N00014-21-1-2502,
and 
DARPA under grant HR00112220038. P. E. Par\'e was partially supported by NSF-ECCS-\#2238388.}
}
\begin{document}
\maketitle
\newcommand{\realspace}{\mathbb{R}}
\newcommand{\nonnegativeRealSpace}{\realspace_{\geq 0}}
\newcommand{\bigO}[1]{O\left(#1\right)}
\newcommand{\bigOmega}[1]{\Omega\left(#1\right)}
\newcommand{\privateReproNum}{\Tilde{R}}
\newcommand{\reproNum}{R}
\newcommand{\expectation}[1]{\mathbb{E}\left[#1\right]}
\newcommand{\variance}[1]{\text{Var}\left[#1\right]}
\newcommand{\prob}[1]{\mathbb{P}\left[#1\right]}

\newcommand{\adjacent}{\sim}
\newcommand{\localRandomizer}{\mathcal{R}}
\newcommand{\database}{X}

\newcommand{\centralDP}{\mathcal{M}}

\newcommand{\vectElement}{\Bar{R}_{i,\chi_r}^t}
\newcommand{\vect}{\bm{\zeta}}

\newcommand{\domain}{\mathbb{D}}
\newcommand{\privateVect}{\Tilde{\vect}}
\newcommand{\privateVectElement}{\Tilde{R}_{i,\chi_r}^t}
\newcommand{\vectLowerBound}{\underline{\vect}}
\newcommand{\vectUpperBound}{\bar{\vect}}

\newcommand{\vectElementLowerBound}{l}
\newcommand{\vectElementUpperBound}{u}

\newcommand{\permutation}{\pi}
\newcommand{\diag}{\textnormal{diag}}
\newcommand{\localRandPrivacyLevel}{\epsilon_0}
\newcommand{\adjacency}{k}

\newcommand{\offsetVectElement}{c}
\newcommand{\offsetVect}{\bm{\offsetVectElement}}

\newcommand{\bo}[1]{\textcolor{red}{#1}}

\newcommand{\vectNumInShuffler}{n}
\newcommand{\setVectShuffler}{\underline{\vectNumInShuffler}}

\newcommand{\locAuthNum}{n}
\newcommand{\regionNum}{m}
\newcommand{\setLocAuthNum}{\underline{\locAuthNum}}
\newcommand{\setRegionNum}{\underline{\regionNum}}

\begin{abstract}
Reproduction numbers are widely used for the estimation and prediction of epidemic spreading processes over networks.
However, \textcolor{black}{conventional reproduction numbers of an overall} network do not 
indicate \emph{where} an epidemic is spreading. 
Therefore, we propose a novel notion of \textcolor{black}{local} distributed reproduction numbers to capture the spreading behaviors of each node in a network.
We first show how to compute them and then use them
to derive new conditions under which an outbreak can occur.
These conditions are then used to derive new conditions for the existence, uniqueness, and stability of equilibrium states of
the underlying epidemic model.~\textcolor{black}{Building upon these local distributed reproduction numbers, we define  \emph{cluster} distributed reproduction numbers to model the spread between clusters composed of nodes. Furthermore, we demonstrate that the local distributed reproduction numbers can be aggregated into cluster distributed reproduction numbers at different scales.}
However, both local and cluster distributed reproduction numbers can reveal the frequency of interactions between nodes in a network, which raises 
privacy concerns. \textcolor{black}{Thus, we next develop a privacy framework that implements a differential privacy mechanism to provably protect  \baike{the frequency of interactions between nodes}  when computing distributed reproduction numbers.}
Numerical experiments show that, even under differential privacy,
\baike{the} distributed reproduction numbers provide accurate
estimates of \baike{the} epidemic spread while also providing more 
insights than conventional reproduction numbers. 
\end{abstract}

\begin{IEEEkeywords}
Reproduction numbers, differential privacy, network spreading models
\end{IEEEkeywords}
\section{Introduction}
\label{sec:introduction}
Reproduction numbers are critical metrics in infectious disease epidemiology \cite{van2017reproduction}, as they 
are easily understood by policymakers and the public. These numbers also help design control interventions \baike{during} pandemics \cite{soltesz2020effect,she2024framework}. There are two common types of reproduction number: the \textit{basic reproduction number}, which represents the number of secondary infections caused by one infected case in a fully susceptible population, and the \textit{effective reproduction number}, which reflects the number of secondary infections caused by one infected case in a mixed susceptible and infected population \cite{van2017reproduction}. The critical threshold for the reproduction number is~one, as epidemic behavior changes significantly when the reproduction number is above or below this value.

Reproduction numbers have been used to model and design  epidemic mitigation strategies~\cite{she2021network,pascal2022nonsmooth, smith2021convex}. Recent studies establish threshold conditions to analyze both transient and steady-state behaviors in disease spreading models, based on whether the reproduction number is above or below~one~\cite{mei2017epidemics_review,pare2020modeling_review,zino2021analysis,nowzari2016epidemics}. This concept has been extended from classic $SIS$ models \cite{van2011n} to more complex \textcolor{black}{network} models, such as network bi-virus systems~\cite{bivirus} and coupled network models~\cite{she2021network}. 

Nonetheless, viral spread often exhibits high heterogeneity across different sub-populations, making it difficult to use the reproduction number of an entire spreading network 
to quantify the behavior of individual entities. For example, the spread of COVID-19 varied significantly across regions in the United States, with differences in infection growth and peak dates~\cite{ihme2021modeling}.
As a result, it is challenging to 
use a single network-level reproduction number to make inferences about specific communities, counties, states, and/or countries in highly heterogeneous spreading networks~\cite{arino2003multi_city_SIS}. 

To address this issue, we 
propose \textcolor{black}{the} notion of \emph{distributed reproduction numbers} 
to capture epidemic spreading at various scales within a network. Then, we develop threshold conditions based on these distributed reproduction numbers
under which outbreaks can occur
in the classic network $SIS$ and $SIR$ models.
Typically, infectious disease spreading networks are modeled with edges that represent transmission rates between nodes. However, absolute transmission rates alone do not directly capture the dynamics of spreading behavior between nodes, such as whether infection cases are increasing or decreasing. 
\textcolor{black}{Furthermore, it is unclear how to aggregate the transmission rates from the node level to the cluster level directly in a network spreading model.}
Alternatively, we can represent the spreading network with edges corresponding to distributed reproduction numbers across the network, using the threshold value to indicate disease spread within and between nodes. 
\textcolor{black}{Additionally, we can aggregate the distributed reproduction numbers at the node level into distributed reproduction numbers at different cluster levels. These clusters are composed of multiple nodes, and their cluster-level distributed reproduction numbers allow us to model the spread at different scales, providing a more comprehensive 
\baike{set of tools for}
understanding the spreading dynamics.}
Thus, similar to the \baike{network-level} reproduction number, distributed reproduction numbers \textcolor{black}{at different scales} provide simple but informative threshold \baike{metrics} that effectively convey the severity of the spread across the network.

Constructing spreading network models
involves 
privacy-sensitive spatio-temporal data related to human activities, such as contact tracing~\cite{eames2010assessing}, traffic flow~\cite{le2022high}, and mobile data~\cite{balcan2009multiscale}.
It is well-known that revealing even aggregate statistics of such information can compromise \baike{the privacy of individuals~\cite{imola2021locally,Karwa2014Private,Day2016Publishing,ZHANG2021Differentially,chen2021edge},} 
which makes it undesirable to share distributed reproduction numbers exactly.
%
Accordingly, we propose \textcolor{black}{a privacy framework} that uses \emph{differential privacy}\cite{dwork2014algorithmic} to provide formal privacy guarantees for the sensitive data
that is used to compute
distributed reproduction numbers. Differential privacy offers strong, formal protections for sensitive network data and allows for post-processing without harming its protections~\cite{dwork2014algorithmic}. 
It has been successfully used to privatize a range of dynamical and control 
systems~\cite{cortes2016differential,hale19,hawkins20,hawkins23}, 
and we therefore seek to bring these same benefits
to this \baike{domain}. To do so, 
rather than sharing the exact values of the distributed reproduction numbers for final analysis, we perturb them by adding properly calibrated noise before sharing them, 
thereby ensuring that this process provably provides differential privacy. 

To summarize, our contributions are: 
 \begin{itemize}
   \item We introduce a new group of \textcolor{black}{local} distributed reproduction numbers for spreading networks (Definition~\ref{Def:NRN})
    \item We use the \textcolor{black}{local} distributed reproduction number to analyze the transient and steady-state behaviors of network spreading processes 
    (Theorem~\ref{thm:connection} and Corollary~\ref{Coro_ditri_effe_repro_matrix})
    \item We 
    derive \textcolor{black}{cluster} distributed reproduction numbers at various scales in the network
    to analyze epidemic spread at different resolutions
    (Definition~\ref{Def:Generalized_Rt}, Theorem~\ref{Thm:Gen_Reprod}, and Corollary~\ref{coro:reproduction_subpop})
    \item We develop a privacy framework to compute privatized \textcolor{black}{local} and \textcolor{black}{cluster} distributed reproduction numbers 
    (\textcolor{black}{Algorithm~\ref{AL_1}} and Mechanism~\ref{mech:bounded_Gaussian_mechanism})
 \item We quantify the accuracy of privatized distributed reproduction numbers (Theorem~\ref{thm:differential_privacy_accuracy})
 \item We use real-world epidemic data to demonstrate all of these developments (Section~\ref{sec_Simulation})
 \end{itemize}
 
The rest of the paper is organized as follows. We introduce background and problem statements in Section~\ref{Sec:Background_and_Problem_Formulation}. 
In Section~\ref{Sec_The_Distributed_Reproduction_Number}, we introduce and study distributed reproduction numbers.
In Section~\ref{Privacy_Mechanism_For_The_Distributed_Reproduction_Numbers}, we design the differential privacy framework
that \baike{for the} distributed reproduction numbers.
Section~\ref{sec_Simulation} illustrates the results by analyzing real-world network spreading scenarios. 
Section~\ref{sec_Conclusion} concludes. 

In our previous work~\cite{she2023distributed}, we defined distributed reproduction numbers at the entity level only.
This paper differs by defining more general notions of distributed reproduction numbers at different resolutions.
Additionally, we implemented differential privacy when computing basic reproduction numbers in~\cite{chen2023differentially}. 
In this work, we differ by implementing privacy for distributed reproduction numbers and by validating our privacy
results on real epidemic data. 

\subsection*{Notation}
We use $\mathbb{R}$ to denote the real numbers, $\mathbb{R}_{\geq 0}$ to denote the non-negative reals, and $\mathbb{R}_{> 0}$ to denote the positive reals. We use $\mathbb{N}_{>0}$ to denote the positive integers, $\mathbb{Z}$ to denote the integers, and
$\mathbb{Z}_{\geq k}$ to denote all integers greater than or equal to~$k \in \mathbb{Z}$.
For a random variable $X$, $\mathbb{E}[X]$ denotes its expectation and $\text{Var}[X]$ denotes its variance. Let $\mathbf{1}_{T}(\cdot)$ denote the indicator function of set $T$.
We use $\underline{n}$ to denote the index set $\{1,2, \dots, n\}$ for $n\in\mathbb{N}_{>0}$.
For a real square matrix $M:=[m_{ij}]\in\mathbb{R}^{n\times n}$ with
$i,j\in\underline{n}$, 
we use $\rho(M)$ to denote
its spectral radius. For any two matrices $A:=[a_{ij}],C:=[c_{ij}]\in\mathbb{R}^{n\times n}$,
we write $A\geq C$ if $a_{ij}\geq c_{ij}$, 
$A> C$ if $a_{ij}\geq c_{ij}$ and $A\neq C$, 
and $A\gg C$ if $a_{ij}>c_{ij}$, for all $i,j\in \underline{n}$. These comparison notations between matrices apply to vectors as well.
For a vector~$v \in \mathbb{R}^n$, we write~$\textnormal{diag}(v)\in \mathbb{R}^{n\times n}$
to denote the diagonal matrix whose~$i^{th}$ diagonal
entry is~$v_i$ for each~$i \in \underline{n}$. We use $||\cdot||_F$ to denote the Frobenius norm of a matrix.

\baike{We use} $G=(V,E,W)$ to denote a directed, strongly connected, and weighted graph with node set $V$, edge set $E$, and weighted adjacency matrix $W:=[w_{ij}]\in\mathbb{R}_{\geq 0}^{n\times n}$, 
where $w_{ij} \geq 0$ denotes the $i^{th}j^{th}$ entry of the weighted adjacency matrix $W$. Let $|\cdot|$ denote the cardinality of a set. 
For a given matrix~$W$, 
we use $n_W=|\{w_{ij}>0 : i, j \in \underline{n}\}|$ 
to denote the number of positive entries in $W$. We use $\mathcal{G}_n$ to denote \baike{the} set of all possible directed, strongly connected, weighted graphs $G$ on $n$ nodes.

\section{Background and Problem Formulation}
\label{Sec:Background_and_Problem_Formulation}
Now we introduce background on network epidemic models and differential privacy, then give problem statements. 

\subsection{Network Epidemic 
Models}\label{sec:prelim_epidemic}
We consider network susceptible-infected-susceptible ($SIS$) and susceptible-infected-recovered ($SIR$) models to study disease spread over connected sub-populations at various levels, whether globally, nationally, regionally, or within a community. 
The entire population consists of a set of $n$ entities, where $n \in \mathbb{Z}_{\geq 2}$. Each entity can represent either an individual or a group of individuals, ranging from a small community, such as a neighborhood or social club, to a large population group, including a \textcolor{black}{county, state, or country.} 

Let $G=(V,E,B)\in\mathcal{G}_n$ denote an epidemic spreading network that models an epidemic spreading process over these $n$ connected entities. 
Let $V$ and $E$ denote the entities and the transmission channels between them, respectively.
We use $s(t), x(t), r(t)\in[0,1]^n$ to represent the susceptible, infected, and recovered state vectors, respectively. 
That is, for all~$i \in \underline{n}$, the value of~$s_i(t)\in [0, 1]$ is the susceptible portion of the population of  the~$i^{th}$ entity at time~$t$, 
the value of~$x_i(t) \in [0, 1]$ is the size of the infected proportion of the population of entity~$i$ at time~$t$, and the value of~$r_i(t) \in [0, 1]$ is the size of the recovered proportion of the population of entity~$i$ at time~$t$. 
We use $B:=[\beta_{ij}]\in \mathbb{R}^{n\times n}_{\geq 0}$, with $\beta_{ij}\in[0,1]$ for all~$i, j \in \underline{n}$, to denote the transmission matrix and $\Gamma=\text{diag}([\gamma_1,\gamma_2,\dots,\gamma_n])\in\mathbb{R}^{n\times n}$, with $\gamma_i\in (0,1]$ for all~$i \in \underline{n}$, to denote the recovery matrix.
Further, we use $B$ as the adjacency matrix of the spreading graph $G$.
Thus, $\beta_{ij}$ captures the transmission process from the $j^{th}$ entity to the $i^{th}$ entity, while $\gamma_i$ captures the recovery rate of entity $i$, for all $i,j\in\underline{n}$. 

The  network $SIS$ and  $SIR$ \textcolor{black}{dynamics are}
\begin{equation}
    \textnormal{SIS: }
    \begin{cases}\label{eq:SIS}
       \dot{s}(t) &= -\diag(s(t))Bx(t) + \Gamma x(t), \\
        \dot{x}(t)  &= \diag(s(t))Bx(t)-\Gamma x(t), 
    \end{cases}
\end{equation}
and 
\begin{equation}
\textnormal{SIR: }
    \begin{cases}\label{eq:SIR}
     \dot{s}(t)  &= -\diag(s(t))Bx(t), \\
        \dot{x}(t)  &= \diag(s(t))Bx(t) - \Gamma x(t),\\
        \dot{r}(t)  &= \Gamma x(t),
    \end{cases}
\end{equation}
respectively. For all $i\in \underline{n}$, we have that
$s_i(t), x_i(t),  r_i(t)\in[0,1]$, and $s_i(t)+x_i(t)+r_i(t)=1$~\cite{pare2020modeling}.

\begin{assumption}
\label{assum_graph_strongly_connected}
The graph~$G$ is strongly connected. 
\end{assumption}

Inspired by using the \textit{next generation matrix} to derive the basic reproduction number 
for network $SIS$ and $SIR$ spreading models~\cite{diekmann2010construction}, 
researchers have defined 
$W = \Gamma^{-1}B$ as  the next generation matrix 
to characterize the global behavior of network $SIS$ and $SIR$ models in~\eqref{eq:SIS} and~\eqref{eq:SIR}~\cite{mei2017epidemics_review,pare2020modeling_review,she2021peak}, respectively. 
These threshold conditions derived from
$W = \Gamma^{-1}B$
are defined in terms of \textit{the reproduction number of networks} (namely the network-level reproduction number).
\begin{definition}
(Network-Level Reproduction Number)
\label{Def:Repro}
Given an epidemic
spreading network $G=(V,E,B)\in\mathcal{G}_n$ and the next generation matrix $W = \Gamma^{-1}B$, for the network $SIS$ and $SIR$ models, the basic reproduction number is defined as $R^0= \rho (W)$ and the effective reproduction number is defined as $R^t = \rho (\diag(s(t)) W)$.\hfill $\lozenge$
\end{definition}

Definition~\ref{Def:Repro} implies that we can compute the network-level reproduction number when having access to the transmission matrix and the recovery matrix. 
However, the network-level reproduction number may fail to capture the spreading behavior of individual entities within a network. 
To illustrate this point, Figure~\ref{fig_inf} (Top) presents the infected proportion of each community in a network $SIR$ model over ten communities, with~$x_i$ showing the infected proportion
of community~$i$. 
The dashed line indicates the weighted sum of the infected cases
$w^{\top}_{t_p}x$, where $w_{t_p}\gg0$ is the normalized left eigenvector corresponding to the spectral abscissa of the matrix $\diag(s(t_p))B-\Gamma$ with $t_p$ being the peak infection time where $R^t=1$ at time $t_p$~\cite[Definition~2]{she2021peak}.

Figure~\ref{fig_inf} (Bottom) shows the corresponding network-level effective reproduction number, i.e., $R^t$. The effective reproduction number $R^t>1$ 
until roughly timestep~$25$. However, the infected proportions of most communities, including 
communities~$1$,~$2$, $3$, $4$, and $9$ have already significantly decreased by timestep~$25$. 
Therefore,
if we aim to analyze a single entity or a subnetwork of connected entities, 
then the network-level reproduction numbers $R^0$ and/or $R^t$ may fail to capture the spreading behavior
at that level of granularity,
since thresholds for each type of reproduction number 
only characterize overall network-level spreading, which can 
be quite different from local spreading. 

\begin{figure}
  \begin{center}
    \includegraphics[ trim = 0cm 0cm 0cm 0cm, clip, width=\columnwidth]{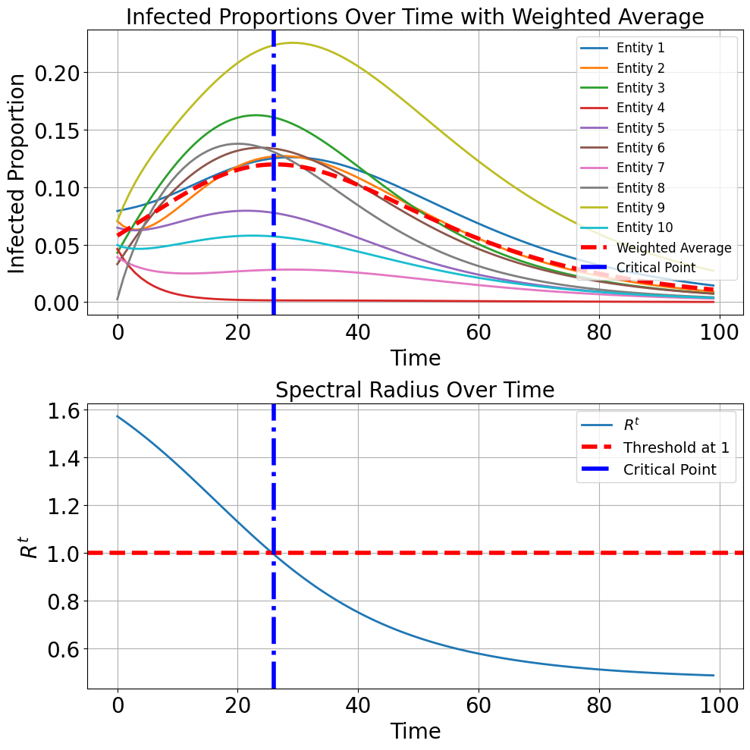}
  \end{center}
  \caption{\baike{(Top) Infected proportion of each node in a network $SIR$ model over ten entities. The red dashed line indicates the weighted sum of the infected cases
$w^{\top}_{t_p}x$, where $w_{t_p}\in\mathbb{R}^n_{>>0}$ is the normalized left eigenvector corresponding to the spectral abscissa of the matrix $\diag(s(t_p))B-\Gamma$ with $t_p$ being the peak infection time where $R^t=1$ at time $t_p$~\cite[Definition~2]{she2021peak}. (Bottom) The corresponding network-level effective reproduction number $R^t$ of the spreading dynamics. The effective reproduction number $R^t>1$ 
until roughly timestep~$25$. However, the infected proportions of most entities, including entities~$1$, $2$, $3$, $4$, and $9$, 
have already significantly decreased
by timestep~$25$ in Figure~\ref{fig_inf}.}
}
  \label{fig_inf}
\end{figure}


\subsection{Problem Statements Part 1: Reproduction Numbers}

\textcolor{black}{Motivated by this discussion of the reproduction numbers
of networks, we formulate the following problems.}

\begin{problem}
\label{prob:1}
Consider an epidemic
spreading network \baike{over} $G=(V,E,B)\in\mathcal{G}_n$ as defined in~\eqref{eq:SIS} and~\eqref{eq:SIR}. For each edge $e\in E$ (including self loops), develop a notion of \emph{\textcolor{black}{local} distributed reproduction number} that characterizes the spreading for the transmission interaction 
between the two entities that are connected by the edge~$e$. 
\hfill $\lozenge$
\end{problem}

The goal of \textcolor{black}{using local}
distributed reproduction numbers is to provide a method to study spreading behaviors in a decentralized manner. 
%
\textcolor{black}{Simultaneously, as indicated in Definition~\ref{Def:Repro}, the network-level reproduction number provides a global understanding of disease spread across the entire network, summarizing the spread of an epidemic in a single scalar value. Next, we seek to demonstrate that the local distributed reproduction numbers can also provide insights into the overall network spreading process.
}


\begin{problem} \label{prob:2}
Given an epidemic
spreading network \baike{over} $G=(V,E,B)\in\mathcal{G}_n$ as defined in~\eqref{eq:SIS} and~\eqref{eq:SIR} and the corresponding distributed reproduction numbers for all edges in $E$, \baike{study the 
spreading behavior of the entire network through the local distributed reproduction numbers.}\hfill $\lozenge$
\end{problem}



\textcolor{black}{The \textcolor{black}{local} distributed reproduction number captures the spread between entities at the highest resolution within a network, while the network-level reproduction number describes the overall spread across the entire network, i.e., at the lowest resolution. In real-world scenarios, it is often necessary to have intermediate-level information to model and characterize the spread between grouped populations, \textcolor{black}{i.e., clusters}. 
For example, when an entity represents a household in a spreading network of a large city, obtaining intermediate-level transmission \baike{knowledge}
{\textemdash} 
such as between different regions based on administrative divisions
{\textemdash}
can be crucial for analyzing and informing policy decisions during outbreaks. This intermediate information is essential, as it provides insights into the spread at a scale that lies between individual entities and the entire \baike{population}.}

\textcolor{black}{Consider partitioning the $n$ entities in the spreading network~$G$ into $m$ grouped entities, defined as \textit{clusters}. 
For~${q\in\underline{m}}$, 
we define the $q^{th}$ cluster as the collection of 
entities whose indices are contained in the set
$\chi_q $ ($\chi_q \neq \varnothing$), where 
$|\chi_q|$ is the number of the entities in cluster $q$. We use 
$X=\{\chi_1, \dots, \chi_m\}$ to represent all of the clusters that comprise the network,
with 
\begin{equation} \label{eq:chidef}
\bigcup_{q \in \underline{m}} \chi_q = \underline{n} \quad
\textnormal{ and } \quad
\chi_{q_1} \cap \chi_{q_2} = \varnothing
\end{equation}
for all distinct sets~$\chi_{q_1}, \chi_{q_2} \in X$.
We are next interested in modeling cluster-level spreading using the local distributed reproduction numbers.}

\begin{problem} \label{prob:3}
Let an epidemic spreading network \baike{over} $G=(V,E,B) \in \mathcal{G}_n$ as defined 
in~\eqref{eq:SIS} and~\eqref{eq:SIR} be given, along with 
the corresponding \textcolor{black}{local} distributed reproduction numbers for all edges in $E$.
Let $m$ denote the number of clusters from~\eqref{eq:chidef}, 
each one of which contains one or more entities in $G$. Then, find the \textcolor{black}{cluster}
distributed reproduction numbers that capture the spreading behavior between these clusters in terms
of the \textcolor{black}{local} distributed reproduction numbers. \hfill $\lozenge$
\end{problem}

Problem~\ref{prob:3} formulates the scalable properties of distributed reproduction numbers: 
it should be possible to use 
the \textcolor{black}{local} distributed reproduction numbers at a finer level 
to compute distributed reproduction numbers at a coarser cluster level. 
Additionally, this relationship also enables researchers/analysts 
to analyze the spread between clusters by post-processing these \textcolor{black}{local} distributed reproduction numbers without needing access to the sensitive raw data that was used to compute them. 
In the following subsection, we explore how this property relates to privacy-preserving analyses.

\subsection{Problem Statements Part 2: Privacy}
\label{Sec_Diff_Pri_Prob_Form}

Thanks to the decentralized property of distributed reproduction numbers, we can analyze the spreading network at different scales. 
\baike{However, sharing the \textcolor{black}{distributed reproduction numbers} may raise privacy concerns. 
While it may seem surprising that scalar-valued queries (such as cluster distributed reproduction numbers in our case) can leak information about local data, this principle has been firmly established in the graph privacy literature~\cite{imola2021locally,Karwa2014Private,Day2016Publishing,ZHANG2021Differentially,chen2021edge}. It has also led to the development of privacy-preserving methods for computing a wide array of graph properties, including spectra of graph adjacency \baike{matrices~\cite{Sealfon2016shortest}, properties of 
graph Laplacians~\cite{chen2021edge}, counts of subgraphs~\cite{imola2021locally}, degree sequences~\cite{Day2016Publishing}, and others~\cite{Hay2009accurate,blocki2013differentially}.} 
Each of these works has anticipated a type of graph analysis and applied privacy to it, and we do the same in this work.}

\textcolor{black}{Similar to the transmission rates between any pair of nodes in the network, the local distributed reproduction number also reveals the frequency of interactions between nodes in a network. For example, to construct a disease contact network at a French primary school~\cite{stehle2011high}, based on the accumulated contact time between any pair of students on campus during one day, all the students' guardians had to sign a privacy release statement. This example highlights that sharing the frequency of interactions (e.g., modeled by transmission rates and/or local distributed reproduction numbers) may violate the privacy of entities in the network. Therefore,
to mitigate the privacy risks associated with the distributed reproduction numbers, we apply differential privacy.} 

\baike{Consequently,} we introduce our privacy framework based on the communication network depicted in Figure~\ref{fig:network_structure}. This figure illustrates the reporting of local distributed reproduction numbers from entities to, for example, \textcolor{black}{public health officials, elected leaders, and other decision-makers}. 
\begin{definition}\label{Def_Authorities}
\textcolor{black}{ 
Consider a spreading network of $n\in\mathbb{N}_{>0}$ nodes that represents $n$ entities. For each entity $i\in\underline{n}$, we define its \textbf{local authority} as a local organization that has access to the \textcolor{black}{local} distributed reproduction numbers of entity $i$, as shown in Figure~\ref{fig:network_structure}. 
We then define the \textbf{central authority} as an organization \textcolor{black}{of the overall spreading network} to which policy-makers have access. 
\textcolor{black}{To retrieve the cluster distributed reproduction numbers between $m \leq n$ clusters, the local authority of each entity computes its local distributed reproduction numbers corresponding to other entities within the network. Then, the local authorities within the same cluster share their local distributed reproduction numbers with the unique \textbf{shuffler} of the cluster, as shown in Figure~\ref{fig:network_structure}.} After applying a shuffling mechanism to the local distributed reproduction numbers, the shuffler sends the output to the aggregator of the same cluster. That cluster aggregator then generates  the cluster distributed reproduction number. We define the combination of a shuffler and an aggregator of the same cluster as the \textbf{central aggregator} of the cluster.}
\textcolor{black}{At last, all $m$ central aggregators send their cluster distributed reproduction numbers to a data center, which subsequently shares these reproduction numbers with the central authority of the network for further analysis. }
\hfill $\lozenge$
\end{definition}

We note that local authorities and central aggregators are not controlled by the central authority. Therefore, the central authority has permission only to read the outputs from the central aggregators but does not have permission to inspect the privacy mechanisms implemented by them.
Further, to simplify the framework, we consider that the central aggregators are managed by their corresponding clusters in this work. However, the central aggregators can also be separated from their clusters, adding an additional layer of security.

\begin{figure}
    \centering \includegraphics[width=1\columnwidth]{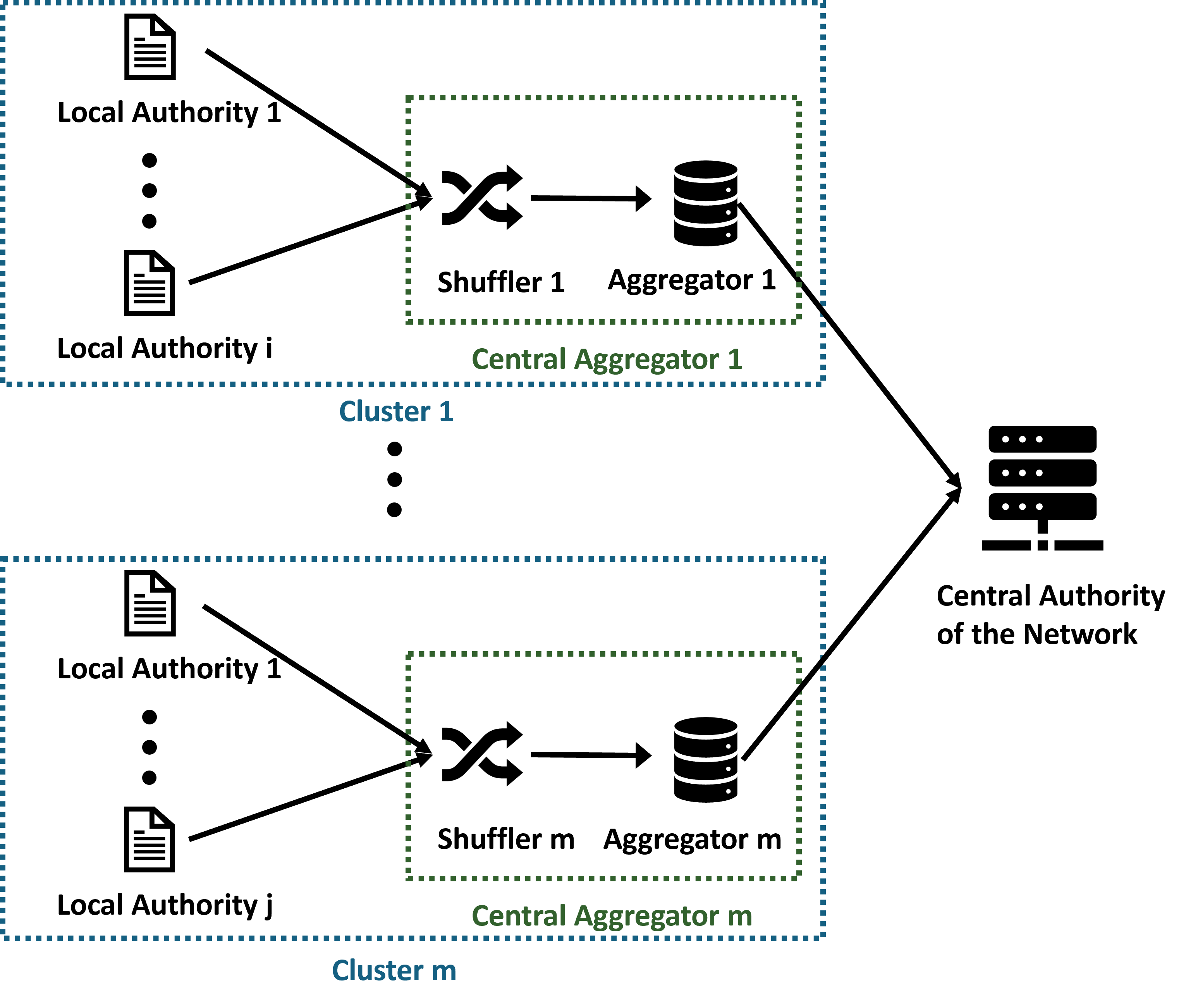}
    \caption{Structure of the network framework\baike{:} 
    \textcolor{black}{The local authorities of the same cluster are responsible for collecting and reporting their local distributed reproduction numbers to the shuffler and then to the aggregator of the same cluster.
    The shuffler is responsible for anonymizing the local distributed reproduction numbers and randomly shuffling them to eliminate any usefulness in their order. The aggregator decodes and groups the local distributed reproduction numbers into cluster distributed reproduction numbers. The central authority collects the outputs of all central aggregators for analysis and is implemented independently of the central aggregators.}
    }
\label{fig:network_structure}
\end{figure}

Figure~\ref{fig:network_structure} shows  the \textcolor{black}{network communication framework} defined in Definition~\ref{Def_Authorities}. This communication framework isolates the central authority from the entity-level (\textcolor{black}{local}) data: the central authority only 
has access to \textcolor{black}{cluster} distributed reproduction numbers after 
they have been aggregated and privatized (which we describe in detail below). 


For the communication network in \textcolor{black}{Figure~\ref{fig:network_structure}}, we implement two forms of differential privacy \textcolor{black}{to process the local distributed reproduction numbers}. The communication network in Figure~\ref{fig:network_structure}, together with our implemented differential privacy mechanism, comprises our privacy framework for distributed reproduction numbers. 
One form of the differential privacy is the local model~\cite{Kasiviswanathan2008what}, where differential privacy is implemented by a local authority for \textcolor{black}{the local distributed reproduction numbers,} and such setups are often referred to as using a ``\textit{local randomizer.}'' 
The local randomizer is a minimal trust model because it does not require trusting any external entity with sensitive
information. 
The other form of differential privacy model is the central model, which is implemented at the central aggregator to further amplify the differential privacy guarantee of the local model against any curious or malicious individuals at the central aggregator.
We elaborate on both models below.

\subsubsection{Local Randomizer}
The local randomizer adds uncertainty (\textcolor{black}{calibrated} random noise in our case) to the data in order to privatize
it before sharing.
Intuitively, the local randomizer must produce outputs that are approximately indistinguishable from each other~\cite{dwork2014algorithmic}
when applied to adjacent
input vectors. 

\begin{definition}[Adjacency] \label{def:adjacency}
    Fix a vector $\vect={[\zeta_i]}_{i\in\setRegionNum}\in\domain^{\regionNum}$ where $\domain^\regionNum\subset\nonnegativeRealSpace^\regionNum$ is its domain. Then another vector 
    $\vect'={[\zeta_i']}_{i\in\setRegionNum}\in\domain^\regionNum$ is \emph{adjacent} to $\vect$, denoted $\vect\adjacent \vect'$, if 
        ${||\vect-\vect'||}_{2} = \sqrt{\sum_{i=1}^{\regionNum}(\zeta_{i}-\zeta'_{i})^2} \leq \adjacency$,
    where $\adjacency>0$ is a user-specified parameter. \hfill $\lozenge$
\end{definition}
Definition~\ref{def:adjacency} states that two vectors are adjacent 
if they have the same dimension 
and the $\ell_2$-distance between them is bounded by~$\adjacency$.
The goal of differential privacy is to render all such adjacent pairs
of vectors approximately indistinguishable, which is enforced
by a local randomizer according to the following definition. 

\begin{definition}[Local Randomizer \cite{Kasiviswanathan2008what}]\label{def:local_randomizer}
    Let $\localRandPrivacyLevel > 0$ be given and fix a probability space $(\Omega, \mathcal{F}, \mathbb{P})$. 
    Then, given a domain $\mathbb D^\regionNum\subseteq\nonnegativeRealSpace^\regionNum$, a local randomizer $\localRandomizer: \Omega \times \mathbb D^\regionNum \rightarrow \mathbb D^\regionNum$ is $\localRandPrivacyLevel$-differentially private if,
    for all adjacent vectors $\vect$ and $\vect'$ in $\mathbb D^\regionNum$, it satisfies $\mathbb{P}\big[\localRandomizer(\vect) \in S\big] \leq e^{\localRandPrivacyLevel} \cdot \mathbb{P}\big[\localRandomizer\left(\vect'\right) \in S\big]$
    for all sets $S$ in the Borel $\sigma$-algebra over $\mathbb D^\regionNum$. \hfill $\lozenge$
\end{definition}

Intuitively, a local randomizer guarantees that given an output~$\privateVect$, a malicious individual cannot reliably tell which candidate vector (i.e., $\vect$ or $\vect'$) generated $\privateVect$. Therefore, the information in the vector~$\vect$ that we want to protect
is concealed, \textcolor{black}{in the sense that it is approximately indistinguishable from 
any other adjacent vector~$\zeta'$.}
The privacy parameter~$\epsilon_0$ 
controls the strength of privacy, and a smaller $\epsilon_0$ implies stronger privacy. 
Typical values of $\epsilon_0$ range from~$0.01$ to~$10$~\cite{hsu2014differential}. 

\textcolor{black}{We consider that each local authority of an entity, 
defined in Definition~\ref{Def_Authorities}, implements a local randomizer on its local distributed reproduction numbers before sharing them with the central authority at the cluster level. Generally, each local randomizer is allowed to choose its own privacy level, i.e., different values of~$\localRandPrivacyLevel$.  However, for simplicity, 
we consider each local authority using the same privacy parameter, $\localRandPrivacyLevel$, for its local randomizer. Based on this setting, we have the following goal.}

\begin{problem}
\label{prob:local_randomizer_design}
\textcolor{black}{Develop a local randomizer that ensures differential privacy for local distributed reproduction numbers shared by local authorities with other parties.} \hfill $\lozenge$
\end{problem}

\subsubsection{Central Differential Privacy}
\textcolor{black}{According to the communication network in Figure~\ref{fig:network_structure}, the local authority uses its local randomizer to generate and share privatized local distributed reproduction numbers with the cluster-level central aggregators, which operate independently of the central authority. Consequently, we introduce the central model of differential privacy.}
The central model of differential privacy can be applied to a centrally-held dataset for which privacy is needed, e.g.,
to conceal \textcolor{black}{the identities of all private local distributed reproduction numbers of the cluster,} from curious or malicious individuals at the central authority. \textcolor{black}{In order to explain the central model, we first introduce neighboring databases.}

\begin{definition}[Neighboring Databases] \label{def:neighboring}
    Let a database $\database=[\vect_i]_{i\in\setLocAuthNum}$ denote a set of vectors received from local authorities. 
    Then two databases $\database$ and $\database'$ are \emph{neighboring} if they differ on a single record $\vect_i,\vect_i'$.\hfill $\lozenge$
\end{definition}

We point out that the notion of ``adjacency'' in Definition~\ref{def:adjacency} applies to two vectors
that differ by one entry, while the notion of ``neighboring'' in Definition~\ref{def:neighboring} applies
to two collections of vectors that differ in one of the vectors that they contain. \textcolor{black}{Next, we use neighboring databases to further introduce central differential privacy.}

\begin{definition}[Central Differential Privacy~\cite{dwork2014algorithmic}]\label{def:central_differential_privacy}
    Let $\epsilon > 0$ and $\delta\in(0,1)$ be given and fix a probability space $(\Omega, \mathcal{F}, \mathbb{P})$. Then given a domain $\mathbb D^{n\cdot m}\subseteq\nonnegativeRealSpace^{n\cdot m}$, an algorithm $\centralDP:\mathbb D^{n\cdot m}\times\Omega\rightarrow\mathbb D^n$ is $(\epsilon,\delta)$-differentially private if, for all neighboring databases $\database$ and $\database'$,
    it satisfies $\mathbb{P}\big[\centralDP(\database) \in S\big] \leq e^{\epsilon} \cdot \mathbb{P}\big[\centralDP\left(\database'\right) \in S\big]+\delta$, for all sets $S$ in the Borel $\sigma$-algebra over $\mathbb D^n$. \hfill $\lozenge$
\end{definition}

Central differential privacy ensures that outputs from neighboring databases, which differ by only a single vector from any local authority, remain statistically similar. This \textcolor{black}{property} makes it difficult to infer high-confidence information about individual vectors, \textcolor{black}{such as the privatized local distributed reproduction numbers that they contain}. Furthermore, we aim to establish an amplified differential privacy guarantee, i.e., that~$(\epsilon, \delta)$-differential privacy holds with some  $\epsilon < \epsilon_0$, by leveraging the central differential privacy model.

\begin{problem}
\label{prob:central_aggregator_design}
    \textcolor{black}{Implement a central differential privacy mechanism at each central aggregator in a 
    way that strengthens the privacy of the privatized cluster distributed reproduction numbers 
    before they are shared with the central authority. 
    }
    \hfill $\lozenge$
\end{problem}

\textcolor{black}{Differential privacy mechanisms add calibrated noise to data to obscure its true values, with higher privacy levels resulting in larger variance of noise. However, high-variance noise can 
produce semantically invalid data, such as negative reproduction numbers. 
Thus, our ultimate goal is to investigate the trade-off between privacy and accuracy.}

\begin{problem}
\label{prob:differential_privacy_accuracy_analysis}
    Quantify the accuracy of private  distributed reproduction numbers as a function of their privacy level. Demonstrate that, despite privacy protections, these reproduction numbers still provide valuable insights for \textcolor{black}{analyzing an} epidemic spreading process.\hfill $\lozenge$
\end{problem}

We address Problems~\ref{prob:1},~\ref{prob:2}, and~\ref{prob:3} in Section~\ref{Sec_The_Distributed_Reproduction_Number}, where we further discuss the benefits of using distributed reproduction numbers at different scales. In Section~\ref{Privacy_Mechanism_For_The_Distributed_Reproduction_Numbers}, we introduce our privacy framework by incorporating distributed reproduction numbers at different scales to address Problems~\ref{prob:local_randomizer_design}, \ref{prob:central_aggregator_design}, and~\ref{prob:differential_privacy_accuracy_analysis}.

\subsection{Probability Background} \label{Sec_Pro_BG}
\begin{definition}[Truncated Gaussian random variable\cite{burkardt2014truncated}]
    The \emph{truncated Gaussian} random variable, written as $\text{TrunG}(\mu,\sigma,l,u)$, that lies within the interval $(l,u]$, where $-\infty< l < u< +\infty$, and centers on $\mu\in(l,u]$ is defined by the probability density function $p_{TG}$ with
    \begin{equation}
        p_{TG}(x) = \begin{cases}
        \frac{1}{\sigma} \frac{\varphi\left(\frac{x-\mu}{\sigma}\right)}{\Phi\left(\frac{u-\mu}{\sigma}\right)-\Phi\left(\frac{l-\mu}{\sigma}\right)} & \text{if } x\in (l,u] \\
        0 & \text{otherwise}
    \end{cases}
    \end{equation}
    and $\sigma>0$, 
where~$\varphi(x) = \frac{1}{\sqrt{2\pi}}\exp\left(-\frac{1}{2}x^2\right)$ is the probability density
of the standard normal distribution and
$\Phi(x) = \frac{1}{2}\left(1+\frac{2}{\sqrt{\pi}}\int_0^{\frac{x}{\sqrt{2}}} \exp(-t^2)dt\right)$
is the cumulative distribution function of the standard normal distribution. 
\hfill $\lozenge$
\end{definition}

\section{Distributed Reproduction Numbers and Network Spreading Behavior}
\label{Sec_The_Distributed_Reproduction_Number}
In this section, we first define \textcolor{black}{local} distributed reproduction numbers for the networked $SIS$ and $SIR$ models to solve Problem~\ref{prob:1}. We then use \textcolor{black}{these reproduction numbers} to study the transient and steady-state behaviors of the spreading models, thus providing a solution to Problem~\ref{prob:2}. We also link the \textcolor{black}{local} distributed reproduction numbers to the network-level reproduction number in Definition~\ref{Def:Repro}. Furthermore, we show that the \textcolor{black}{local distributed reproduction numbers} can be aggregated \textcolor{black}{as cluster distributed reproduction numbers} at various scales to capture interactions between different clusters in the spreading network, which solves Problem~\ref{prob:3}.

\subsection{\textcolor{black}{Local} Distributed Reproduction Numbers}
\label{Sec_Definition of Distributed Reproduction Numbers}
One way to study epidemic spreading processes is to use the reproduction number to indicate the change of the infected population (e.g., increasing, decreasing \textcolor{black}{or} unchanging.). As indicated in~\cite{pare2020modeling,mei2017dynamics,she2021peak}, the network-level reproduction number can capture the overall spreading behavior. 
However, the spreading behavior within a single entity in the network will most likely not be captured by the network-level reproduction number. 
Instead, one could envision distributed reproduction numbers for the local spread that are (i) greater than one when the infected proportion in the local spread is increasing, (ii) less than one when the infected proportion is decreasing, and (iii) equal to one when the infected proportion remains unchanged. 
Building on this intuition, we introduce the following. 


\begin{definition}[\textcolor{black}{Local} Distributed Reproduction Numbers]
\label{Def:NRN}
Let Assumption~\ref{assum_graph_strongly_connected} hold.
Consider the network $SIS$ and $SIR$ models that capture the spread over $n$ entities as described in~\eqref{eq:SIS} and~\eqref{eq:SIR}, respectively.
\begin{itemize}
    \item For each $i \in \underline{n}$,
define $R^{0}_{ii} = \frac{\beta_{ii}}{\gamma_i}$ as the \textcolor{black}{local} endogenous
basic reproduction number (BRN) within entity $i$ itself, and
define $R^{0}_{ij} = \frac{\beta_{ij}}{\gamma_i}$  as the \textcolor{black}{local}
exogenous BRN
from entity $j$ to entity $i$ for each $j \in \underline{n}$. 
\item For each $i \in \underline{n}$, define $R^{t}_{ii} = \frac{s_i(t)\beta_{ii}}{\gamma_i}$ as the \textcolor{black}{local}
endogenous 
effective reproduction number (ERN) within entity $i$, and define $R^{t}_{ij} = \frac{s_i(t)\beta_{ij}}{\gamma_i}$ for each $j \in \underline{n}$ as the \textcolor{black}{local} exogenous pseudo-ERN
from entity $j$ 
to entity $i$. 
\item We define $I_{ij}(t) = \frac{x_j(t)}{x_i(t)}$ with $x_i(t), x_j(t)\in(0,1]$ as the infection ratio of the infected proportion of entity $j$ to 
the infected proportion of entity $i$. 
\item We define the \textcolor{black}{local} exogenous ERN
from entity $j$ 
to entity $i$ as $\bar{R}^t_{ij}=R^{t}_{ij}I_{ij}$. 
The \textcolor{black}{local} endogenous ERN
of entity $i$ is defined as $\bar{R}^t_{ii}=R^{t}_{ii}I_{ii} = R^{t}_{ii}$.
\end{itemize}
Together, these \textcolor{black}{local} endogenous and exogenous BRNs and ERNs 
are referred to as the  \textbf{\textcolor{black}{local} distributed basic reproduction numbers} and \textbf{\textcolor{black}{local} distributed effective reproduction numbers}, respectively.
We refer to all of them collectively as 
\textbf{the local distributed reproduction numbers}. 
\end{definition} 

\begin{figure}
    \centering
    \includegraphics[width=1\linewidth]{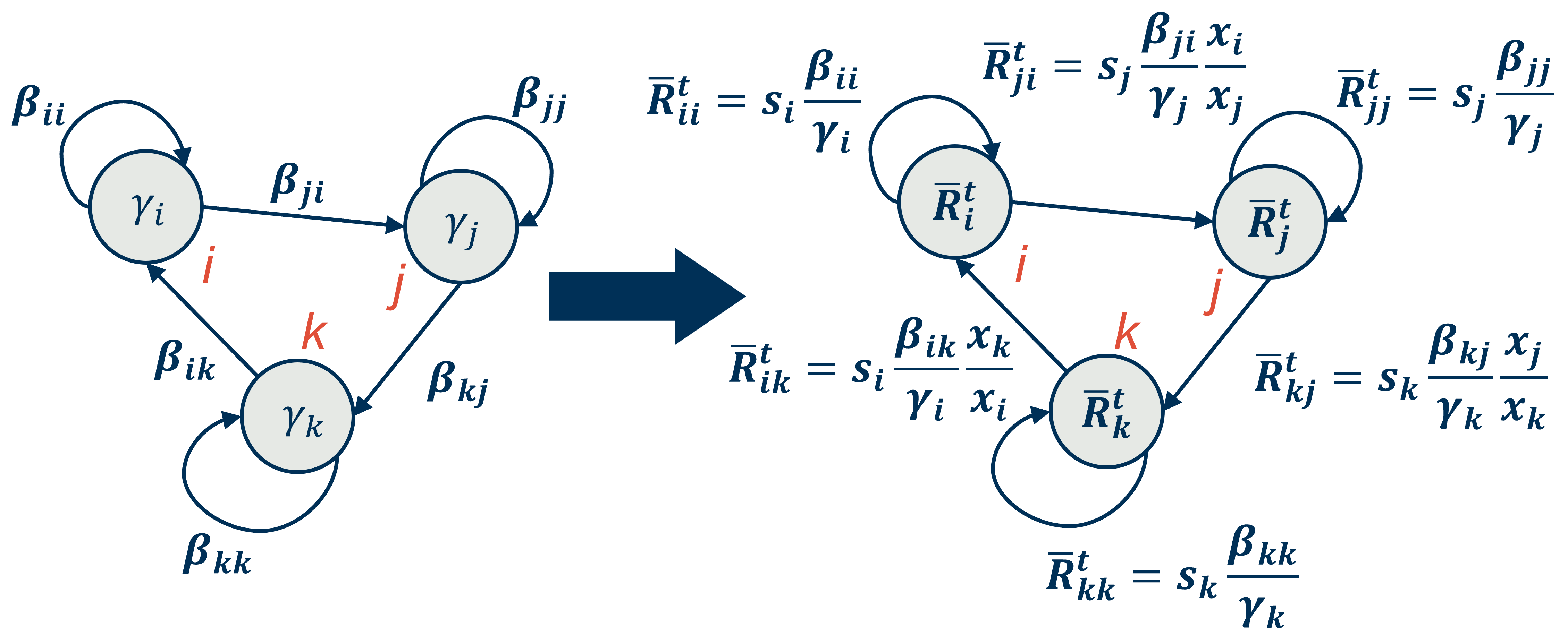}
    \caption{\baike{Local distributed ERNs. The network on the left depicts a spreading network with three nodes $i,j,k$, where the network is modeled by the transmission rates within and between the entities, and the recovery rate within entities, and the network on the right depicts how to model the system using local distributed ERNs. 
    }}
    \label{fig_Local_dis_network_R}
\end{figure}

In order to explain the intuition behind Definition~\ref{Def:NRN}, we consider the group compartmental $SIS$ and $SIR$ models without a network with $\beta$ and $\gamma$ being the transmission and recovery rates, respectively~\cite{mei2017epidemics_review}. The two 
models admit $\frac{\beta}{\gamma}$ and $\frac{s(t)\beta}{\gamma}$ as the basic and effective reproduction numbers, respectively.
Based on these terms, we then choose to use
$R^0_{ij}=\frac{\beta_{ij}}{\gamma_{i}}$ and $R^t_{ij}=\frac{s_i(t)\beta_{ij}}{\gamma_{i}}$ for the \textcolor{black}{local} exogenous basic and pseudo-effective reproduction numbers of the infected proportion $x_{ij}(t)$, where $x_{ij}(t)$ denotes the infected proportion of the $i^{th}$ entity that has been infected by the infected proportion of the $j^{th}$ entity for all $i,j\in\underline{n}$.
Therefore, we can define
\begin{align}
\label{Eq_x_ij}
\dot{x}_{ij}(t) = s_i(t)\beta_{ij}x_j(t)-\gamma_ix_{ij}(t).   
\end{align}
According to Definition~\ref{Def:NRN}, we have $s_i(t)\beta_{ij}x_j(t) - \gamma_i x_i(t) > 0$ if and only if $\bar{R}^t_{ij} > 1$, and  we have $s_i(t)\beta_{ij}x_j(t) - \gamma_i x_i(t) < 0$ if and only if $\bar{R}^t_{ij} < 1$.
Thus, $\bar{R}^t_{ij}$ characterizes the \textcolor{black}{local} interaction between the infection process of~$x_{ij}(t)$ within $x_i(t)$ and the recovery process within $x_i(t)$. Consequently, even if $\bar{R}^t_{ij} < 1$, it is still possible to have $\dot{x}_{ij}(t) \geq 0$ in~\eqref{Eq_x_ij}, as $\bar{R}^t_{ij}$ models the relationship between the infection process of $x_{ij}(t)$ and the recovery process of $x_i(t)$, rather than the infection and recovery processes of $x_{ij}(t)$ itself.

\begin{remark}
Definition~\ref{Def:NRN} proposes \textcolor{black}{local} distributed reproduction numbers by separating the infected cases generated in the $i^{th}$ entity in two ways: 
(i) the new cases that are generated through the infected cases within the entity itself, 
defined as endogenous infections, and (ii) the new cases that are generated through the infected cases from 
neighboring entities, defined as exogenous infections. 
Hence, we use two types of \textcolor{black}{local} reproduction numbers, namely the 
\textcolor{black}{local} endogenous
reproduction numbers ($R^0_{ii}$ and $R^t_{ii}$) and the 
\textcolor{black}{local} exogenous 
reproduction numbers ($R^0_{ij}$ and $\bar{R}^t_{ij}$) to capture the two types of \textcolor{black}{local} transmission processes. Furthermore, the \textcolor{black}{local} exogenous ERN
$\bar{R}^t_{ij}$ is defined with respect to the recovery process of the overall infection within the entity, represented by $\gamma_i x_i(t)$, rather than the recovery of infections generated by individual sources, represented by $\gamma_i x_{ij}(t)$. In addition, similar to the reproduction numbers of group compartmental models, we have $R^t_{ii}=s_i(t)R^0_{ii}$ within entity $i$ for all $i\in\underline{n}$. 
For the \textcolor{black}{local} exogenous pseudo-ERNs
from entity $j$ to entity $i$, 
we have that $R^t_{ij}=s_i(t)R^0_{ij}$ for all $i,j\in\underline{n}$ with $i\neq j$. 
\end{remark}

Definition~\ref{Def:NRN} proposes a way to explain spreading processes through the \textcolor{black}{local} endogenous and exogenous distributed reproduction numbers,~\baike{as illustrated in Figure~\ref{fig_Local_dis_network_R}}. For the purpose of characterizing the spreading process of individual entities \textcolor{black}{locally}, we further define the \textcolor{black}{local} basic reproduction number and \textcolor{black}{local} effective reproduction number of \textcolor{black}{an entity} in the network, through the \textcolor{black}{local} distributed reproduction numbers in Definition~\ref{Def:NRN}.
\begin{definition}[\textcolor{black}{Local} Reproduction Numbers]
\label{def:R_c}
For all ${i \in \underline{n}}$, let $R^0_{i}$ denote the \textbf{\textcolor{black}{local} basic reproduction number} \textcolor{black}{(LBRN)} of entity~$i$, and let $R^t_{i}$ denote the \textbf{\textcolor{black}{local} effective reproduction number} \textcolor{black}{(LERN)} of entity~$i$, where
\begin{align} 
   R^0_{i} &= \sum_{j=1}^{n} R^{0}_{ij}, \label{Net_R0} \\
   \bar{R}_i^t &= \sum_{j=1}^{n}\bar{R}_{ij}^t   =\sum_{j=1}^{n} R^t_{ij}I_{ij}. \label{Net_R}
\end{align}
\end{definition}

\begin{remark}
\label{Remark_reproduction_number}
The \textcolor{black}{local} reproduction numbers defined in~\eqref{Net_R0} and~\eqref{Net_R} 
unify the endogenous and  exogenous infections of entity $i$. Specifically, 
the \textcolor{black}{LBRN} and \textcolor{black}{LERN}
are built upon the \textcolor{black}{local} distributed reproduction numbers from Definition~\ref{Def:NRN}.  Equation~\eqref{Net_R0} indicates that 
\textcolor{black}{the LBRN}
of entity~$i$ within the network is the sum of the
\textcolor{black}{local} 
distributed
BRNs
of entity~$i$.  
Similarly,~\eqref{Net_R} indicates that 
\textcolor{black}{the LERN}
of entity $i$ within the network is the sum of the
\textcolor{black}{local} 
distributed
ERNs
of 
entity $i$.

Unlike the network-level ERN,
where $R^t$ is determined by the transmission rates $B$,  the recovery rates $\Gamma$, and the susceptible proportions $\diag (s(t))$, 
\textcolor{black}{the LERN} of
the  $i^{th}$ entity is determined by  
its \textcolor{black}{local} endogenous and exogenous ERNs, namely,  its 
distributed
ERNs.
Based on Definition~\ref{Def:NRN}, the \textcolor{black}{local} exogenous ERNs
are determined not only by the transmission rates, recovery rates, and susceptible proportions, but also by the scaling factor given by the infection ratio between entity $j$ and entity $i$, represented by $I_{ij}(t)$, for all $i, j \in \underline{n}$.
For instance, at time~$t$, if entity $i$ has a lower infected proportion than entity $j$ (i.e., $x_i(t)<x_j(t)$), 
then the \textcolor{black}{local} exogenous 
ERN
from entity $j$ to entity~$i$ will be scaled up by $I_{ij}(t)$. 
Hence, the \textcolor{black}{local} ERN
of entity $i$ ($\bar{R}^t_i$) can be high, even if its \textcolor{black}{local} endogenous 
ERN ($R^t_{ii}$) and the \textcolor{black}{local} exogenous pseudo-ERNs from entity $j$ to entity $i$ ($R^t_{ij}$) are low, since the weight $I_{ij}(t)$ that is also critical can be large. 
\end{remark}

\subsection{Properties of \textcolor{black}{Local} Reproduction Numbers}
\label{Sec_Properties of Distributed Reproduction Numbers}
Through the \textcolor{black}{local} distributed reproduction numbers introduced in Definition~\ref{Def:NRN}, we can compute the \textcolor{black}{local} reproduction numbers of an entity through the sum of its local distributed reproduction numbers, as shown in Definition~\ref{def:R_c}. 
Compared to the network-level ERN
($R^t$), the \textcolor{black}{local} 
ERN
of an entity ($\bar{R}^t_i$) can facilitate the study of the spreading behavior of entity~$i$. 
\begin{thm}
\label{thm:net_r}
Consider the network $SIS$ and  $SIR$ models in~\eqref{eq:SIS} and~\eqref{eq:SIR}, respectively.
When the infection in entity~$i$ is nonzero for all $i\in\underline{n}$, i.e., $x(t)\gg0$, 
\textcolor{black}{the LERN of entity~$i$ given by $\bar{R}^t_i$ satisfies the following properties:}
\begin{itemize}
    \item ${\bar{R}^t_i>1}$ if and only if the infected proportion $x_i$ increases;
    \item $\bar{R}^t_i<1$ if and only if $x_i(t)$ decreases;
    \item $\bar{R}^t_i = 1$ if and only if $x_i(t)$ remains unchanged.
\end{itemize}
\end{thm}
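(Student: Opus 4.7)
The plan is to derive a single scalar identity that links $\dot{x}_i(t)$ to $\bar{R}^t_i - 1$, from which all three equivalences follow immediately. First, I observe that both the networked $SIS$ model in~\eqref{eq:SIS} and the networked $SIR$ model in~\eqref{eq:SIR} share the \emph{identical} evolution equation for the infected state of entity $i$, namely
\begin{equation}
\dot{x}_i(t) = s_i(t)\sum_{j=1}^{n} \beta_{ij}\,x_j(t) - \gamma_i\,x_i(t),
\end{equation}
so the analysis is model-agnostic and it suffices to work with this single ODE for both cases.

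Next, I would expand $\bar{R}^t_i$ using Definition~\ref{Def:NRN} and Definition~\ref{def:R_c}. Since $I_{ii}(t) = 1$ and $\bar{R}^t_{ij} = R^t_{ij}I_{ij}(t) = \frac{s_i(t)\beta_{ij}}{\gamma_i}\cdot\frac{x_j(t)}{x_i(t)}$ for $j \neq i$, while $\bar{R}^t_{ii} = R^t_{ii} = \frac{s_i(t)\beta_{ii}}{\gamma_i}$, the common factor $\frac{1}{\gamma_i x_i(t)}$ pulls out of every summand, giving the closed form
\begin{equation}
\bar{R}^t_i = \sum_{j=1}^{n} \bar{R}^t_{ij} = \frac{s_i(t)}{\gamma_i\,x_i(t)}\sum_{j=1}^{n} \beta_{ij}\,x_j(t).
\end{equation}
This rewriting is the key observation; it is well-defined precisely because the hypothesis $x(t) \gg 0$ guarantees $x_i(t) > 0$, and $\gamma_i \in (0,1]$ guarantees $\gamma_i > 0$.

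Combining the two displays and dividing through by the positive quantity $\gamma_i x_i(t)$, I obtain the one-line identity
\begin{equation}
\frac{\dot{x}_i(t)}{\gamma_i\,x_i(t)} = \bar{R}^t_i - 1.
\end{equation}
Because $\gamma_i x_i(t) > 0$, the sign of $\dot{x}_i(t)$ coincides with the sign of $\bar{R}^t_i - 1$. Therefore $\bar{R}^t_i > 1$ is equivalent to $\dot{x}_i(t) > 0$ (i.e., $x_i$ increasing), $\bar{R}^t_i < 1$ is equivalent to $\dot{x}_i(t) < 0$ (i.e., $x_i$ decreasing), and $\bar{R}^t_i = 1$ is equivalent to $\dot{x}_i(t) = 0$ (i.e., $x_i$ unchanged), proving all three equivalences simultaneously.

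There is no substantive obstacle here: the proof is a one-line algebraic rearrangement once $\bar{R}^t_i$ is written in closed form. The only points requiring care are (i) verifying that the $\dot{x}_i$ equations in~\eqref{eq:SIS} and~\eqref{eq:SIR} are literally the same so that a unified argument handles both models, and (ii) justifying the division by $\gamma_i x_i(t)$, which is exactly the role played by the standing hypothesis $x(t) \gg 0$ and the assumption $\gamma_i > 0$.
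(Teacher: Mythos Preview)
Your proof is correct and follows essentially the same approach as the paper: both expand $\bar{R}^t_i$ via Definition~\ref{Def:NRN} into $\sum_{j=1}^{n}\frac{s_i(t)\beta_{ij}x_j(t)}{\gamma_i x_i(t)}$ and compare with the $\dot{x}_i$ equation, using $x_i(t)>0$ and $\gamma_i>0$ to justify the division. Your presentation is slightly more streamlined in that you package this as the single identity $\dot{x}_i(t)/(\gamma_i x_i(t)) = \bar{R}^t_i - 1$ and read off all three equivalences at once, whereas the paper proves one direction of one case and states that the rest are analogous, but the mathematical content is identical.
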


\begin{proof}
We show the first statement since the proofs of the other statements  follow the same procedure.

$\Leftarrow:$ Recall the definition of \textcolor{black}{the LERN} of entity $i$, namely $\bar{R}_i^t = \sum_{j=1}^{n} R^t_{ij}I_{ij}(t)$, for all $i \in \underline{n}$. 
Hence, $\bar{R}_i^t>1$ gives $\bar{R}_i^t = \sum_{j=1}^{n} R^t_{ij}I_{ij}(t)>1$. Then, through Definition~\ref{Def:NRN}, it is true that $\sum_{j=1}^{n}\frac{s_i(t)\beta_{ij}x_j(t)}{x_i(t)\gamma_i}>1$, which leads to $\frac{dx_i(t)}{dt}=\sum_{j=1}^{n} s_i(t)\beta_{ij}x_j(t)-\gamma_ix_i(t)>0$, for all $i\in \underline{n}$. Hence,~$x_i(t)$ is increasing. 

$\Rightarrow:$ If the infected proportion of entity $i$ increases,  then $\frac{dx_i(t)}{dt}=\sum_{j=1}^{n} s_i\beta_{ij}x_j(t)-\gamma_ix_i(t)>0$, 
and re-arranging terms immediately gives $\bar{R}_i^t=\sum_{j=1}^{n}\frac{s_i(t)\beta_{ij}x_j(t)}{x_i(t)\gamma_i}>1$, since we have $x_i(t)>0$.
\end{proof}

Theorem~\ref{thm:net_r} demonstrates that LERNs exhibit threshold behavior at a value of one, allowing us to use 
\textcolor{black}{them}
to capture \textcolor{black}{individual entities'}  spreading behaviors. 
In addition, the network-level
ERN of the network $SIR$ dynamics,
namely $R^t$, is monotonically non-increasing as a function of~$t$, 
since for all~$i \in \underline{n}$, the value of $s_i(t)$ is monotonically non-increasing \cite{mei2017epidemics_review}. However, 
for all~$i \in \underline{n}$ the value of 
$\bar{R}_i^t$ can be non-monotonic.
\begin{lemma}
\label{lem:non}
Consider the network $SIS$ and  $SIR$ models in~\eqref{eq:SIS} and~\eqref{eq:SIR}, respectively.
For all~$i \in \underline{n}$, 
\textcolor{black}{the LERN} of the~$i^{th}$ entity \textcolor{black}{given by $\bar{R}_i^t$}  can be non-monotonic.
If $x_{j}(t)$  decreases no slower than~$x_i(t)$, and $s_i(t)$ is monotonically decreasing for all $t\in[t_1,t_2]$, then 
\textcolor{black}{the LERN} of the $i^{th}$ entity $\bar{R}_i^t$ decreases monotonically with respect to $t$, for all $t\in[t_1,t_2]$.
\end{lemma}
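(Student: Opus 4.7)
The plan is to split the proof of Lemma~\ref{lem:non} into two logically independent parts: (i) show by example that $\bar{R}_i^t$ need not be monotonic, and (ii) establish that under the stated hypotheses $\bar{R}_i^t$ is non-increasing on $[t_1,t_2]$. The common starting point for both is the closed-form expression obtained from Definition~\ref{Def:NRN}:
\[
\bar{R}_i^t \;=\; \sum_{j=1}^{n} R^{t}_{ij} I_{ij}(t) \;=\; \frac{s_i(t)}{\gamma_i\, x_i(t)} \sum_{j=1}^{n} \beta_{ij}\, x_j(t),
\]
which makes explicit that $\bar{R}_i^t$ is driven not only by the susceptible fraction $s_i(t)$ but also by the infection ratios $I_{ij}(t) = x_j(t)/x_i(t)$.

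For part (i), the strategy is to observe that the ratios $x_j(t)/x_i(t)$ generally evolve non-monotonically because different entities reach their infection peaks at different times, as is already visible in the simulation displayed in Figure~\ref{fig_inf}. I would make this formal by exhibiting a small $SIR$ instance (for example, two or three entities with asymmetric transmission and recovery rates) whose simulated trajectories yield a $\bar{R}_i^t$ that first increases and then decreases, so that the ``can be non-monotonic'' claim is justified by direct construction rather than a general structural argument.

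For part (ii), the approach is direct term-by-term differentiation. Using the logarithmic derivative and positivity of $s_i, x_i, x_j$ on $[t_1,t_2]$ gives
\[
\frac{d}{dt}\!\left[\frac{s_i(t)\, x_j(t)}{x_i(t)}\right] \;=\; \frac{s_i(t)\, x_j(t)}{x_i(t)} \left[\frac{\dot{s}_i(t)}{s_i(t)} + \frac{\dot{x}_j(t)}{x_j(t)} - \frac{\dot{x}_i(t)}{x_i(t)}\right].
\]
The hypothesis that $s_i(t)$ is monotonically decreasing yields $\dot{s}_i(t)/s_i(t) \leq 0$, and interpreting ``$x_j(t)$ decreases no slower than $x_i(t)$'' in the relative-rate sense $\dot{x}_j(t)/x_j(t) \leq \dot{x}_i(t)/x_i(t)$ yields $\dot{x}_j/x_j - \dot{x}_i/x_i \leq 0$. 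The bracket is therefore non-positive, and the positive prefactor preserves the sign, so each cross term of $\bar{R}_i^t$ has non-positive time derivative. The self term $\bar{R}^t_{ii} = s_i(t)\beta_{ii}/\gamma_i$ is non-increasing directly from $\dot{s}_i \leq 0$. Summing over $j$ with the non-negative coefficients $\beta_{ij}/\gamma_i$ then gives $\dot{\bar{R}}_i^t \leq 0$ on $[t_1,t_2]$, which is the stated claim.

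The main obstacle is pinning down the intended quantitative meaning of ``decreases no slower than.'' Read as the absolute comparison $\dot{x}_j \leq \dot{x}_i \leq 0$, the ratio $x_j/x_i$ can still increase (for instance when $x_j \gg x_i$), so the lemma does not go through. The logarithmic reading $\dot{x}_j/x_j \leq \dot{x}_i/x_i$, which is exactly the condition that $I_{ij}(t)$ is non-increasing, is the natural and minimal strengthening that makes the summand-wise sign argument work; I would adopt it explicitly at the start of part (ii) and note that the alternative absolute reading requires an auxiliary bound such as $x_j(t) \leq x_i(t)$ to recover the same conclusion.
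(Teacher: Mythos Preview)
Your proposal is correct and follows essentially the same route as the paper: both arguments decompose $\bar{R}_i^t = \sum_j R^t_{ij} I_{ij}(t)$, observe that the factors $R^t_{ij} = s_i(t)\beta_{ij}/\gamma_i$ and $I_{ij}(t) = x_j(t)/x_i(t)$ need not be monotone in general but are each non-increasing under the stated hypotheses, and conclude termwise. The paper handles part~(i) only qualitatively (pointing out that $s_i$ is non-monotone in the $SIS$ case and that $I_{ij}$ can be non-monotone in either model) rather than by an explicit construction, and it silently adopts the relative-rate reading of ``decreases no slower than'' by asserting directly that the hypothesis makes $I_{ij}(t)$ non-increasing; your explicit identification of this interpretive issue is a useful sharpening of the paper's argument.
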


\begin{proof}
The LERN $\bar{R}_i^t$ of entity $i$ is a weighted sum of 
its local distributed ERNs
$R^t_{ij}$ with the weights $I_{ij}$. Based on Definition~\ref{Def:NRN}, $R^t_{ij}=\frac{s_i(t)\beta_{ij}}{\gamma_i}$ is monotonically decreasing for network $SIR$ dynamics, since $s_i(t)$ is monotonically non-increasing. However, $R^t_{ij}=\frac{s_i(t)\beta_{ij}}{\gamma_i}$ is non-monotonic 
for the network $SIS$ dynamics due to the fact that $s_i(t)$ is non-monotonic. 
Further, the weights $I_{ij}(t)$ are determined by the ratio between the infected proportions of entities $j$ and $i$ 
for all $i,j\in \underline{n}$. The weights $I_{ij}(t)$ can be non-monotonic, 
and therefore,
for all~$i \in \underline{n}$, 
the LERN~$\bar{R}_i^t$ 
can be non-monotonic for both the network $SIS$ and $SIR$ models.

Further, consider the $i^{th}$ entity for $i \in \underline{n}$. Under the condition that $x_{j}(t)$ for all $j \in \underline{n}$ and $j \neq i$ decreases no slower than $x_i(t)$, for all $t\in[t_1,t_2]$, we have that $I_{ij}(t) = \frac{x_{j}(t)}{x_i(t)}$ is a monotonically non-increasing function of~$t$ for all $i \in \underline{n}$. In addition, we have that $s_i(t)$ is 
monotonically non-increasing, for all $t\in[t_1,t_2]$.
Thus, the LERN $\bar{R}_i^t$ is also a monotonically decreasing function of~$t$, for all $t\in[t_1,t_2]$.
\end{proof}

Theorem~\ref{thm:net_r} and Lemma~\ref{lem:non} demonstrate that we can leverage \textcolor{black}{LERNs}
to capture spreading behaviors at the entity level in the network. Hence, we have answered Problem~\ref{prob:1}. In order to answer Problem~\ref{prob:2}, we connect 
\textcolor{black}{LERNs}
to the network-level reproduction numbers, 
namely $R^0$ and $R^t$ in Definition~\ref{Def:Repro}. First we define the \textcolor{black}{local} distributed reproduction number matrices.
\begin{definition}[\textcolor{black}{Local} Distributed Reproduction Number Matrices]
\label{def:MNR}
The \textcolor{black}{local} distributed basic reproduction number matrix is
\begin{equation}
\!\!\!\!\mathcal{R}^0 = 
\begin{bmatrix}
         R^0_{11} & R^0_{12} & \cdots & R^0_{1n}\\
         R^0_{21} & R^0_{22} & \cdots & R^0_{2n}\\ 
         \vdots & \vdots & \ddots & \vdots\\ 
         R^0_{n1} & R^0_{n2} & \cdots & R^0_{nn} 
\end{bmatrix}, \label{eq:R_0},
\end{equation}
the \textcolor{black}{local} distributed pseudo-effective reproduction number matrix is
\begin{equation}
\!\!\!\!\mathcal{R}^t = \diag([s_1,\dots,s_n])\mathcal{R}^0 \!=\!
\begin{bmatrix}
         R^t_{11} & R^t_{12} & \cdots & R^t_{1n}\\
         R^t_{21} & R^t_{22} & \cdots & R^t_{2n}\\ 
         \vdots & \vdots & \ddots & \vdots\\ 
         R^t_{n1} & R^t_{n2} & \cdots & R^t_{nn} 
     \end{bmatrix}\!\!, \label{eq:R_t}
\end{equation}
and the \textcolor{black}{local} distributed effective reproduction number matrix is defined as 
\begin{align}
\!\!\!\!\bar{\mathcal{R}}^t &= \diag([\frac{1}{x_1},\dots,\frac{1}{x_n}])  \mathcal{R}^t \diag([x_1,\dots,x_n])\\
   &=\begin{bmatrix}
         \bar{R}^t_{11} & \bar{R}^t_{12} & \cdots & \bar{R}^t_{1n}\\
         \bar{R}^t_{21} & \bar{R}^t_{22} & \cdots & \bar{R}^t_{2n}\\ 
         \vdots & \vdots & \ddots & \vdots\\ 
         \bar{R}^t_{n1} & \bar{R}^t_{n2} & \cdots & \bar{R}^t_{nn} 
     \end{bmatrix}. \label{eq:bar_R_t}  
\end{align}
\end{definition}
\begin{remark}
\label{Remark_distributed_R_matrix}
The \textcolor{black}{local} distributed 
BRN
matrix
$\mathcal{R}^0=\Gamma^{-1}B$ is the \textit{next generation matrix \cite{diekmann2010construction}} of the network $SIS$/$SIR$ models. Thus, the \textcolor{black}{local} distributed pseudo-ERN
matrix $\mathcal{R}^t$ is equal to~$\diag(s(t))\mathcal{R}^0$. 
The advantage of viewing $\mathcal{R}^0$ and $\mathcal{R}^t$ as the composition of \textcolor{black}{local distributed reproduction numbers} defined in~Definition~\ref{Def:NRN}, is that we can construct these matrices through the 
\textcolor{black}{local} 
distributed reproduction numbers.
\textcolor{black}{We illustrate the benefits of this formulation in Section~\ref{sec_Simulation}. In addition, this construction lays the foundation for the proposed differential privacy framework introduced in Section~\ref{Privacy_Mechanism_For_The_Distributed_Reproduction_Numbers}.}
\end{remark}

\begin{remark}
\label{Remark_local_distr_Rt}
\textcolor{black}{We explain the structure of  $\bar{\mathcal{R}}^t$ in a detailed manner. 
The off-diagonal entries of the $i^{th}$ row of $\bar{\mathcal{R}}^t$ are the \textcolor{black}{local} exogenous ERNs of the $i^{th}$ entity,  as defined by $\bar{R}^t_{ij}$ in~\eqref{Net_R}, while 
the diagonal entry of the $i^{th}$ row of $\bar{\mathcal{R}}^t$ is the \textcolor{black}{local} endogenous ERN of the $i^{th}$ entity,  as defined by $\bar{R}^t_{ii}$ in~\eqref{Net_R}. 
\textcolor{black}{Therefore, the $i^{th}$ row of $\bar{\mathcal{R}}^t$, denoted as $\bar{\mathcal{R}}_{i,:}^t$, comprises the local distributed ERNs of entity~$i$, as defined in Definition~\ref{Def:NRN}. We further name $\bar{\mathcal{R}}_{i,:}^t$ as the \textit{\textbf{local distributed effective reproduction number vector}} of the $i^{th}$ entity, for all $i\in\underline{n}$.}
Consequently, the \textcolor{black}{local authority} of entity~$i$ can leverage its own infection data, such as primary infected cases that cause secondary infections (via contact tracing) and the current susceptible proportion within the entity, 
to compute its local distributed ERN vector, $\bar{\mathcal{R}}^t_{i,:}$. 
According to the communication framework introduced in Figure~\ref{fig:network_structure}, the local authorities within the same cluster can share their local distributed ERN vectors 
with the central aggregator of the cluster,
enabling a distributed approach to constructing $\bar{\mathcal{R}}^t$. We further introduce the usage of the local distributed ERN vectors in Section~\ref{Privacy_Mechanism_For_The_Distributed_Reproduction_Numbers}.} 
\end{remark}

Based on Definition~\ref{Def:Repro}, it can be observed that $\mathcal{R}^0=\Gamma^{-1}B$ and $\mathcal{R}^t=\diag(s)\Gamma^{-1}B$. Hence, the spectral radius of the \textcolor{black}{local} distributed BRN matrix,
denoted by $\rho(\mathcal{R}^0)$, is the network-level BRN, i.e., $\rho(\mathcal{R}^0)=R^0$. Meanwhile, the spectral radius of the \textcolor{black}{local} distributed pseudo-ERN matrix, denoted $\rho(\mathcal{R}^t)$, is the ERN of the network, i.e., $\rho(\mathcal{R}^t)=R^t$. Consequently, for all~$i \in \underline{n}$,
the $i^{th}$ row sum of the \textcolor{black}{local} distributed BRN matrix is the \textcolor{black}{LBRN} of the $i^{th}$ entity, 
i.e., we have $\sum_{j=1}^n [\mathcal{R}^0]_{ij}=R^0_i$. 
The  $i^{th}$ row sum of the \textcolor{black}{local} distributed ERN matrix, $\bar{\mathcal{R}}^t$, is equal to $R_i^t$, i.e., \textcolor{black}{the LERN} of the $i^{th}$ entity. 
Note that the $i^{th}$ row of the \textcolor{black}{local} distributed ERN matrix is also the local distributed ERN vector the of $i^{th}$ entity. 
Through studying the spreading behavior of the network, we connect the network-level effective reproduction number to the \textcolor{black}{local effective reproduction numbers} of the entities in the network, which are defined in Definition~\ref{def:R_c}.

\begin{thm}
\label{thm:connection}
For all $i\in \underline{n}$, if $x(t)\gg 0$, then the following statements hold:
\begin{itemize}
    \item $\bar{R}^t_i=1$ for all $i\in \underline{n}$ only if $\rho(\mathcal{R}^t)=1$;
    \item $\bar{R}^t_i<1$ for all $i\in \underline{n}$ only if $\rho(\mathcal{R}^t)<1$;
    \item $\bar{R}^t_i>1$ for all $i\in \underline{n}$ only if $\rho(\mathcal{R}^t)>1$.
\end{itemize}
\end{thm}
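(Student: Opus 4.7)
The plan is to reduce the theorem to the classical row-sum bound on the spectral radius of a nonnegative matrix via a similarity transformation. First, I observe from Definition~\ref{def:MNR} that
\[
\bar{\mathcal{R}}^t \;=\; \diag(1/x_1,\dots,1/x_n)\,\mathcal{R}^t\,\diag(x_1,\dots,x_n),
\]
where the hypothesis $x(t)\gg 0$ guarantees that $\diag(x_1,\dots,x_n)$ is invertible. Similar matrices share the same spectrum, so $\rho(\bar{\mathcal{R}}^t)=\rho(\mathcal{R}^t)$, and it suffices to analyze $\rho(\bar{\mathcal{R}}^t)$.

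Next, I would use Definition~\ref{def:R_c} to identify the LERN $\bar{R}^t_i$ as exactly the $i$-th row sum of the nonnegative matrix $\bar{\mathcal{R}}^t$, i.e., $\bar{\mathcal{R}}^t \mathbf{1} = [\bar{R}^t_1,\dots,\bar{R}^t_n]^\top$. The standard row-sum bound for nonnegative matrices then gives
\[
\min_{i\in\underline{n}} \bar{R}^t_i \;\leq\; \rho(\bar{\mathcal{R}}^t) \;\leq\; \max_{i\in\underline{n}} \bar{R}^t_i,
\]
where the upper bound follows from $\rho(\bar{\mathcal{R}}^t)\leq \|\bar{\mathcal{R}}^t\|_\infty = \max_i \bar{R}^t_i$, and the lower bound follows by iterating the componentwise inequality $\bar{\mathcal{R}}^t \mathbf{1} \geq (\min_i \bar{R}^t_i)\,\mathbf{1}$ and applying Gelfand's formula $\rho(\bar{\mathcal{R}}^t)=\lim_k \|(\bar{\mathcal{R}}^t)^k\|_\infty^{1/k}$.

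From these sandwich bounds, the second and third statements follow immediately: if $\bar{R}^t_i<1$ for all $i$, then $\rho(\mathcal{R}^t)=\rho(\bar{\mathcal{R}}^t)\leq \max_i \bar{R}^t_i <1$, and symmetrically if $\bar{R}^t_i>1$ for all $i$, then $\rho(\mathcal{R}^t)\geq \min_i \bar{R}^t_i >1$. For the equality case, when $\bar{R}^t_i=1$ for all $i$, the identity $\bar{\mathcal{R}}^t \mathbf{1}=\mathbf{1}$ shows that $1$ is an eigenvalue of $\bar{\mathcal{R}}^t$, and the row-sum bound then squeezes $\rho(\bar{\mathcal{R}}^t)=1$.

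I do not anticipate any substantive obstacle, as the argument is essentially a textbook application of nonnegative-matrix spectral theory. The only point worth stating carefully is that the hypothesis $x(t)\gg 0$ is what legitimizes the similarity transformation linking $\mathcal{R}^t$ to $\bar{\mathcal{R}}^t$; without strict positivity of every $x_i(t)$, the diagonal scaling would be singular and the step $\rho(\bar{\mathcal{R}}^t)=\rho(\mathcal{R}^t)$ would fail.
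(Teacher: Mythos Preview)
Your proposal is correct, and its overall skeleton---pass to the similar matrix $\bar{\mathcal{R}}^t=\diag(x)^{-1}\mathcal{R}^t\diag(x)$ and then read off information from its row sums---is the same as the paper's. The difference is in how the inequality cases are closed. The paper argues by contradiction: assuming $\rho(\mathcal{R}^t)\geq 1$ while all row sums are strictly below $1$, it increases selected nonzero entries to make the matrix row stochastic and then invokes the strict monotonicity of the Perron root for irreducible nonnegative matrices (using Assumption~\ref{assum_graph_strongly_connected}) to force a contradiction. You instead apply the row-sum sandwich $\min_i \bar{R}^t_i \leq \rho(\bar{\mathcal{R}}^t) \leq \max_i \bar{R}^t_i$ directly via $\|\cdot\|_\infty$ and Gelfand's formula. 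Your route is a bit more elementary and, notably, does not need irreducibility at all, so it proves the statement without appealing to Assumption~\ref{assum_graph_strongly_connected}; the paper's proof of the second statement genuinely uses that assumption to get strict monotonicity of the spectral radius. The equality case is handled identically in both: row stochasticity of $\bar{\mathcal{R}}^t$ pins $\rho$ at $1$.
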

\begin{proof}
We start by proving the first statement. If ${\bar{R}_i^t=1}$ and $x_i(t)>0$ for all $i\in \underline{n}$, 
then we have that the matrix $\diag(x(t))^{-1}\mathcal{R}^t\diag(x(t))$ is a row stochastic matrix;
this can be seen by noting that the $i^{th}$ row sum is $\bar{R}_i^t = \sum_{j=1}^{n} R^t_{ij}I_{ij}(t) = 1$. 
Hence, based on the fact that the spectral radius of a row stochastic matrix is one, we have
\begin{equation*}
    \rho\Big(\diag(x(t))^{-1}\mathcal{R}^t\diag(x(t))\Big)=1.
\end{equation*}
Further, it is true that $\rho\left(\diag(x(t))^{-1}\mathcal{R}^t\diag(x(t))\right)=\rho(\mathcal{R}^t)=1$,
since similarity transformations preserve eigenvalues, i.e.,
$\diag(x(t))^{-1}\mathcal{R}^t\diag(x(t))$ and $\mathcal{R}^t$ have the same spectrum.
Therefore, if $\bar{R}_i^t=1$ for all $i\in \underline{n}$, 
then we must have $\rho(\mathcal{R}^t) = 1$.

Next we show the second statement. If $\bar{R}^t_i  = \sum_{j=1}^{n} R^t_{ij}I_{ij} <1$ for all $i\in \underline{n}$, then,
using the fact that~$\bar{R}_i^t$ is equal
to the~$i^{th}$ row sum of~$\diag(x(t))^{-1}\mathcal{R}^t\diag(x(t))$, we see that 
we must have $[(\diag(x(t))^{-1}\mathcal{R}^t\diag(x(t))]_{ij}\in[0,1)$ for all $i,j\in \underline{n}$. 
Now suppose for the sake of contradiction 
that ${\rho\left(\diag(x(t))^{-1}\mathcal{R}^t\diag(x(t))\right)=\rho(\mathcal{R}^t)\geq 1}$. 
Then, by increasing some non-zero entries of the matrix  $\diag(x(t))^{-1}\mathcal{R}^t\diag(x(t))$ through changing $\mathcal{R}^t$, we can construct a new matrix $\diag(x(t))^{-1}\mathcal{\tilde{R}}^t\diag(x(t))$ such that $\diag(x(t))^{-1}\mathcal{\tilde{R}}^t\diag(x(t))$ is a stochastic matrix,
i.e., its row sums equal~$1$. 
Thus, $\rho \left(\diag(x(t))^{-1}\mathcal{\tilde{R}}^t\diag(x(t))\right)=1$. 

According to Assumption~\ref{assum_graph_strongly_connected}, the transmission matrix $B$ is an irreducible matrix since  the transmission network is strongly connected.
Further, the model parameters $\beta_{ij}$ and $\gamma_{i}$ for all $i,j\in\underline{n}$
along with the infected state $x_i(t)$ for all $i\in\underline{n}$ are positive. Thus, the matrices $\diag(x(t))^{-1}\mathcal{R}^t\diag(x(t))$ and $\diag(x(t))^{-1}\mathcal{\tilde{R}}^t\diag(x(t))$ are nonnegative and irreducible. 

Based on \cite[ Thm. 2.7 and Lemma 2.4]{varga2009matrix_book},
the spectral radius of a non-negative irreducible matrix will increase when any entry of the matrix increases, which gives 
\begin{multline} 
  \rho \left(\diag(x(t))^{-1}\mathcal{R}^t\diag(x(t))\right) < \\ \rho \left(\diag(x(t))^{-1}\mathcal{\tilde{R}}^t\diag(x(t))\right) = 1.
\end{multline}
This result 
contradicts the aforementioned hypothesis that $\rho\left(\diag(x(t))^{-1}\mathcal{R}^t\diag(x(t))\right)=\rho(\mathcal{R}^t)\geq1$. Therefore, we must have $\rho\left(\diag(x(t))^{-1}\mathcal{R}^t\diag(x(t))\right)=\rho(\mathcal{R}^t)<1$.

We can use the same techniques to show the third statement. Hence, we complete the proof.
\end{proof}
\begin{remark}
\textcolor{black}{Theorem~\ref{thm:connection} bridges the gap between the network-level ERN $R^t$ and the LERNs $\bar{R}^t_i$. Definition~\ref{def:R_c} and Remark~\ref{Remark_local_distr_Rt} indicate  that a local authority can compute its own LERN by aggregating its local distributed ERN vector. 
Theorem~\ref{thm:connection} ensures that the central authority of the overall network can use the LERNs provided by the local authorities to assess the network's overall spreading behavior. This approach eliminates the need for the central authority to gather detailed network information, such as the transmission matrix $B$, the recovery matrix $\Gamma$, and the susceptible proportions $\diag(s(t))$ across the network.} 
\end{remark}

\begin{coro}
\label{Coro_ditri_effe_repro_matrix}
\textcolor{black}{If $x(t)\gg 0$, then the following statements hold:
\begin{itemize}
    \item $\bar{R}^t_i=1$ for all $i\in \underline{n}$ only if $\rho(\bar{\mathcal{R}^t})=1$;
    \item $\bar{R}^t_i<1$ for all $i\in \underline{n}$ only if $\rho(\bar{\mathcal{R}^t})<1$;
    \item $\bar{R}^t_i>1$ for all $i\in \underline{n}$ only if $\rho(\bar{\mathcal{R}^t})>1$.
\end{itemize}
}
\end{coro}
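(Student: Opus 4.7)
The plan is to reduce this corollary to Theorem~\ref{thm:connection} by a straightforward similarity argument. The key observation comes from Definition~\ref{def:MNR}: by construction, $\bar{\mathcal{R}}^t = \diag(x(t))^{-1} \mathcal{R}^t \diag(x(t))$. Under the hypothesis $x(t) \gg 0$, the matrix $\diag(x(t))$ is invertible, so this is a legitimate similarity transformation. Since similar matrices share the same spectrum, we immediately obtain $\rho(\bar{\mathcal{R}}^t) = \rho(\mathcal{R}^t)$.

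Given this identification of spectral radii, each of the three implications in the corollary follows directly from the corresponding implication in Theorem~\ref{thm:connection}. Specifically, if $\bar{R}^t_i = 1$ for all $i \in \underline{n}$, then Theorem~\ref{thm:connection} yields $\rho(\mathcal{R}^t) = 1$, and the similarity equality then gives $\rho(\bar{\mathcal{R}}^t) = 1$. I would then repeat the same one-line reasoning for the strict inequality cases ($< 1$ and $> 1$).

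Actually, I would note that the proof of Theorem~\ref{thm:connection} already works explicitly with the matrix $\diag(x(t))^{-1} \mathcal{R}^t \diag(x(t))$, so in a sense the corollary is simply a renaming: observing that the quantity whose spectral radius was computed in that proof is precisely $\bar{\mathcal{R}}^t$ by Definition~\ref{def:MNR}. This makes the corollary a one-sentence consequence, and the ``main obstacle'' is essentially cosmetic: ensuring the reader recognizes the similarity structure encoded in Definition~\ref{def:MNR} and invoking the invertibility of $\diag(x(t))$, which is why the hypothesis $x(t) \gg 0$ is retained. No new estimates, Perron--Frobenius arguments, or stochastic-matrix constructions are needed beyond those already carried out in the proof of Theorem~\ref{thm:connection}.
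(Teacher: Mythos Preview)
Your proposal is correct and matches the paper's approach exactly: the paper also omits a full proof, noting only that the similarity $\bar{\mathcal{R}}^t = \diag(x(t))^{-1}\mathcal{R}^t\diag(x(t))$ (valid since $x(t)\gg 0$) gives $\rho(\bar{\mathcal{R}}^t)=\rho(\mathcal{R}^t)$, after which Theorem~\ref{thm:connection} applies verbatim.
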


Corollary~\ref{Coro_ditri_effe_repro_matrix} leverages the condition in the proof of Theorem~\ref{thm:connection} where the spectrum of the \textcolor{black}{local} distributed pseudo-ERN matrix $\rho(\mathcal{R}^t)$ is the same as that of the \textcolor{black}{local} ERN matrix $\rho(\bar{\mathcal{R}^t})$, under the condition that $x(t)\gg 0$. Therefore, we omit the proof. In addition, we can use $\rho(\bar{\mathcal{R}^t})$ to characterize the spreading behavior of the network $SIS$ and $SIR$ models, as defined in~\eqref{eq:SIS} and~\eqref{eq:SIR}, respectively. Hence, Theorem~\ref{thm:connection} establishes the conditions under which we can switch from $\rho(\mathcal{R}^t)$ to $\rho(\bar{\mathcal{R}}^t)$ to analyze the spreading network.

\begin{remark}
\label{Remark_ditribute_effe_Rt_matrix}
 As discussed in Remark~\ref{Remark_distributed_R_matrix}, the advantage of leveraging $\bar{\mathcal{R}^t}$ instead of  $\mathcal{R}^t$ is that the $i^{th}$ row of  $\bar{\mathcal{R}^t}$ is given by the \textcolor{black}{local} 
 distributed ERN vector
 of the $i^{th}$ entity in the network, where $i\in\underline{n}$. In contrast, the $i^{th}$ row of  $\mathcal{R}^t$ is given by the \textcolor{black}{local} 
 distributed
 pseudo-ERNs. 
Definition~\ref{Def:NRN} states that the threshold of $\bar{R}^t_{ij}$, rather than $R^t_{ij}$, at the value of one captures the infection dynamics from entity $j$ to entity $i$ for $i,j\in\underline{n}$.
Additionally, according to Corollary~\ref{Coro_ditri_effe_repro_matrix},
 $\bar{\mathcal{R}^t}$ not only captures the network-level reproduction numbers in the same way as $\mathcal{R}^t$, but it also provides the \textcolor{black}{local distributed ERN vectors of the entities. 
 Therefore, constructing the \textcolor{black}{local} distributed ERN matrix
$\bar{\mathcal{R}}^t$ in~\eqref{eq:bar_R_t} provides more valuable information for analysis and is more practical for real-world implementation (illustrated in~Sections~\ref{Privacy_Mechanism_For_The_Distributed_Reproduction_Numbers} and~\ref{sec_Simulation}). 
\baike{We \baike{leverage} this connection to effective reproduction numbers to design the privacy framework for local authorities
sharing their local distributed ERNs with higher authorities in Section~\ref{Privacy_Mechanism_For_The_Distributed_Reproduction_Numbers}.}}
\end{remark}


\textcolor{black}{Theorem~\ref{thm:connection} and Corollary~\ref{Coro_ditri_effe_repro_matrix} address Problem~\ref{prob:2},} demonstrating that \textcolor{black}{local} distributed effective reproduction numbers not only aid in analyzing the behavior of individual entities but also offer insights into the overall spreading dynamics of the network, through their connection to both the \textcolor{black}{local} distributed pseudo-effective reproduction number matrix $\mathcal{R}^t$, the local distributed effective reproduction number matrix $\bar{\mathcal{R}}^t$, and the local  effective reproduction number vector $\bar{\mathcal{R}}^t_{i,:}$, for all $i\in\underline{n}$. 

\subsection{\textcolor{black}{Cluster} Effective Distributed Reproduction Numbers}
\textcolor{black}{In real-world epidemic spreading processes, data can be collected and shared at various scales by different authorities. The local distributed ERN matrix in~\eqref{eq:bar_R_t} that is composed of the local distributed ERN vectors
provides a mechanism for local authorities to share their distributed ERN vectors with the higher authorities of the network. However, it is reasonable for local authorities to first share their information with an intermediate higher authority for better data management and aggregation.
For example, it can be overwhelming for the central authority of the network, e.g., at the country-level, to process all the information at the highest resolution, i.e., directly from local authorities at the county- or community-level. This intermediate-level procedure not only aids in managing data but also isolates the local distributed ERNs from the central authority, providing an additional layer of privacy. Note that, according to the discussion on the distributed ERNs in  Remark~\ref{Remark_ditribute_effe_Rt_matrix}, starting from this section, we focus on the effective reproduction number. We do not consider the basic or pseudo-effective reproduction numbers.}

\textcolor{black}{
Recall from Figure~\ref{fig:network_structure}, 
we consider the central aggregators of the clusters as defined in Definition~\ref{Def_Authorities}, where each central aggregator aggregates only the local distributed ERN vectors of the entities within its respective cluster.
We demonstrate that, by using the local ERNs and local distributed ERNs defined in Section~\ref{Sec_Definition of Distributed Reproduction Numbers}, we can define the ERN and distributed ERN at the cluster level, namely, the cluster effective reproduction number and the cluster distributed effective reproduction number.
Thus, we provide a method for the central aggregators of the clusters to generate their cluster distributed ERNs using their own local distributed  ERNs. Furthermore, we show that the cluster ERN and the cluster distributed ERN can model and describe the spreading behavior among clusters, thereby addressing Problem~\ref{prob:3}.}

\baike{Consider partitioning the node set $V$ of a spreading graph $G$ with $n$ nodes into $m$ clusters. Let  $\pi: V\rightarrow \ X_{\pi}=\{\chi_1,\dots,\chi_m\}$ denote a map that partitions the node set $V$ into a set of distinct nonempty clusters~$\chi_q$, where $p,q\in\underline{m}$, such that $\bigcup_{q=1}^m\chi_q=V$ and $\chi_p\cap\chi_q = \varnothing$ for all $p\neq q$.}


\begin{definition}[\textcolor{black}{Cluster} Effective Reproduction Number~\textcolor{black}{(CERN)}]
\label{Def:Generalized_Rt}
  Consider the cluster $\chi_q \in X_\pi=\{\chi_1,\dots,\chi_m\}$ and $\chi_q \neq \varnothing$ for each $q\in\underline{m}$. If~$x(t)\gg0$,
    then the \textit{\textcolor{black}{cluster} effective reproduction number \textcolor{black}{(CERN)} of  $\chi_q$} 
    is given by 
$\bar{R}_{\chi_q}^t=\frac{\sum_{i\in\chi_q} \gamma_ix_i(t) \bar{R}^t_{i}}{\sum_{i\in\chi_q}\gamma_ix_i(t)}$.
\end{definition}
\begin{remark}
Similar to 
\textcolor{black}{the LERN}
of individual entities in~Definition~\ref{def:R_c}, \textcolor{black}{the CERN}
of the cluster~$\chi_q$ \textcolor{black}{given by $\bar{R}^t_{\chi_q}$} captures the aggregated spreading behavior of the group of entities that comprise~$\chi_q$.
According to Definition~\ref{Def:Generalized_Rt}, 
\textcolor{black}{the CERN}
of  the cluster $\chi_q$ can be computed by aggregating the \textcolor{black}{LERNs}
of all entities within the cluster, i.e., by aggregating $\bar{R}_i^t$ for all $i \in \chi_q$ via its central aggregator (Figure~\ref{fig:network_structure}). This aggregation process also requires the infected proportions $x_i$ and recovery rates $\gamma_i$ of entity $i$, for all $i \in \chi_q$. Note that the infected proportion $x_i$ and the recovery rate $\gamma_i$ are typically not considered sensitive information in the context of disease spreading. Infection proportions, rather than the identities of individual cases, are often publicly reported during pandemics, and recovery rates can typically be estimated based on the average duration of infection observed in such outbreaks. We further discuss this point in Section~\ref{Privacy_Mechanism_For_The_Distributed_Reproduction_Numbers}. 
\end{remark}
\begin{thm}
\label{Thm:Gen_Reprod}
For all $q\in\underline{m}$ and $x(t)\gg 0$, 
\textcolor{black}{the CERN}
of the cluster~$\chi_q$, \textcolor{black}{given by $\bar{R}_{\chi_q}^t$}, satisfies the following properties:
\begin{itemize}
    \item 
    $\bar{R}_{\chi_q}^t=1$ if and only if $\sum_{i\in\chi_q}\dot x_i(t)=0$;
    \item 
    $\bar{R}_{\chi_q}^t>1$ if and only if $\sum_{i\in\chi_q}\dot x_i(t)>0$;
    \item 
    $\bar{R}_{\chi_q}^t<1$ if and only if $\sum_{i\in\chi_q}\dot x_i(t)<0$.
\end{itemize}
\end{thm}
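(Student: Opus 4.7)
The plan is to prove Theorem~\ref{Thm:Gen_Reprod} by establishing an algebraic identity that directly relates the quantity $\sum_{i\in\chi_q}\dot{x}_i(t)$ to the scalar $\bar{R}_{\chi_q}^t - 1$, multiplied by a strictly positive factor. Once this identity is in place, all three biconditionals follow simultaneously by sign analysis, so I only need to prove one identity rather than six implications.

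First, I would recall from Definition~\ref{Def:NRN} that $R_{ij}^t = s_i(t)\beta_{ij}/\gamma_i$ and $I_{ij}(t) = x_j(t)/x_i(t)$, so that from Definition~\ref{def:R_c} the LERN satisfies
\begin{equation}
\gamma_i x_i(t) \bar{R}_i^t \;=\; \gamma_i x_i(t) \sum_{j=1}^{n} \frac{s_i(t)\beta_{ij}}{\gamma_i}\cdot\frac{x_j(t)}{x_i(t)} \;=\; \sum_{j=1}^{n} s_i(t)\beta_{ij} x_j(t).
\end{equation}
The right-hand side is precisely the infection term appearing in the $\dot{x}_i(t)$ equation of both~\eqref{eq:SIS} and~\eqref{eq:SIR}. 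Substituting back into the dynamics gives the key per-entity identity
\begin{equation}
\dot{x}_i(t) \;=\; \gamma_i x_i(t)\bigl(\bar{R}_i^t - 1\bigr),
\end{equation}
which is the cluster-level analog of the threshold characterization already proved in Theorem~\ref{thm:net_r}.

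Next, I would sum this identity over $i \in \chi_q$ and factor out the aggregate weight to obtain
\begin{equation}
\sum_{i\in\chi_q} \dot{x}_i(t) \;=\; \sum_{i\in\chi_q} \gamma_i x_i(t)\bar{R}_i^t - \sum_{i\in\chi_q}\gamma_i x_i(t) \;=\; \Bigl(\sum_{i\in\chi_q}\gamma_i x_i(t)\Bigr)\bigl(\bar{R}_{\chi_q}^t - 1\bigr),
\end{equation}
where the last equality uses the definition $\bar{R}_{\chi_q}^t = \bigl(\sum_{i\in\chi_q}\gamma_i x_i(t)\bar{R}_i^t\bigr) / \bigl(\sum_{i\in\chi_q}\gamma_i x_i(t)\bigr)$ from Definition~\ref{Def:Generalized_Rt}. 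Because $x(t)\gg 0$ and $\gamma_i \in (0,1]$ for each $i\in\chi_q$, the prefactor $\sum_{i\in\chi_q}\gamma_i x_i(t)$ is strictly positive, so the sign of $\sum_{i\in\chi_q}\dot{x}_i(t)$ equals the sign of $\bar{R}_{\chi_q}^t - 1$. Reading off the three cases (equal, greater, less) yields the three biconditionals.

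I expect no serious obstacle here: the proof is essentially a bookkeeping exercise once the per-entity identity $\dot{x}_i(t) = \gamma_i x_i(t)(\bar{R}_i^t - 1)$ is recognized. The only subtle points to flag are (i) ensuring the derivation is valid for both the $SIS$ and $SIR$ dynamics, which it is because both share the same $\dot{x}_i$ equation, and (ii) invoking the hypothesis $x(t)\gg 0$ to justify both the definition of $I_{ij}(t)$ (avoiding division by zero) and the strict positivity of the prefactor $\sum_{i\in\chi_q}\gamma_i x_i(t)$, which is what converts the algebraic identity into a genuine sign equivalence.
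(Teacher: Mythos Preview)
Your proposal is correct and the core idea matches the paper's: both arguments manipulate $\sum_{i\in\chi_q}\dot{x}_i(t)$ using the dynamics and the definitions of $\bar{R}_i^t$ and $\bar{R}_{\chi_q}^t$ to tie the sign of the aggregate derivative to whether $\bar{R}_{\chi_q}^t$ exceeds, equals, or falls below one.

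The route you take is somewhat cleaner than the paper's. You derive the single identity $\sum_{i\in\chi_q}\dot{x}_i(t) = \bigl(\sum_{i\in\chi_q}\gamma_i x_i(t)\bigr)\bigl(\bar{R}_{\chi_q}^t-1\bigr)$ up front, after which all six implications are immediate by sign inspection. The paper instead fixes one case (the equality, necessary direction), starts from $\sum_{i\in\chi_q}\dot{x}_i(t)=0$, and divides both sides by the full product $\prod_{k=1}^n \gamma_k x_k(t)$ before simplifying back to $\bar{R}_{\chi_q}^t=1$; it then appeals to reversing the steps and to analogy for the remaining five implications. Your detour through the per-entity identity $\dot{x}_i(t)=\gamma_i x_i(t)(\bar{R}_i^t-1)$, already implicit in Theorem~\ref{thm:net_r}, avoids the product division entirely and makes the dependence on the hypothesis $x(t)\gg 0$ more transparent. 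Substantively the two proofs are equivalent; yours is just a tighter packaging of the same computation.
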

\begin{proof}
We consider the first statement. We show the necessary condition first. Consider~$|\chi_q|=m\leq n$ clusters within the spreading network. 
According to~\eqref{eq:SIS},
the clustered spreading behavior of the total~$m$ entities is captured by 
\begin{align*}
    \sum_{i\in\chi_q} \dot x_i(t) = -\sum_{i\in\chi_q}\gamma_ix_i(t)+\sum_{i\in\chi_q}\sum_{j=1}^{n}s_i(t)\beta_{ij}x_j(t).
\end{align*}
If~$\sum_{i\in\chi_q}\dot x_i(t)=0$, then 
\begin{align*}
-\sum_{i\in\chi_q}\gamma_ix_i(t)+\sum_{i\in\chi_q}\sum_{j=1}^{n}s_i(t)\beta_{ij}x_j(t)=0,
\end{align*}
which indicates that
\begin{align*}
\frac{\sum_{i\in\chi_q}\gamma_ix_i(t)}{\prod_{k=1}^n\gamma_kx_k(t)}&=\frac{\sum_{i\in\chi_q}\sum_{j=1}^{n}s_i(t)\beta_{ij}x_j(t)}{\prod_{k=1}^n\gamma_kx_k(t)}\\
&=\sum_{i\in\chi_q}\frac{1}{\prod_{k=1,k\neq i}^n\gamma_kx_k(t)}\sum_{j=1}^{n}s_i(t)\frac{\beta_{ij}x_j(t)}{\gamma_ix_i(t)}\\
&= \sum_{i\in\chi_q}\frac{1}{\prod_{k=1,k\neq i}^n\gamma_kx_k(t)}\sum_{j=1}^{n}\bar{R}_{ij}^t\\
&= \sum_{i\in\chi_q}\frac{1}{\prod_{k=1,k\neq i}^n\gamma_kx_k(t)}\bar{R}_i^t.
\end{align*}
We use the fact that  
\textcolor{black}{the LERN}
of entity~$i$ is given by~${\sum_{j=1}^{n}s_i(t)\frac{\beta_{ij}x_j(t)}{\gamma_ix_i(t)}=\sum_{j=1}^{n}\bar{R}^t_{ij}=\bar{R}^t_{i}}$.
Further, by multiplying by~$\frac{\prod_{k=1}^n\gamma_kx_k(t)}{\sum_{i\in\chi_q}\gamma_ix_i(t)}$ on both sides of the equation, we obtain that 
\begin{align*}
1=\frac{\sum_{i\in\chi_q} \gamma_ix_i(t) \bar{R}^t_{i}}{\sum_{i\in\chi_q}\gamma_ix_i(t)}=\bar R^t_{\chi_q}.
\end{align*}
Therefore, we have shown that $\sum_{i\in\chi_q}\dot x_i(t)=0$ will result in ${\bar{R}^t_{\chi_q}=1}$. 
We can reverse the process to show that, when~$ {\bar R^t_{\chi}=1}$, the sum of the infected proportions within the clustered entities is zero, i.e.,~$\sum_{i\in\chi}\dot x_i(t)=0$. 
Additionally, we can show the second and third statements using the same approach, and we omit their proofs. 
\end{proof}

\begin{remark}
Theorem~\ref{thm:net_r}  indicates that
\textcolor{black}{the LERN}
of entity~$i$, denoted as~$\bar{R}^t_{i}$, effectively captures the spreading behavior of that entity. Similarly, 
\textcolor{black}{Definition~\ref{Def:Generalized_Rt} bridges the gap between \textcolor{black}{the CERN}
of the cluster~$\chi_q$, given by $\bar{R}^t_{\chi_q}$, 
where $q\in\underline{m}$, 
and 
\textcolor{black}{the LERNs}
of the individual entities that comprise the cluster, given by $\bar{R}^t_i$, for all $i\in\chi_q$.}
When local authorities compute the \textcolor{black}{local} distributed ERNs of individual entities, they can report these reproduction numbers, along with the infected proportions and recovery rates, to their corresponding central aggregators, \textcolor{black}{as illustrated in Figure~\ref{fig:network_structure}}. This \textcolor{black}{framework} allows those central aggregators of the clusters to further compute their \textcolor{black}{CERNs}.

When~$\chi=\underline{n}$, i.e., when all $n$ entities  are considered as one cluster, the 
\textcolor{black}{CERN} of
the~$n$ clustered entities, namely $\bar R^t_{\chi}$, is not equivalent to the overall network-level effective reproduction number~$\rho(\mathcal{R}^t)$. The network-level ERN captures the change in the weighted average of the infected proportions across all entities, whereas the CERN captures the change in the total sum of infected proportions. \textcolor{black}{When we have the LERNs from all local authorities of the entities across the spreading network, we can formulate a distributed framework to aggregate the LERNs to update
$\bar R^t_{\chi}$, in order to assess the overall spread of the network. 
However, if we are only interested in analyzing the overall spreading behavior, the network-level ERN~$\rho(\mathcal{R}^t)$ is sufficient to determine the network-level 
equilibrium properties~\cite{mei2017epidemics_review,pare2020modeling_review}.}
\end{remark}
Based on Definition~\ref{Def:Generalized_Rt} and the proof of Theorem~\ref{Thm:Gen_Reprod}, we provide a method for obtaining the 
\textcolor{black}{CERNs by aggregating the LERNs. The CERN is calculated as a weighted sum of the LERNs of its corresponding entities. We further develop the following result to demonstrate that we can compute the CERN at different scales.} 

\baike{Consider another mapping that partitions the node set $V$ of the spreading graph $G$ with $n$ into a set of distinct non-empty clusters, such that 
$\hat{\pi}:V\rightarrow \hat{X}_{\pi} =\{\hat{\chi}_1, \dots, \hat{\chi}_{\hat{m}}\}$, where $\bigcup_{q=1}^{\hat{m}}\hat{\chi}_q=V$ and  $\hat{\chi}_p\cap \hat{\chi}_q=\varnothing$, for all $p,q\in\underline{\hat{m}}$ and $p\neq q$. 
We further consider a mapping from
$X_{\pi}\rightarrow\hat{X}_\pi$ that maps each cluster in the partition $X_{\pi}=\{\chi_1,\dots,\chi_m\}$ 
to a cluster in $\hat{X}_{\pi}=\{\hat{\chi}_1,\dots,\hat{\chi}_{\hat{m}}\}$, 
where $2\leq\hat{m}<m$. 
Therefore, we can say that $X_{\pi}$ is a finer partition of $\hat{X}_{\pi}$ in $V$. 
We demonstrate that the CERNs of the clusters in the coarser partition $\hat{X}_{\pi}$ can be derived from the CERNs of the clusters in its finer partition $X_{\pi}$ given the mapping $X_{\pi}\rightarrow\hat{X}_\pi$. Without loss of generality, we specify one map in presenting the following result.}

\begin{coro}
\label{coro:reproduction_subpop}
Consider the cluster $\hat{\chi}_{o}\in\hat{X}_{\pi}$, $o\in\underline{\hat{m}}$, where $\hat{\chi}_{o}=\bigcup^{m'}_{q=1} \chi_q$, where $2\leq m'<m$. 
Under the condition that~$x(t)\gg0$,
    \textcolor{black}{the CERN} of $\hat{\chi}_{o}$  
    satisfies
\begin{align}
\label{Eq:Total_Clustered_Reproduction}
\bar{R}^t_{\hat{\chi}_o}=
\sum_{q=1}^{m'}\frac{\sum_{i\in\chi_q}\gamma_ix_i(t)\bar{R}^t_{\chi_q}}{\sum_{i\in \hat{\chi}_o}\gamma_ix_i(t)}.
    \end{align} 
\end{coro}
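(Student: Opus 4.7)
The plan is to prove the identity by a direct computation that chains the definition of the CERN applied at two different resolutions. Since $\hat{X}_\pi$ is itself a partition of $V$, the coarser cluster $\hat{\chi}_o$ satisfies the hypotheses of Definition~\ref{Def:Generalized_Rt}, so $\bar{R}^t_{\hat{\chi}_o}$ is already expressed as a weighted average of the LERNs $\bar{R}^t_i$ with weights $\gamma_i x_i(t)$ for $i \in \hat{\chi}_o$. The goal is then to re-express this average in terms of the CERNs $\bar{R}^t_{\chi_q}$ of the finer sub-clusters.

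First, I would write out $\bar{R}^t_{\hat{\chi}_o}$ using Definition~\ref{Def:Generalized_Rt}, obtaining a ratio whose numerator is $\sum_{i \in \hat{\chi}_o} \gamma_i x_i(t) \bar{R}^t_i$ and whose denominator is $\sum_{i \in \hat{\chi}_o} \gamma_i x_i(t)$. Next, since the finer partition $X_\pi$ refines $\hat{X}_\pi$ and the sub-clusters $\chi_1, \dots, \chi_{m'}$ mapped into $\hat{\chi}_o$ are pairwise disjoint with $\hat{\chi}_o = \bigcup_{q=1}^{m'} \chi_q$, I would split the numerator sum over the index $i \in \hat{\chi}_o$ as a double sum $\sum_{q=1}^{m'} \sum_{i \in \chi_q} \gamma_i x_i(t) \bar{R}^t_i$. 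The same disjointness justifies also splitting the denominator as $\sum_{q=1}^{m'} \sum_{i \in \chi_q} \gamma_i x_i(t)$ should that be needed, but keeping it in its compact form matches the right-hand side of~\eqref{Eq:Total_Clustered_Reproduction} directly.

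Third, I would invoke Definition~\ref{Def:Generalized_Rt} at the finer scale to recognize that $\sum_{i \in \chi_q} \gamma_i x_i(t) \bar{R}^t_i = \bar{R}^t_{\chi_q} \sum_{i \in \chi_q} \gamma_i x_i(t)$ for each $q \in \underline{m'}$. Substituting this inside the double sum gives exactly the right-hand side of~\eqref{Eq:Total_Clustered_Reproduction}, concluding the proof. The condition $x(t) \gg 0$ is used only to ensure that each $\bar{R}^t_{\chi_q}$ and $\bar{R}^t_{\hat{\chi}_o}$ is well-defined via Definition~\ref{Def:Generalized_Rt}, which already assumes strict positivity of the infected proportions.

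There is no real obstacle here; the statement is essentially a law-of-total-expectation style identity for weighted averages under a refinement of partitions. The only point requiring slight care is to cite the disjointness of the $\chi_q$ within $\hat{\chi}_o$ (inherited from the partition structure defined around~\eqref{eq:chidef} and from $X_\pi$ being a refinement of $\hat{X}_\pi$), which is what licenses splitting the sum over $i \in \hat{\chi}_o$ without double counting. Everything else is algebraic rearrangement, so the proof will be short.
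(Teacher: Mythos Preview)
Your proposal is correct and follows essentially the same approach as the paper: both arguments reduce the identity to the definition of the CERN (Definition~\ref{Def:Generalized_Rt}) applied at the two scales together with the disjoint decomposition $\hat{\chi}_o=\bigcup_{q=1}^{m'}\chi_q$. The only cosmetic difference is direction: the paper starts from the right-hand side of~\eqref{Eq:Total_Clustered_Reproduction}, substitutes the definition of each $\bar{R}^t_{\chi_q}$, cancels the inner factors $\sum_{i\in\chi_q}\gamma_i x_i(t)$, and collapses the double sum to recover Definition~\ref{Def:Generalized_Rt} for $\hat{\chi}_o$, whereas you run the same chain from left to right.
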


\begin{proof}
Based on~\eqref{Eq:Total_Clustered_Reproduction} and the proof of~Theorem~\ref{Thm:Gen_Reprod}, we have that
\begin{align*}
\bar{R}^t_{\hat{\chi}_o}&=    \sum_{q=1}^{\hat{m}}\frac{\sum_{i\in\chi_q}\gamma_ix_i(t)}{\sum_{i\in \hat{\chi}_o}\gamma_ix_i(t)} \frac{\sum_{i\in\chi_q} \gamma_ix_i(t) \bar{R}^t_{i}}{\sum_{i\in\chi_q}\gamma_ix_i(t)}\\
&=\sum_{q=1}^{\hat{m}}\frac{\sum_{i\in\chi_q} \gamma_ix_i(t) \bar{R}^t_{i}}{\sum_{i\in \hat{\chi}_o}\gamma_ix_i(t)} = \frac{\sum_{i\in \hat{\chi}_o} \gamma_ix_i(t) \bar{R}^t_{i}}{\sum_{i\in \hat{\chi}_o}\gamma_ix_i(t)}, 
\end{align*}
which is the definition of~$\bar{R}^t_{\hat{\chi}_o}$ given in~Definition~\ref{Def:Generalized_Rt}. 
\end{proof}
\textcolor{black}{
Corollary~\ref{coro:reproduction_subpop} provides a method for using the \textcolor{black}{CERNs} of \baike{the clusters of a finer partition~$X_{\pi}$} to calculate the \textcolor{black}{CERNs} of \baike{the clusters of the corresponding coarser partition~$\hat{X}_{\pi}$}. This calculation enables aggregation at lower cluster levels, allowing them to report their CERNs to higher cluster levels, forming a hierarchical structure.
Without loss of generality, we only consider one level of cluster aggregation in this work, performed by the central aggregator of the cluster (Figure~\ref{fig:network_structure}). However, Corollary~\ref{coro:reproduction_subpop} provides a foundation to generalize the framework in Figure~\ref{fig:network_structure} to include additional intermediate aggregators for clusters at different scales.}

Similarly to how we construct \textcolor{black}{local} distributed ERN matrices to describe connections between individual entities using local distributed ERNs, we define the cluster distributed effective reproduction numbers via the cluster distributed effective reproduction number matrix to capture interactions between clusters.
\baike{Again, we consider
$X_\pi=\{\chi_1, \dots, \chi_m\}$, where each cluster~$\chi_q$ with $q \in \underline{m}$ represents a group of $|\chi_q|$ entities, 
where $p,q\in\underline{m}$, such that $\bigcup_{q=1}^m\chi_q=V$ and $\chi_p\cap\chi_q = \varnothing$ for all $p\neq q$.}

\begin{definition}[\textcolor{black}{Cluster} Distributed Effective Reproduction Number Matrix]
\label{Def:Rep_Mat_Cluster}
 The \textcolor{black}{cluster}  distributed effective reproduction number matrix 
 is 
\begin{gather}
\label{eq_cluster_dis_matrix}
\mathcal{R}_X^t = 
 \begin{bmatrix}
    \bar{R}^t_{\chi_1,\chi_1} & \bar{R}^t_{\chi_1,\chi_2} & \cdots & \bar{R}^t_{\chi_1,\chi_m}\\
 \bar{R}^t_{\chi_2,\chi_1} & \bar{R}^t_{\chi_2,\chi_2} & \cdots & \bar{R}^t_{\chi_2,\chi_m}\\ 
         \vdots & \vdots & \ddots & \vdots\\ 
         \bar{R}^t_{\chi_m,\chi_1} & \bar{R}^t_{\chi_m,\chi_1} & \cdots & \bar{R}^t_{\chi_m,\chi_m} 
     \end{bmatrix},  
\end{gather}
where \textcolor{black}{we define}
\begin{equation}
\label{Eq_Distributed_R_Clusters}
\bar{R}^t_{\chi_q,\chi_r}=\frac{\sum_{i\in\chi_q}(\gamma_ix_i(t)\sum_{j\in\chi_r}\bar{R}_{ij}^t)}{\sum_{i\in\chi_q}\gamma_ix_i(t)}
\end{equation} 
\textcolor{black}{as the cluster distributed effective reproduction number from cluster $\chi_r$ to cluster $\chi_q$, for all $r,q\in \underline{m}$. We further define 
$\bar{R}^t_{\chi_q,\chi_q}$ for all $q\in\underline{m}$ as the cluster distributed effective reproduction number of from the cluster $\chi_q$ to itself.}
\end{definition}

Definition~\ref{Def:Rep_Mat_Cluster} addresses Problem~\ref{prob:3} by modeling the aggregated interactions between any pair of the clusters~$\chi_q$ and~$\chi_r$ through the cluster distributed ERN $\bar{R}^t_{\chi_q,\chi_r}$,
for all $q,r\in \underline{m}$. \baike{We use Figure~\ref{fig:Cluster_dist_eff_rep_num} to illustrate the connection between the local and cluster distributed ERNs at different scales.}
Similar to the construction of the \textcolor{black}{local}  distributed ERN matrix $\bar{\mathcal{R}}^t$, 
the off-diagonal entries of the $q^{th}$ row of $\mathcal{R}_X^t$ are defined as the 
\textcolor{black}{cluster} 
exogenous effective reproduction numbers of the $q^{th}$ cluster,  defined as $\bar{R}^t_{\chi_q,\chi_r}$, while 
the diagonal entry of the $q^{th}$ row of $\mathcal{R}_X^t$ is defined as the \textcolor{black}{cluster}  endogenous effective reproduction number of the $q^{th}$ cluster, defined as $\bar{R}^t_{\chi_q,\chi_q}$. Thus, $\bar{R}^t_{{\chi_q},\chi_q}$ represents the endogenous transmission within the $q^{th}$ cluster, while  $\bar{R}^t_{\chi_q,\chi_r}$ captures the exogenous
transmissions from the $r^{th}$ cluster to the $q^{th}$ cluster, where $q, r \in \underline{m}$. 
Together, the  $q^{th}$ row of $\mathcal{R}_X^t$ includes the cluster distributed effective reproduction numbers of the cluster $\chi_q$, which models the 
endogenous and exogenous spreading behavior of the $q^{th}$ cluster in the network. We name the 
the  $q^{th}$ row of $\mathcal{R}_X^t$, denoted by $\mathcal{R}_{X_{\chi_q,:}}^t$, as the \textit{\textbf{cluster distributed effective reproduction number vector}} of $q^{th}$ cluster, for all $q\in\underline{m}$. 
We further explain the entries of the matrix~$\mathcal{R}^t_X$ through the following corollary.

\begin{coro} 
\label{Coro_def_ditri_effe_repro_cluster_2}
For~$q\in\underline{m}$, 
the  \textcolor{black}{CERN}  of cluster $\chi_q$ given in
Definition~\ref{Def:Generalized_Rt} can also be obtained by 
$\bar{R}^t_{\chi_q} = \sum_{r=1}^m \bar{R}^t_{{\chi_q},\chi_r}$,
for all $r\in\underline{m}$.
\end{coro}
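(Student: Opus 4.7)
The plan is to prove the identity by direct substitution, exchange of summation order, and use of the partition property that was established in~\eqref{eq:chidef}, namely $\bigcup_{r=1}^{m}\chi_r = \underline{n}$ with pairwise-disjoint clusters.

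First I would expand the right-hand side $\sum_{r=1}^m \bar{R}^t_{\chi_q,\chi_r}$ by substituting the definition of each cluster distributed ERN from~\eqref{Eq_Distributed_R_Clusters}. Since every one of the $m$ summands shares the common denominator $\sum_{i\in\chi_q}\gamma_i x_i(t)$, I can pull it out and collect the numerators into a single expression of the form $\sum_{r=1}^{m}\sum_{i\in\chi_q}\gamma_i x_i(t)\sum_{j\in\chi_r}\bar{R}^t_{ij}$. Interchanging the finite sums over $r$ and $i$ is trivially valid, which yields $\sum_{i\in\chi_q}\gamma_i x_i(t)\bigl(\sum_{r=1}^{m}\sum_{j\in\chi_r}\bar{R}^t_{ij}\bigr)$.

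Next I would invoke the partition property: because $\{\chi_1,\dots,\chi_m\}$ forms a disjoint partition of $\underline{n}$, the inner double sum collapses to $\sum_{j=1}^{n}\bar{R}^t_{ij}$, which by Definition~\ref{def:R_c} is exactly $\bar{R}^t_i$. Substituting this gives $\sum_{r=1}^m \bar{R}^t_{\chi_q,\chi_r} = \dfrac{\sum_{i\in\chi_q}\gamma_i x_i(t)\bar{R}^t_i}{\sum_{i\in\chi_q}\gamma_i x_i(t)}$, and the right-hand side is precisely the definition of $\bar{R}^t_{\chi_q}$ given in Definition~\ref{Def:Generalized_Rt}, completing the proof.

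There is no real obstacle here: once the common denominator is factored out and the order of summation is swapped, the collapse $\sum_{r=1}^{m}\sum_{j\in\chi_r} = \sum_{j=1}^{n}$ is the whole content of the argument. The only thing to be careful about is the implicit assumption $x(t)\gg 0$ inherited from Definition~\ref{Def:Generalized_Rt}, which guarantees that the denominators $\sum_{i\in\chi_q}\gamma_i x_i(t)$ are strictly positive so that the divisions are well-defined.
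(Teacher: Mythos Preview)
Your proposal is correct. The algebraic manipulation you describe---factor the common denominator, exchange the finite sums, collapse $\sum_{r=1}^{m}\sum_{j\in\chi_r}$ to $\sum_{j=1}^{n}$ via the partition property, and identify $\sum_{j=1}^{n}\bar{R}^t_{ij}=\bar{R}^t_i$---is a clean and complete proof of the identity.

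The paper takes a less direct route. Rather than manipulating the definitions in~\eqref{Eq_Distributed_R_Clusters} and Definition~\ref{Def:Generalized_Rt} head-on, it returns to the underlying dynamics: it writes out $\sum_{i\in\chi_q}\dot{x}_i(t)$, splits the infection term into an endogenous block $\sum_{i\in\chi_q}\sum_{j\in\chi_q}s_i\beta_{ij}x_j$ and exogenous blocks $\sum_{i\in\chi_q}\sum_{j\in\chi_r}s_i\beta_{ij}x_j$ for $r\neq q$, divides through by $\sum_{i\in\chi_q}\gamma_i x_i$, and then multiplies and divides by $\gamma_i x_i(t)$ inside each term to expose the $\bar{R}^t_{ij}$'s and recover $\bar{R}^t_{\chi_q,\chi_r}$. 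The final step appeals to Definition~\ref{Def:Generalized_Rt} and the proof of Theorem~\ref{Thm:Gen_Reprod}. This has the virtue of making the physical interpretation (endogenous versus exogenous transmissions) explicit, but it is longer and introduces the dynamics where none are strictly needed. Your argument reaches the same conclusion purely from the combinatorics of the partition and is arguably the more economical proof.
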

\begin{proof}
Based on~\eqref{eq:SIS}, the change in the infected proportions of the cluster $\chi_q$  can be viewed as the 
sum of (i) the 
infections generated by the infected cases within the cluster (endogenous transmissions) and 
(ii) the infections generated by the infected cases from other clusters (exogenous transmissions). 
Therefore, we have that
\begingroup
\allowdisplaybreaks
    \begin{align*}
        \sum_{i\in \chi_q} \dot x_i(t) &= -\sum_{i\in \chi_q} \gamma_ix_i(t)+\underbrace{\sum_{i\in \chi_q}\sum_{j\in \chi_q}s_i\beta_{ij}x_j(t)}_{\textnormal{endogenous transmissions}} \\
        &\qquad+
        \underbrace{\sum_{r\in\underline{m}, r\neq q}\sum_{i\in \chi_q}\sum_{j\in\chi_r}s_i\beta_{ij}x_j(t)}_{\textnormal{exogenous transmissions}}.
    \end{align*}
\endgroup
Then, if $\sum_{i\in \chi_i} \dot x_i(t) =0$, we have that 
\begin{align*}
  &\frac{\sum_{i\in \chi_q}\sum_{j\in \chi_q}s_i\beta_{ij}x_j(t)}{\sum_{i\in \chi_q} \gamma_ix_i(t)} +\sum_{r\in\underline{m}, r\neq q}\frac{\sum_{i\in \chi_q}\sum_{j\in\chi_r}s_i\beta_{ij}x_j(t)}{\sum_{i\in \chi_q} \gamma_ix_i(t)}\\
 &=\frac{\sum_{i\in \chi_q}\gamma_ix_i(t)\frac{\sum_{j\in \chi_q}s_i\beta_{ij}x_j(t)}{\gamma_ix_i(t)}}{\sum_{i\in \chi_q} \gamma_ix_i(t)} \\
  & \ \ \ \ \ \ \ \ \ \ \ \ \ \ \ \ \ +
 \sum_{r\in\underline{m}, r\neq q}\frac{\sum_{i\in \chi_q}
 \gamma_ix_i(t)
 \frac{\sum_{j\in\chi_r}s_i\beta_{ij}x_j(t)}{\gamma_ix_i(t)}}{\sum_{i\in \chi_q} \gamma_ix_i(t)}\\
  &= \frac{\sum_{i\in \chi_q}\gamma_ix_i(t)\sum_{j\in \chi_q}\bar{R}^t_{ij}}{\sum_{i\in \chi_q} \gamma_ix_i(t)} \\
 & \ \ \ \ \ \ \ \ \ \ \ \ \ \ \ \ \ +
\sum_{r\in\underline{m}, r\neq q} \frac{\sum_{i\in \chi_q}
 \gamma_ix_i(t) \sum_{j\in \chi_r}
 \bar{R}^t_{ij}}{\sum_{i\in \chi_q} \gamma_ix_i(t)}\\
 \\
 &=\bar{R}^t_{\chi_q,\chi_q} + \sum_{r\in\underline{m}, r\neq q} \bar{R}^t_{\chi_q,\chi_r} = 1.
\end{align*}
By comparing the equation above with Definition~\ref{Def:Generalized_Rt}  and the proof of Theorem~\ref{Thm:Gen_Reprod}, we see that the \textcolor{black}{CERN}  of cluster $\chi_q$
can be computed by summing the
entries in the $q^{th}$ row of the \textcolor{black}{cluster} distributed ERN matrix~$\mathcal{R}^t_X$ in~\eqref{eq_cluster_dis_matrix}, i.e., 
the cluster distributed
ERN vector~$\mathcal{R}^t_{X_{\chi_q,:}}$.
\end{proof}


\begin{figure}[h]
    \centering
    \includegraphics[width=1\linewidth]{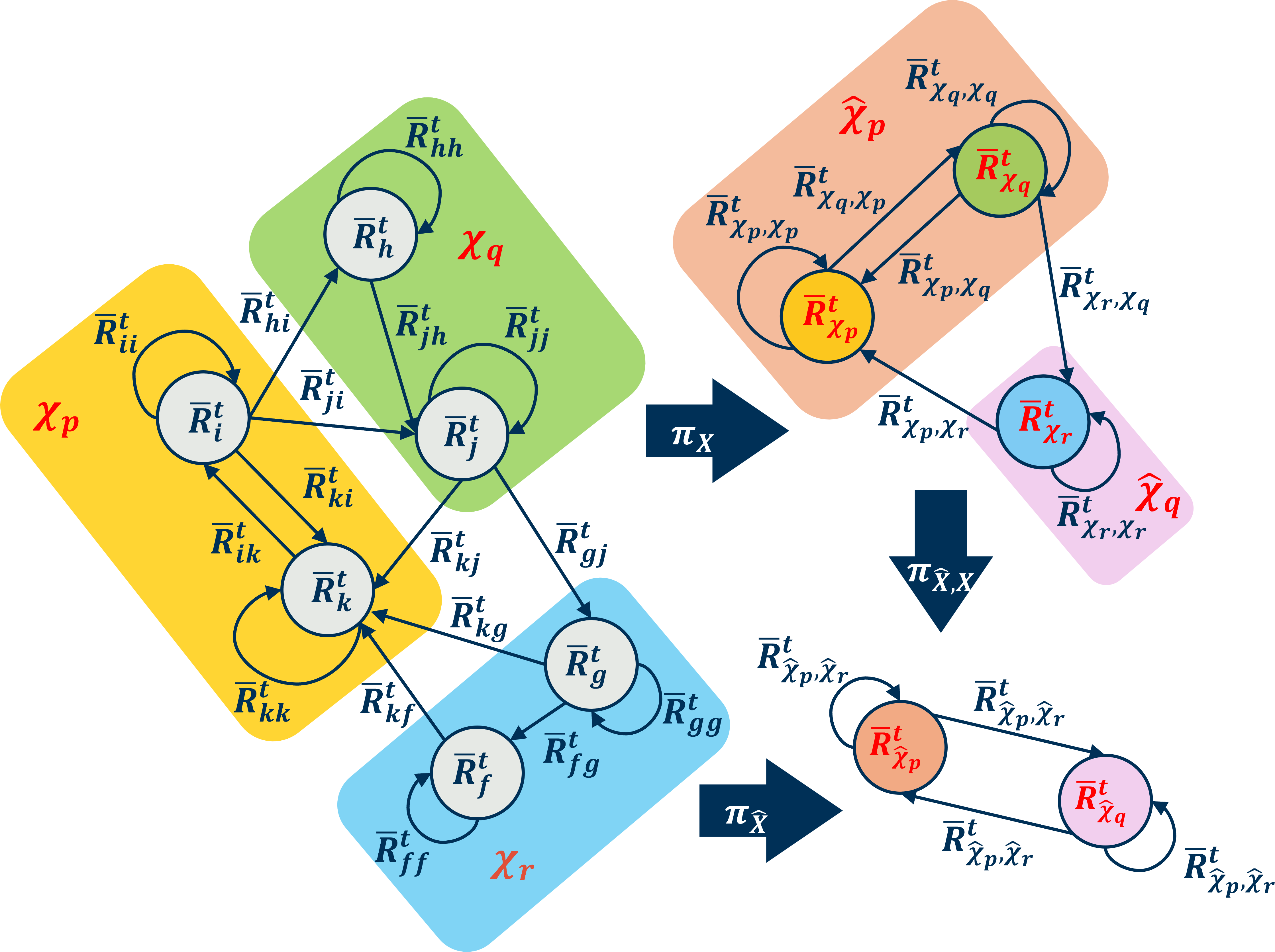}
    \caption{\baike{Cluster distributed effective reproduction numbers. The network on the left depicts a spreading network with six nodes $f$, $g$, $h$, $i$, $j$, and $k$. The network can be partitioned via a finer partition $\pi_{X}$ such that $X_{\pi}=\{ \chi_{p}, \chi_{q}, \chi_{r} \}$ and its coarser partition $\pi_{\hat{X}}$ such that $\hat{X}_{\pi} = \{ \hat{\chi}_{p}, \hat{\chi}_{q} \}$, as depicted in the figures on the top right and bottom right, respectively. 
      For example, we can use the local distributed ERNs and LERNs of entities $i$ and $k$ to compute the cluster distributed ERNs and CERNs of the cluster
    $\chi_{p}=\{i,k\}$ in $X_{\pi}$. Additionally, we can use the local distributed ERNs and LERNs of entities $h$, $i$, $j$, and $k$ to compute the cluster distributed ERNs and CERNs of
    the cluster $\hat{\chi}_{p}=\{h,i,j,k\}$ in $\hat{X}_{\pi}$. 
    Further, according to Corollary~\ref{coro:reproduction_subpop}, we can also aggregate the CERNs in the figure on the top right, such as $\bar{R}^t_{\chi_{p}}$ and $\bar{R}^t_{\chi_{q}}$
   to obtain the CERNs of the corresponding cluster in the coarser partition, given by  $\bar{R}^t_{\hat{\chi}_{p}}$, in the figure on the bottom right. Note that we have that $\pi_{\hat{\chi},\chi}: X\rightarrow \hat{X}$.
   }}
    \label{fig:Cluster_dist_eff_rep_num}
\end{figure}

In the local distributed ERN matrix $\bar{R}^t$, the sum of the $i^{th}$ row (i.e., the local distributed ERN vector) corresponds to the LERN of entity $i$, as shown in~\eqref{Net_R}. Similarly, Corollary~\ref{Coro_def_ditri_effe_repro_cluster_2} establishes that the sum of the $q^{th}$ row (i.e., the cluster distributed ERN vector)
in the cluster distributed ERN matrix, defined in Definition~\ref{Def:Rep_Mat_Cluster}, represents the CERN of cluster $\chi_q$. 
Definition~\ref{Def:Rep_Mat_Cluster}, along with Corollary~\ref{Coro_def_ditri_effe_repro_cluster_2}, provides a method to aggregate local distributed ERNs into cluster distributed ERNs, which further characterize spreading interactions at the cluster level. Hence, they address Problem~\ref{prob:3}.

Further, according to Definition~\ref{Def:Generalized_Rt} and Corollary~\ref{coro:reproduction_subpop}, the central aggregator of cluster $\chi_q$ 
can request all relevant infection information from its local authorities.
This information includes local distributed ERNs, recovery rates, and susceptible proportions of the entities within cluster $\chi_q$, 
which can be used to construct its cluster distributed ERN vector $\mathcal{R}^t_{X_{\chi_q,:}}$. 
Additionally, similar to the local distributed ERN vector, the cluster distributed ERN vector
reflects the coupling strength between clusters to some extent. This information raises privacy concerns about revealing sensitive information, such as the frequency of interactions between clusters. Therefore, we proceed in the next section to propose a differential privacy framework to protect the local and cluster ERNs before sharing them with the policymakers in charge of the central authority.

\section{Differential Privacy for  Distributed Effective Reproduction Numbers}
\label{Privacy_Mechanism_For_The_Distributed_Reproduction_Numbers}

In this section, we solve Problems~\ref{prob:local_randomizer_design},~\ref{prob:central_aggregator_design}, and~\ref{prob:differential_privacy_accuracy_analysis}. In Section~\ref{sec:communication_framework}, we provide an overview of how \textcolor{black}{local} distributed ERNs are aggregated into private cluster distributed ERNs. We briefly mention the differential privacy mechanisms in Section~\ref{sec:communication_framework}, with a more detailed introduction of these mechanisms, including the local randomizer and shuffle model, provided in Section~\ref{sec:local_randomizer_design} and Section~\ref{sec:central_aggregator_design}, respectively.
Finally, Section~\ref{sec:differential_privacy_accuracy_analysis} analyzes the privacy-accuracy trade-off between the distributed ERNs and the privatized distributed ERNs using the proposed privacy-preserving communication framework.

\begin{assumption}
\label{assum_private_info}
For the spreading dynamics over the graph $G$, the susceptible and infected state vectors $s(t)$ and $x(t)$, as well as the recovery matrix $\Gamma$ are publicly available. The local authority of entity $i$ only has access to the transmission rates related to itself, given by the $i^{th}$ row of the transmission matrix, i.e., $B_{i,:}$, for all $i\in\underline{n}$.
\end{assumption}

\begin{remark}
    \textcolor{black}{Assumption~\ref{assum_private_info} further clarifies the sensitive information we aim to protect, specifically the transmission rates between entities. It is reasonable to suppose that $s(t)$ and $x(t)$ are publicly available and non-sentitive, since during pandemics the number of infected cases and the total population of regions are often shared openly. 
    Furthermore, the recovery rate of each entity, represented by the recovery matrix $\Gamma$, can be determined from the average duration of the infection window, which is typically provided by public health authorities.}
    
    \textcolor{black}{
   However, transmission rates across the network, which are often proportional to the duration or frequency of interactions between entities, are sensitive information. It is reasonable to assume that the local authority of entity~$i$ only has access to the transmission rates related to itself. For example, as an individual, one could use sensors or a smartphone to record the duration of interactions with others during a fixed time window, which would remain unknown to others. As a region, a local authority could record the population flow within and into the region via transportation hubs.} 

    \textcolor{black}{
    According to Definition~\ref{Def:NRN}, the local authority of entity~$i$ can use $s(t)$, $x(t)$, $\Gamma$, and $B_{i,:}$ to compute its local distributed ERNs.
    Based on~\eqref{Eq_Distributed_R_Clusters}, the central authority of the cluster can use $s(t)$, $x(t)$, $\Gamma$, and the local distributed ERNs of its entities to further compute its cluster distributed ERN. 
    Therefore, the local and cluster distributed ERNs and their aggregations also reveal $B_{i,:}$. Thus, we design a privacy framework to protect the local distributed ERNs and then to protect the cluster distributed ERNs.}  
\end{remark}


\subsection{A Privacy-Preserving Communication Framework}\label{sec:communication_framework}
As illustrated in Figure~\ref{fig:network_structure}, consider a spreading network with $n$ entities, each corresponding to a local authority. 
These $n$ entities are divided into $m$ clusters, where each cluster is a non-empty set with a central aggregator.
The central authority of the network is interested in investigating the spreading dynamics using the cluster distributed ERN matrix $\mathcal{R}_X^t$ between $m$ clusters ($m\leq n$).  Meanwhile, we will implement differential privacy mechanisms   to protect the original $\mathcal{R}_X^t$.
We first propose Algorithm~\ref{AL_1} to provide an overview of our privacy framework. 

\begin{algorithm}[h] 
\caption{The Private Cluster Distributed ERN Matrix}
\textbf{Inputs:} 
Local distributed ERNs
\\
\textbf{Output:} Private cluster distributed ERN matrix
\begin{algorithmic}[1] 
\State \textbf{Step 1:} The central authority sends a request to all local authorities.
\State \textbf{Step 2:} Each local authority computes its local distributed ERN vector ($\bar{\mathcal{R}}^t_{i,:}$ in ~\eqref{eq:bar_R_t}).
\State \textbf{Step 3:} Each local authority computes its local aggregated ERN vector ($\zeta_i$ in~\eqref{eq_aggregated_reproduction_number_vector}).
\State \textbf{Step 4:} Each local authority applies a local randomizer (Section~\ref{sec:local_randomizer_design}) to add differential privacy to the local aggregated ERN vector ($\tilde{\zeta}_i$ in~\eqref{eq_priv_ERN_Vector}).
\State \textbf{Step 5:} Each local authority sends its privatized local aggregated ERN vector to the corresponding shuffler of its cluster, which then shuffles the vectors  (Section~\ref{sec:central_aggregator_design}) 
\State \textbf{Step 6:} The central aggregator of the cluster receives shuffled privatized vectors~$\tilde{\zeta}_i$ (Section~\ref{sec:central_aggregator_design}) and aggregates them
to generate the cluster distributed ERN vector in~\eqref{eq:assemble_central_aggregation}.
\State \textbf{Step 7:} All central aggregators send their cluster distributed ERN vectors to the data center, and the data center generates the cluster distributed ERN matrix and shares it with the central authority.
\end{algorithmic}
\label{AL_1}
\end{algorithm}

\textcolor{black}{We explain Algorithm~\ref{AL_1} step-by-step.}
As given by Step~$1$ in Algorithm~\ref{AL_1},
the central authority sends a request to all local authorities, including the number of clusters in the network, denoted by $\regionNum$, and the identities of the entities in each cluster.
After receiving the request, in Step~$2$, 
the $i^{th}$ local authority, with $i\in\underline{n}$ in cluster $\chi_q$ and $q\in\underline{m}$,  uses its locally stored information given in Assumption~\ref{assum_private_info} to compute the \textcolor{black}{local} distributed ERNs $\Bar{R}_{ij}^t$ (Definition~\ref{Def:NRN}), for all $j\in\underline{n}$. Then, the~$i^{th}$ local authority forms its 
local distributed ERN vector~$\bar{\mathcal{R}}^t_{i,:}$ in~\eqref{eq:bar_R_t}.

In Step~$3$, the~$i^{th}$ local authority computes the following \textbf{\textit{local aggregated effective reproduction number}} using the cluster identity information for the overall network provided by the central authority:
\begin{equation}
    \Bar{R}_{i,{\chi_r}}^t = \gamma_ix_i(t)\sum_{k\in\chi_r}\Bar{R}_{ik}^t, \text{ for all } i\in\chi_q.\label{eq:pre_aggregation}
\end{equation}
By summing $\Bar{R}_{i,{\chi_r}}^t$ over all $i\in\chi_q$, the local authority can recover the numerator of the $q^{th}r^{th}$ entry of the \textcolor{black}{cluster} distributed ERN matrix in~\eqref{Eq_Distributed_R_Clusters}. 
Using publicly available information in Assumption~\ref{assum_private_info}, such as the total infected proportion within each entity and the recovery rate of each entity, the local authority can compute the 
$q^{th}r^{th}$ entry of the \textcolor{black}{cluster} distributed ERN matrix, given by~$\mathcal{R}^t_{X}$ in~\eqref{Eq_Distributed_R_Clusters}.
Furthermore, the $i^{th}$ local authority stacks the \textcolor{black}{local} aggregated ERNs with respect to all clusters into a \textbf{\textit{\textcolor{black}{local} aggregated effective reproduction number vector}}, given by
\begin{align}
\label{eq_aggregated_reproduction_number_vector}
    \zeta_i = [\Bar{R}_{i,{\chi_1}}^t,\dots, \Bar{R}_{i,{\chi_m}}^t]\in\mathbb{R}^m,
\end{align}
with the $r^{th}$ entry of $\zeta_i$ being $\Bar{R}_{i,\chi_r}^t$ in~\eqref{eq:pre_aggregation}, for all $r \in \underline{m}$ and $i \in \underline{n}$.
The reason for performing this pre-aggregation step is to reduce the length of the vectors needed for privatization for local authorities and to improve accuracy. 

In Step~$4$, the local authority of entity $i$ implements a local randomizer 
(to be detailed in Section~\ref{sec:local_randomizer_design})
to privatize the local aggregated ERN vector $\zeta_i \in \mathbb{R}^{\regionNum}$, for all $i \in \underline{n}$. 
We denote the private local aggregated ERN vector of entity $i$ as
\begin{align}
\label{eq_priv_ERN_Vector}
    \tilde{\zeta}_i = [\tilde{R}_{i,\chi_1}^t,\dots, \tilde{R}_{i,\chi_m}^t]\in\mathbb{R}^m,
\end{align}
for all $i\in\underline n$. Further, $\tilde{R}_{i,\chi_r}^t$ is the private \textcolor{black}{local aggregated} effective reproduction number. 

Then, in Step~$5$, for each~$i \in \underline{n}$,  local authority~$i$ sends $\tilde{\zeta}_i$ to its corresponding trusted shuffler at its own central aggregator of the cluster it belongs to. \textcolor{black}{Recall from the introduction that the central authority of the network has permission only to read the outputs from the central aggregators but does not have permission to inspect the privacy mechanisms implemented by the central aggregators. In addition, in this work, we consider that the central aggregators to be implemented by the authorities of the clusters.}

For a total of $\regionNum$ clusters, there are $\regionNum$ shufflers implemented at the central aggregators,
each of which 
is responsible for anonymizing the private
local aggregated ERN vectors $\tilde{\zeta}_i$ reported to it,
as well as applying a random permutation to remove any information in their order.
The detailed shuffling mechanism will be introduced in Section~\ref{sec:central_aggregator_design}.


Based on~Definition~\ref{Def:Rep_Mat_Cluster}, 
in Step~$6$,
the central aggregator of the cluster $\chi_q$, 
for all $q\in\underline{m}$, 
leverages its own private aggregated  \textcolor{black}{local} ERN \textcolor{black}{vectors}, i.e., $\tilde{\zeta}_i$ for all $i\in\chi_q$, to construct the \textit{\textbf{private \textcolor{black}{cluster} distributed effective reproduction number vector}}, denoted as $\tilde{\mathcal{R}}_{X_{\chi_q,:}}^t$, where
\begin{align}
\tilde{\mathcal{R}}_{X_{\chi_q,:}}^t = [\tilde{R}_{\chi_q,\chi_1}^t,\dots, \tilde{R}_{\chi_q,\chi_m}^t],    
\end{align}
and where
\begin{equation}
    \tilde{R}_{\chi_q,\chi_r}^t = \frac{\sum_{k\in\chi_q}\tilde{R}_{k,\chi_r}^t}{\sum_{k\in\chi_q}\gamma_kx_k(t)}\label{eq:assemble_central_aggregation}
\end{equation}
is defined as the \textbf{\textit{private cluster distributed effective reproduction number}}
for all ${q,r\in\underline{m}}$.
Note that $\tilde{\mathcal{R}}_{X_{\chi_q,:}}^t$ is the privatized version of the $q^{th}$ row of the cluster distributed ERN matrix $\mathcal{R}^t_X$ in~\eqref{eq_cluster_dis_matrix}, i.e., 
the cluster distributed ERN vector~$\mathcal{R}^t_{X_{\chi_q,:}}$. In addition, we did not 
take any additional steps (such as adding noise) to privatize
the cluster distributed ERN $\tilde{R}_{\chi_q,\chi_r}^t$. According to~\eqref{eq:assemble_central_aggregation}, the private cluster distributed ERN of cluster $\chi_q$, for all $q\in\underline{m}$, given by
$\tilde{R}_{\chi_q,\chi_r}^t$, for all $r\in\underline{m}$,  
can be computed from the  entries of the private local aggregated ERN
vectors $\tilde{\zeta}_i$ in~\eqref{eq_priv_ERN_Vector}, for all $i\in\chi_q$. Then, all central aggregators at their clusters send their own private cluster distributed ERN vectors to the central authority. Consequently, the central authority will form the \textit{\textbf{private cluster distributed effective reproduction number matrix}} $\tilde{\mathcal{R}}_X^t$, where $[\tilde{\mathcal{R}}_X^t]_{q,r} = \tilde{R}^t_{\chi_q,\chi_r}$, for all $q,r\in\underline{m}$. 
The private \textcolor{black}{cluster} distributed ERN matrix  is a privatized version of the \textcolor{black}{cluster}
distributed ERN matrix~$\mathcal{R}_X^t$
in Definition~\ref{Def:Rep_Mat_Cluster},  which captures the transmission coupling through the private entries $\tilde{R}_{\chi_q,\chi_r}^t$
between the clusters
 $\chi_{q}$ and~$\chi_r$.

\begin{remark}

Based on Assumption~\ref{assum_private_info}, the proposed communication framework in Algorithm~\ref{AL_1} is a distributed, privacy-enhanced method for obtaining the \textcolor{black}{private cluster} distributed ERN matrix whose entries are defined in~\eqref{eq:assemble_central_aggregation}. We discuss in the following sections how the framework can conceal the local and cluster distributed ERNs by using differential privacy. 
If we ignore the privatization step in Algorithm~\ref{AL_1}, then we will recover the \textcolor{black}{cluster} distributed ERN matrix in Definition~\ref{Def:Rep_Mat_Cluster}. We use Figure~\ref{fig:communication_network} as an example to further illustrate Algorithm~\ref{AL_1}, which defines a communication framework for the central authority of a spreading network to retrieve its private cluster distributed ERN matrix $\tilde{\mathcal{R}}_X^t$.

\end{remark}

\begin{figure*}[ht!]
    \centering
    \includegraphics[width=2\columnwidth]{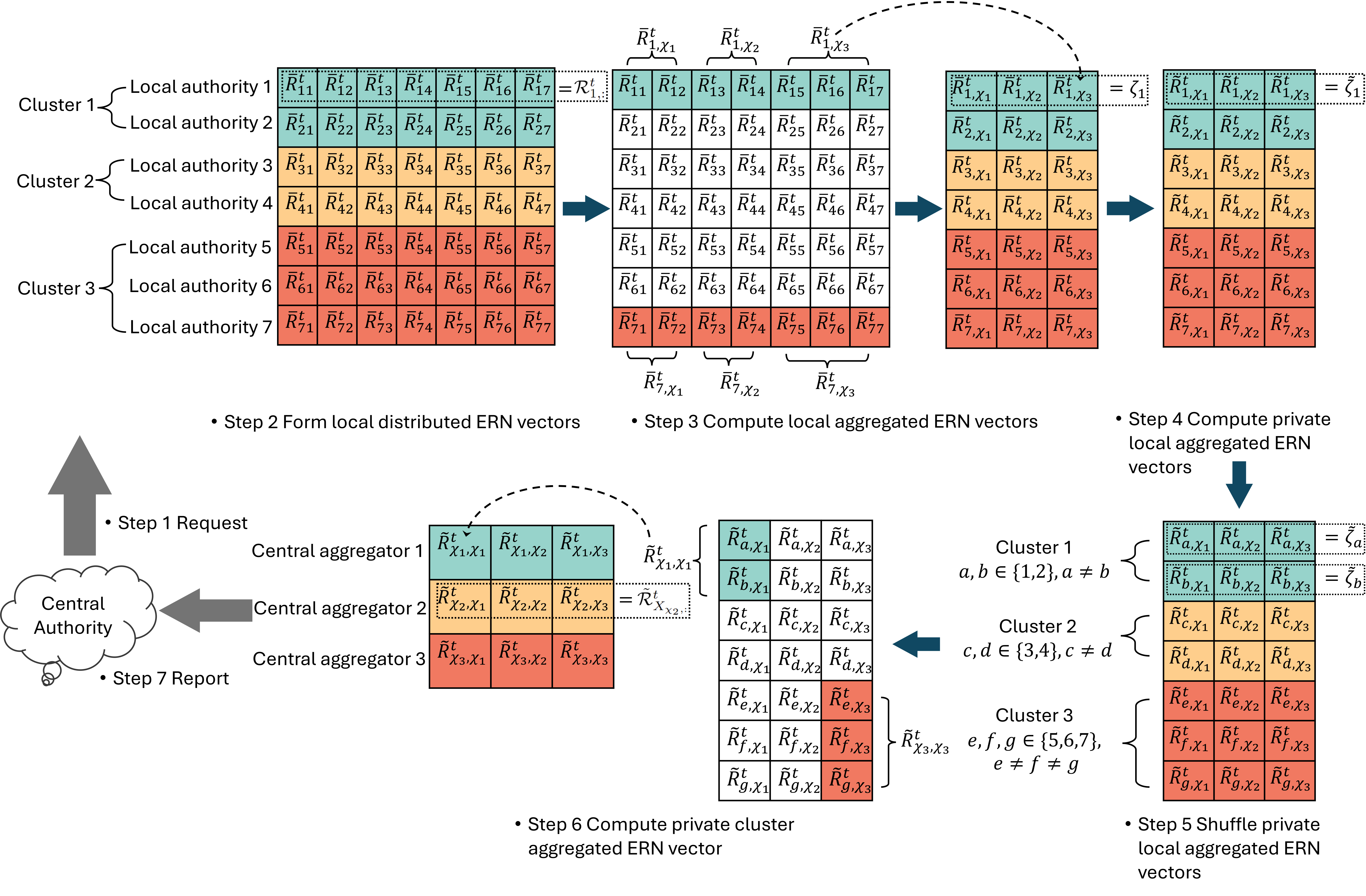}
    \caption{\textcolor{black}{Construction of the private cluster distributed ERN matrix of a disease spreading network with seven entities. The seven entities are managed by their own local authorities, labeled from $1$ to $7$. In addition, these seven authorities are organized into three clusters: $\chi_1 = \{1,2\}$, $\chi_2 = \{3,4\}$, and $\chi_3 = \{5,6,7\}$. We use three different colors to mark the entries that belong to these three clusters.
    This example illustrates the seven steps in Algorithm~\ref{AL_1}. 
    1) The central authority sends a request to the local authorities of the seven entities in the network, along with necessary information such as the identities of the other entities, the clusters they belong to, and the information in Assumption~\ref{assum_private_info}. 
    2) The local authority of entity $i$ computes its local distributed ERNs $\bar{R}^t_{ij}$ given in Definition~\ref{Def:NRN} for all $j$. Then, the local authority of entity $i$ forms its local distributed ERN vector $\bar{\mathcal{R}}^t_{i,:}$ in \eqref{eq:bar_R_t}. For example, $\bar{\mathcal{R}}^t_{1,:}=[\bar{R}^t_{11}, \dots, \bar{R}^t_{17}]$ formed by local authority~$1$ is given by the first row of the matrix shown in Step~$2$. 
    3) Local authority~$i$ computes its local aggregated ERN vector defined in~\eqref{eq_aggregated_reproduction_number_vector}. For example, we have $\zeta_1 = [\bar{R}^t_{1,\chi_1}, \bar{R}^t_{1,\chi_2}, \bar{R}^t_{1,\chi_3}]$. As illustrated by the aggregation process in the first row of the matrix in Step~$3$, local authority~$i$ only requires its own local distributed ERN vector, thanks to~\eqref{eq:pre_aggregation}. 
    4) The $i^{th}$ local authority implements the local randomizer (as in Section~\ref{sec:local_randomizer_design}) to its own $\zeta_i$ to obtain the private local aggregated ERN vector $\tilde{\zeta}_i$. For example, the local authority~$1$ privatizes $\zeta_1 = [\bar{R}^t_{1,\chi_1}, \bar{R}^t_{1,\chi_2}, \bar{R}^t_{1,\chi_3}]$ as $\zeta_1 = [\tilde{R}^t_{1,\chi_1}, \tilde{R}^t_{1,\chi_2}, \tilde{R}^t_{1,\chi_3}]$. 
    5) Starting from this step, all processes will be performed by the central aggregators of the clusters. The local authorities send their privatized local aggregated ERN vectors to the shuffler of their corresponding clusters. The shuffler hides the identities of its local aggregated ERN vectors (as in Section~\ref{sec:central_aggregator_design}). For instance, the shuffler of Cluster~$3$ collects the local aggregated ERN vectors from the local authorities of its entities, denoted as $\tilde{\zeta}_i$, $i\in\{5,6,7\}$. Then, the shuffler shuffles these three vectors and generates three anonymous vectors, $\tilde{\zeta}_e$, $\tilde{\zeta}_f$, and $\tilde{\zeta}_g$. 
    6) The central aggregator of cluster $\chi_q$ then uses its anonymous private local aggregated ERN vectors to compute its private cluster distributed ERN vector $\tilde{\mathcal{R}}_{X_{\chi_q,:}}^t$ according to~\eqref{eq:assemble_central_aggregation}. For instance, the central aggregator of the cluster $\chi_1$ can use the first entries of the vectors $\tilde{\zeta}_a$ and $\tilde{\zeta}_b$ to generate the first entry of its private cluster distributed ERN vector $\tilde{R}^t_{\chi_1,\chi_1}$, thanks to~\eqref{eq:assemble_central_aggregation}. Following the same procedure, the central aggregator of the cluster $\chi_1$ can obtain its private cluster distributed ERN vector, given by $\tilde{\mathcal{R}}^t_{X_{\chi_1,:}}$. 
    7) All central aggregators share their private cluster distributed ERN vectors with the central authority.}} 
\label{fig:communication_network}
\end{figure*}

\subsection{Local Randomizer Design}\label{sec:local_randomizer_design}
\textcolor{black}{In this section, we introduce the local randomizer implemented by local authorities in Step~$4$ of Algorithm~\ref{AL_1}.}
Let $\mathbb{D}$ be the domain of the aggregated \textcolor{black}{local} distributed ERNs, as introduced in~\eqref{eq:pre_aggregation}.
\textcolor{black}{Based on its local aggregated ERN vector $\zeta_i=[\vectElement]_{q\in\setRegionNum}\in\domain^\regionNum$, 
the local authority 
$i\in\setLocAuthNum$ uses the local randomizer to generate a private local aggregated ERN vector
$\privateVect_i=[\privateVectElement]_{q\in\setRegionNum}$ by implementing the bounded Gaussian mechanism \textcolor{black}{for $\zeta_i$}. 
Since local authorities seek to share a privatized form of~$\zeta_i$ itself, from a privacy perspective
we must privatize the identity mapping acting on~$\zeta_i$, which is sometimes called
``input perturbation''. 
We will use the bounded Gaussian mechanism to implement 
input perturbation differential privacy, and, to do so,
we calibrate its parameters to the identity mapping. 
}

Towards such an implementation, 
we first define the notion of 
sensitivity, which quantifies the maximum possible difference 
between two adjacent \textcolor{black}{local} aggregated ERN vectors, 
as defined in Definition~\ref{def:adjacency}.   

\begin{definition}[$L_2$-sensitivity of local aggregated ERN vectors]\label{def:l2_sensitivity}
Fix an adjacency parameter~$k > 0$. 
Consider two \textcolor{black}{local aggregated ERN} vectors $\vect_i, \vect_i'\in\nonnegativeRealSpace^\regionNum$ 
at local authority $i$ that are adjacent in the sense of Definition~\ref{def:adjacency}. 
Then the $L_2$-sensitivity of the identity mapping with respect to this adjacency relation is 
$\Delta_2\vect_i$, defined as $\Delta_2\vect_i=\max_{\vect_i\adjacent\vect_i'}\sqrt{\sum_{r=1}^\regionNum (\vectElement-(\vectElement)')^2}$, where $\regionNum=|\vect_i|$ is the number of clusters in the network, and
where each~$\vectElement$ is computed
from~$\vect_i$ and each~$(\vectElement)'$
is computed from~$\vect_i'$~\textcolor{black}{as in~\eqref{eq_aggregated_reproduction_number_vector}}. 
 \hfill $\lozenge$
\end{definition}

\begin{remark} \label{rem:sens}
By using Definitions~\ref{def:adjacency} and~\ref{def:l2_sensitivity}, we can see that the sensitivity
of the identity mapping acting on a local aggregated ERN vector is~$\Delta_2\zeta_i = k$, where~$k$
is the user-specified adjacency parameter. 
\end{remark} 

We use the $L_2$-sensitivity to calibrate the variance of noise that is added for privacy, and we next define the bounded Gaussian mechanism.

\begin{mechanism}[Bounded Gaussian mechanism~\cite{chen2022bounded}]\label{mech:bounded_Gaussian_mechanism}
    Fix a probability space $(\Omega,\mathcal{F},\mathbb{P})$. For a \textcolor{black}{local aggregated ERN} vector $\vect_i=[\vectElement]_{r\in\underline{\regionNum}}\in\domain^\regionNum_i$ with $\domain^\regionNum_i=\prod_{r=1}^\regionNum\domain_{ir}$ and $\domain_{ir}=[\vectElementLowerBound_{ir},\vectElementUpperBound_{ir}]$ at  local authority $i\in\setLocAuthNum$, the bounded Gaussian mechanism $M_{BG}:\domain^n_i\times\Omega\rightarrow\domain^n_i$ generates private \textcolor{black}{local aggregated ERN} vectors $\privateVect_i=[\privateVectElement]_{r\in\underline{\regionNum}}\in\domain^n_i$ with 
    \begin{equation}
        \privateVectElement=\begin{cases}
            TrunG(\vectElement,\sigma,\vectElementLowerBound_{ir},\vectElementUpperBound_{ir}), & \text{if } \vectElement>0, \\
            0, & \text{if } \vectElement=0.
        \end{cases}
    \end{equation}
    
    This mechanism is a local randomizer that satisfies {$\localRandPrivacyLevel$-differential privacy} for $\localRandPrivacyLevel>0$ if
    \begin{equation}
        \sigma^2 \geq \frac{k\left(\frac{k}{2}+\sqrt{\sum_{r=1}^n(\vectElementUpperBound_{ir}-\vectElementLowerBound_{ir})^2\cdot\mathbf{1}_{\mathbb{R}_{> 0}}(\vectElement)}\right)}{\localRandPrivacyLevel-\log(\Delta C(\sigma,\offsetVect))},\label{ineq:sigma}
    \end{equation}
    where 
    \begin{equation*}
        \Delta C(\sigma,\offsetVect) = \prod_{r=1}^n\frac{\Phi\left(\frac{\vectElementUpperBound_{ir}-\vectElementLowerBound_{ir}-\offsetVectElement_{ir}}{\sigma}\right)-\Phi\left(\frac{-\offsetVectElement_{ir}}{\sigma}\right)}{\Phi\left(\frac{\vectElementUpperBound_{ir}-\vectElementLowerBound_{ir}}{\sigma}\right)-\Phi\left(0\right)}\cdot\mathbf{1}_{\mathbb{R}_{> 0}}(\vectElement)
    \end{equation*}
    and $\offsetVect\in\nonnegativeRealSpace^n$ is an offset vector. The vector $\offsetVect$ can be found by solving the optimization problem in~\cite[(3.3)]{chen2022bounded}.\hfill $\lozenge$
\end{mechanism}

\begin{remark}\label{rmk:implication_epsilon}
    The minimal value of $\sigma$ that satisfies~\eqref{ineq:sigma} can be found using~\cite[Algorithm 2]{chen2022bounded}. Note that~\eqref{ineq:sigma} implies that a larger $\epsilon$ gives weaker privacy and leads to a smaller $\sigma$.\hfill $\lozenge$
\end{remark}

Mechanism~\ref{mech:bounded_Gaussian_mechanism} defines a privacy mechanism that generates a private
\textcolor{black}{local aggregated ERN} vector $\privateVect_i=[\privateVectElement]_{r\in\setRegionNum}$ within a bounded domain $\domain^n_i$ around the original \textcolor{black}{local aggregated ERN vector} $\vect_i$. Conventional unbounded mechanisms, e.g., the standard Gaussian and Laplace mechanisms, 
can generate reproduction numbers that are negative or implausibly high~\cite{dwork2014algorithmic}. Conversely, the 
bounded Gaussian mechanism prevents such infeasible values, maintaining both their credibility and usefulness in analysis (See Section~\ref{sec_Simulation}). 

\begin{remark}
 \textcolor{black}{An aggregated local ERN vector $\vectElement = 0$ indicates that there are no direct transmissions from cluster $\chi_r$ to entity $i$. A benefit of using the bounded Gaussian mechanism is that it always preserves these zero vectors, leaving non-existing transmission interactions unchanged.}
    Thus, the bounded Gaussian mechanism does not alter the presence or absence of transmission interactions in the network, though it does alter values of transmissions when they exist in order to implement privacy. 
\end{remark} 

\textcolor{black}{
In summary, following Step~$4$ in Algorithm~\ref{AL_1}, after generating the local aggregated ERN vector $\vect_i = [\vectElement]_{r \in \setRegionNum}$ at local authority~$i$, where~$i \in \underline{n}$, the local authority uses the local randomizer to generate a private version of the local aggregated ERN vector, denoted $\privateVect_i = [\privateVectElement]_{q \in \setRegionNum}$.  This private vector is then shared with the corresponding central aggregator at its cluster.
Once the central aggregator at the $q^{th}$ cluster receives all 
necessary private local aggregated ERN vectors $\privateVect_i$ for all $i \in \chi_q$, 
it groups these vectors to generate its private cluster distributed ERN vector, as introduced in Step~$6$. To further enhance privacy, 
the central aggregator \textcolor{black}{at each cluster} applies a shuffling mechanism before grouping the local aggregated ERN vectors from its own local authorities.}

\subsection{Shuffler Design}\label{sec:central_aggregator_design}
We introduce the shuffle mechanism for privacy in this section.  
There is one shuffler per cluster, and,
within cluster~$q$, 
local authority~$i$ 
sends its \textcolor{black}{private local aggregated ERN} vector $\privateVect_i = [\privateVectElement]_{r \in \setRegionNum}$ to the shuffler at the central aggregator. 
The shuffler at cluster~$q$
anonymizes all of 
these vectors, and randomly shuffles 
their order
before they are aggregated by the central aggregator. 
\textcolor{black}{Therefore, when using~\eqref{eq:assemble_central_aggregation} to aggregate the private local aggregated ERN vectors into private cluster distributed ERN vectors in Step~$6$ of 
Algorithm~\ref{AL_1}, the shuffling mechanism ensures that the aggregation process within the cluster cannot distinguish the identity of the
local authority that produced each of the
private local aggregated ERN vectors. 
}

The shuffle model introduces additional randomness when data is grouped at the central aggregator, offering significantly stronger privacy guarantees compared to directly sharing the private \textcolor{black}{local} aggregated ERN vectors with the central aggregator~\cite{Cheu2019Distributed}. We use the following result to quantify the privacy of the shuffler implemented in Algorithm~\ref{AL_1}.

\begin{lemma}[Shuffle Model] \label{thm:shuffle}
    Fix a cluster~$\chi_q$. 
    For each~$i \in \chi_q$, 
    let $\localRandomizer_i:\Omega_i\times\domain_i^\regionNum\rightarrow\domain_i^\regionNum$ 
    denote the~$\epsilon_0$-differentially private bounded
    Gaussian mechanism at local authority~$i$.     
    Given a collection of sensitive vectors $\{\vect_i\in\domain^n \mid i\in\chi_q\}$,
    let $\centralDP_s:\Omega\times\domain^{n\cdot \vectNumInShuffler_q}\rightarrow\mathbb{D}^n$ 
    (i)~generate private outputs $\{\privateVect_i\in\domain^n \mid i\in\chi_q\}$, where $\privateVect_i=\localRandomizer_i(\vect_i)$ for each $i\in\chi_q$, 
    (ii)~anonymize the vectors, and 
    (iii)~sample a uniform random permutation $\permutation$ over $\chi_q$ and output $\{\privateVect_{\permutation(i)} \mid  i\in\chi_q\}$. Then $\centralDP$ is $(\epsilon,\delta)$-differentially private 
    in the sense of Definition~\ref{def:central_differential_privacy},
    with
    \begin{equation}
    \epsilon\leq\ln\left(1+(e^{\epsilon_0}-1)\left(\frac{4\sqrt{2\ln(4/\delta)}}{\sqrt{(e^{\epsilon_0}+1)|\chi_q|}}+\frac{4}{|\chi_q|}\right)\right)
    \end{equation}
    and any $\delta\in(0,1)$ such that $\epsilon_0\leq \ln\left(\frac{\vectNumInShuffler_q}{8\ln(2/\delta)}-1\right)$. \hfill $\lozenge$
\end{lemma}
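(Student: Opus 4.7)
The plan is to establish the central $(\epsilon,\delta)$-differential privacy guarantee by reducing to the privacy amplification by shuffling theorem that appears in the shuffle-model literature. The core intuition is that when each local authority~$i \in \chi_q$ independently applies an $\localRandPrivacyLevel$-differentially private randomizer $\localRandomizer_i$ and a trusted shuffler then uniformly permutes the outputs, the identity of the record that differs between two neighboring databases is concealed within the ``crowd'' of other outputs, leading to amplified privacy. Since the bounded Gaussian randomizer~$\localRandomizer_i$ has already been shown to be $\localRandPrivacyLevel$-differentially private (Mechanism~\ref{mech:bounded_Gaussian_mechanism}), what remains is to quantify the amplification from composition with the shuffler~$\permutation$.

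First, I would fix two neighboring databases $\database = \{\vect_i\}_{i \in \chi_q}$ and $\database' = \{\vect_i'\}_{i \in \chi_q}$ that, per Definition~\ref{def:neighboring}, differ in exactly one record at some index $i^\star \in \chi_q$. The goal is to bound $\prob{\centralDP_s(\database) \in S}$ by $e^{\epsilon}\prob{\centralDP_s(\database') \in S} + \delta$ for every measurable $S \subseteq \domain^\regionNum$. Because the shuffler first applies the local randomizers and then anonymizes via a uniform permutation over $\chi_q$, the output distribution depends on $\database$ only through the multiset of local randomizer outputs. The analysis therefore reduces to bounding the hockey-stick divergence between the two induced multiset distributions.

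Next, I would invoke the ``cloning'' decomposition of Feldman, McMillan, and Talwar. The key observation is that, for any two adjacent inputs, the output distribution of an $\localRandPrivacyLevel$-DP local randomizer can be written as a convex combination in which, with probability $\tfrac{1}{e^{\localRandPrivacyLevel}+1}$, the output coincides in distribution with that from the neighboring input. Applied to the $|\chi_q|-1$ unchanged records, this produces a binomial number of ``clones'' whose outputs are indistinguishable between $\database$ and $\database'$; the single record at $i^\star$ is then hidden among these clones after shuffling. Conditioning on a typical realization of the clone count and pushing the randomness through~$\permutation$ yields the amplification factor $\tfrac{4\sqrt{2\ln(4/\delta)}}{\sqrt{(e^{\localRandPrivacyLevel}+1)|\chi_q|}} + \tfrac{4}{|\chi_q|}$ that appears in the stated bound.

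The hard part will be the concentration step: showing that the binomial count of clones is close to its mean~$\tfrac{|\chi_q|-1}{e^{\localRandPrivacyLevel}+1}$ except on an event of probability at most $\delta/2$, and then combining this tail bound with a conditional divergence estimate via the standard ``bad event'' decomposition of approximate DP. This is precisely where the precondition $\localRandPrivacyLevel \leq \ln\bigl(\tfrac{\vectNumInShuffler_q}{8\ln(2/\delta)}-1\bigr)$ is needed, since it guarantees that the expected number of clones is large enough for the Chernoff-style bound to be non-vacuous. Once the conditional indistinguishability bound and the tail probability are combined, the claimed expression for~$\epsilon$ follows by absorbing the multiplicative factor $(e^{\localRandPrivacyLevel}-1)$ into the logarithm. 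Because this matches the published amplification result, the proof reduces to an application of~\cite{Cheu2019Distributed}, with the only adaptation being verification that the bounded Gaussian randomizer satisfies the hypotheses (which follows directly from Mechanism~\ref{mech:bounded_Gaussian_mechanism}).
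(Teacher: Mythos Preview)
Your proposal is correct and aligns with the paper's approach: the paper does not give an independent proof but simply states that the result follows directly from applying the shuffler design of~\cite{Cheu2019Distributed} to the present setting, and omits the details. Your sketch goes further by outlining the Feldman--McMillan--Talwar cloning decomposition that underlies the stated bound, which is more informative than what the paper provides; the only minor wrinkle is that the specific constants in the displayed $\epsilon$-bound actually match the ``hiding among the clones'' analysis rather than the original Cheu et al.\ result, so attributing the final reduction to~\cite{Cheu2019Distributed} (as the paper itself does) slightly conflates two references, but this does not affect the correctness of your argument.
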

\textcolor{black}{We use the shuffler design from~\cite{Cheu2019Distributed} to directly obtain Lemma~\ref{thm:shuffle} by applying our setting to their results. Therefore, we omit its proof.}

Intuitively, any recipient of privatized data, 
including the central aggregator and any downstream recipients, 
is restricted to seeing 
a uniformly random permutation of the private local aggregated ERN vectors received from the local authorities, making it challenging to link any vector back to its sender. 
In addition, the computational mechanism of the private cluster distributed ERN vector  in~\eqref{eq:assemble_central_aggregation} ensures that the random permutation of the private local aggregated ERN vectors does not affect the value produced by the aggregation step. 
\textcolor{black}{As a result, the 
differential
privacy guarantee provided to each $\tilde{\zeta}_i$ 
is amplified by the shuffler in Lemma~\ref{thm:shuffle}. For an adversary to infer information about $\vect_i$, which contains interaction frequencies for the $i^{th}$ entity, they must not only extract information from the noisy version $\privateVect_i$, but also identify $\privateVect_i$ within the shuffled set~$\{\privateVect_{\permutation(i)} \mid i\in\chi_q\}$ of $|\chi_q|$ vectors.} 

\subsection{Accuracy Analysis at the Central Aggregator}\label{sec:differential_privacy_accuracy_analysis}
As introduced in Step~$6$ of 
Algorithm~\ref{AL_1}, once the local private vectors are shuffled at their designated shufflers, the central aggregator will aggregate them into its \textcolor{black}{private cluster distributed ERN vector}~$\tilde{\mathcal{R}}_{X_{\chi_q,:}}^t$, where each entry can be computed using~\eqref{eq:assemble_central_aggregation}.
We next bound the error that privacy introduces into
private cluster distributed ERNs $\tilde{\mathcal{R}}_{\chi_q,\chi_r}^t$ in~\eqref{eq:assemble_central_aggregation}, which will answer Problem~\ref{prob:differential_privacy_accuracy_analysis}. 
Before that, we first introduce the following result on truncated Gaussian random variables.

\begin{lemma}
\cite[Chapter 3]{burkardt2014truncated}\label{thm:truncated_gaussian_moments}
    For each $\tilde{z}\sim\text{TrunG}(\mu,\sigma,l,u) $,  we have
      $\mathbb{E}[\tilde{z}] = \mu + \sigma\cdot\frac{\varphi\left(\alpha\right)-\varphi\left(\beta\right)}{\Phi\left(\beta\right)- \Phi\left(\alpha\right)}$ and 
    \begin{align}
        \text{Var}[\tilde{z}] &= \sigma^2\left[1-\frac{\beta\varphi(\beta)-\alpha\varphi(\alpha)}{\Phi\left(\beta\right)-\Phi\left(\alpha\right)} - \left(\frac{\varphi\left(\alpha\right)-\varphi\left(\beta\right)}{\Phi\left(\beta\right)-\Phi\left(\alpha\right)}\right)^2\right], \label{eq:trunc_gauss_var}
    \end{align}
    where $\alpha=\frac{l-\mu}{\sigma}$, $\beta=\frac{u-\mu}{\sigma}$, and $\Phi(\cdot)$ and $\varphi(\cdot)$ are defined in~Section~\ref{Sec_Pro_BG}.
    \hfill $\blacksquare$
\end{lemma}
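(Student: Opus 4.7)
The plan is to prove both formulas by direct computation from the definition of the truncated Gaussian density~$p_{TG}$ given earlier in Section~\ref{Sec_Pro_BG}, together with the standard change of variables $y = (x-\mu)/\sigma$ that converts integrals against the truncated density into integrals of the standard normal density~$\varphi$ over the interval $[\alpha, \beta]$, where $\alpha = (l-\mu)/\sigma$ and $\beta = (u-\mu)/\sigma$. The normalizing constant $\Phi(\beta) - \Phi(\alpha)$ appears as a factor in the denominator throughout, and can be pulled outside all integrals.

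First, I would compute $\mathbb{E}[\tilde{z}]$. Writing $\mathbb{E}[\tilde{z}] = \int_l^u x\, p_{TG}(x)\, dx$ and making the change of variables above yields $\mathbb{E}[\tilde{z}] = \mu + \frac{\sigma}{\Phi(\beta)-\Phi(\alpha)} \int_\alpha^\beta y\, \varphi(y)\, dy$. The key identity here is $\frac{d}{dy}\varphi(y) = -y\,\varphi(y)$, which gives $\int_\alpha^\beta y\,\varphi(y)\, dy = \varphi(\alpha) - \varphi(\beta)$, and substitution yields the stated formula for the mean.

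Second, I would compute $\mathbb{E}[\tilde{z}^2]$ by the same substitution, expanding $(\mu + \sigma y)^2 = \mu^2 + 2\mu\sigma y + \sigma^2 y^2$. The constant and linear-in-$y$ pieces reduce to expressions already handled above, so the only new quantity is $\int_\alpha^\beta y^2 \varphi(y)\, dy$. For this I would apply integration by parts with $u = y$ and $dv = y\varphi(y)\,dy$, noting $v = -\varphi(y)$, to obtain
\begin{equation*}
\int_\alpha^\beta y^2\,\varphi(y)\, dy = \bigl(\Phi(\beta)-\Phi(\alpha)\bigr) - \bigl(\beta\varphi(\beta) - \alpha\varphi(\alpha)\bigr).
\end{equation*}
Combining the three pieces gives a closed form for $\mathbb{E}[\tilde{z}^2]$. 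Finally, I would apply $\text{Var}[\tilde{z}] = \mathbb{E}[\tilde{z}^2] - (\mathbb{E}[\tilde{z}])^2$; the $\mu^2$ term and the cross term $2\mu\sigma\frac{\varphi(\alpha)-\varphi(\beta)}{\Phi(\beta)-\Phi(\alpha)}$ cancel, leaving the expression in~\eqref{eq:trunc_gauss_var}.

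Nothing in this proof is an obstacle in the technical sense: every step is an elementary calculus identity. The only bookkeeping concern is keeping track of signs in the two factors $\varphi(\alpha)-\varphi(\beta)$ (from $\int y\,\varphi$) versus $\beta\varphi(\beta)-\alpha\varphi(\alpha)$ (from integration by parts in $\int y^2 \varphi$), since these two differences have opposite orderings of $\alpha$ and $\beta$. As a sanity check at the end I would verify that, as $l \to -\infty$ and $u \to +\infty$, we recover $\mathbb{E}[\tilde{z}] \to \mu$ and $\text{Var}[\tilde{z}] \to \sigma^2$, which follows from $\varphi(\alpha), \varphi(\beta), \alpha\varphi(\alpha), \beta\varphi(\beta) \to 0$ and $\Phi(\beta)-\Phi(\alpha) \to 1$ in that limit.
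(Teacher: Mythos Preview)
Your derivation is correct and is the standard textbook computation for the moments of a truncated normal. Note, however, that the paper does not actually prove this lemma: it is quoted directly from \cite[Chapter 3]{burkardt2014truncated} and marked with~$\blacksquare$ to indicate that no proof is given. So there is no ``paper's proof'' to compare against; you have simply supplied the argument that the cited reference contains, and every step you outline (the substitution $y=(x-\mu)/\sigma$, the identity $\varphi'(y)=-y\varphi(y)$ for $\int y\varphi$, integration by parts for $\int y^2\varphi$, and the cancellation in $\mathbb{E}[\tilde z^2]-(\mathbb{E}[\tilde z])^2$) is correct.
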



\begin{thm}[Accuracy of Private Cluster Distributed Effective Reproduction Numbers]\label{thm:differential_privacy_accuracy}
    The first and second moment of the  
    private cluster distributed effective reproduction numbers~$\tilde{R}_{\chi_q,\chi_r}^t$ 
    in~\eqref{eq:assemble_central_aggregation} are 
    \begin{align*}
\expectation{\privateReproNum_{\chi_q,\chi_r}} &= \sum_{q=1}^m\frac{\sum_{i\in\chi_q}\gamma_ix_i(t)\expectation{\tilde{R}_{i,\chi_r}^t}}{\sum_{i\in{(\cup\chi_q})}\gamma_ix_i(t)},\\
        \variance{\privateReproNum_{\chi_q,\chi_r}} &= \sum_{q=1}^m\frac{\sum_{i\in\chi_q}\gamma_ix_i(t)\variance{\privateReproNum_{i, \chi_r}}}{\sum_{i\in{(\cup\chi_q})}\gamma_ix_i(t)},
    \end{align*}
    where $\expectation{\tilde{R}_{i,\chi_r}^t}$ and $\variance{\privateReproNum_{i, \chi_r}}$ are defined in  Lemma~\ref{thm:truncated_gaussian_moments}.\hfill $\blacksquare$
\end{thm}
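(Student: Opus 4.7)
The plan is to exploit the purely affine structure of the aggregation map in~\eqref{eq:assemble_central_aggregation}. Under Assumption~\ref{assum_private_info}, the denominator $\sum_{k\in\chi_q}\gamma_k x_k(t)$ is deterministic and publicly known, so the private cluster distributed ERN is a deterministic linear combination of the privatized local aggregated ERNs $\{\tilde{R}_{k,\chi_r}^t\}_{k\in\chi_q}$. This reduces the moment computation to (i) computing the moments of each $\tilde{R}_{k,\chi_r}^t$ and (ii) propagating them through a linear map.

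First I would observe that, by construction in Step~4 of Algorithm~\ref{AL_1}, each $\tilde{R}_{k,\chi_r}^t$ is the $r^{th}$ coordinate of $\privateVect_k$ produced when local authority $k$ independently applies Mechanism~\ref{mech:bounded_Gaussian_mechanism} to its own $\vect_k$. Therefore, either $\tilde{R}_{k,\chi_r}^t\equiv 0$ (when $\vectElement=0$) or $\tilde{R}_{k,\chi_r}^t\sim\text{TrunG}(R_{k,\chi_r}^t,\sigma,\vectElementLowerBound_{kr},\vectElementUpperBound_{kr})$. Its first and second moments are then immediately obtained by invoking Lemma~\ref{thm:truncated_gaussian_moments} with $\mu=R_{k,\chi_r}^t$, $l=\vectElementLowerBound_{kr}$, and $u=\vectElementUpperBound_{kr}$.

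Next I would apply linearity of expectation to~\eqref{eq:assemble_central_aggregation} to push $\expectation{\cdot}$ through the numerator sum, which yields the stated mean formula. For the variance, the crucial step is to argue that the collection $\{\tilde{R}_{k,\chi_r}^t\}_{k\in\chi_q}$ is mutually independent: this follows because the randomizers at different local authorities draw independent noise, and the shuffle step in Section~\ref{sec:central_aggregator_design} only permutes labels without altering the joint distribution of values. Independence then gives $\text{Var}[\sum_k \tilde{R}_{k,\chi_r}^t] = \sum_k \text{Var}[\tilde{R}_{k,\chi_r}^t]$, and pulling out the deterministic denominator via $\text{Var}[cZ]=c^2\text{Var}[Z]$ delivers the stated variance expression (up to absorbing $(\sum_k \gamma_k x_k(t))^2$ factors into the indexing).

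The main obstacle, and really the only nontrivial conceptual point, is justifying independence of the aggregated summands after the shuffler has acted: one must verify that the random permutation introduced by the shuffler does not induce correlations in the numerical values being summed. Since addition is invariant under permutation of its arguments and the pre-shuffle vectors are independent by construction, the post-shuffle multiset is exchangeable and thus the sum has the same distribution as the pre-shuffle sum, so variance decomposition is legitimate. Once this is in place, the two moment identities follow as direct consequences of Lemma~\ref{thm:truncated_gaussian_moments} plugged term-by-term into the linear aggregation formula.
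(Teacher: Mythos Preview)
Your approach is essentially the same as the paper's: the paper simply states that the result follows by applying linearity of expectation to~\eqref{eq:assemble_central_aggregation} together with the truncated Gaussian moment formulas in Lemma~\ref{thm:truncated_gaussian_moments}, and omits the proof. Your proposal follows this exact route, though you are more careful than the paper in explicitly justifying the independence of the $\{\tilde{R}_{k,\chi_r}^t\}_{k\in\chi_q}$ (and the permutation invariance of the sum under shuffling) needed for the variance decomposition, which the paper leaves implicit.
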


\textcolor{black}{We can obtain  Theorem~\ref{thm:differential_privacy_accuracy} by applying the linearity of expectation to~\eqref{eq:assemble_central_aggregation}. The result follows from the fact that~$\tilde{R}_{\chi_q, \chi_r}$   is a bounded Gaussian random variable and from the known expressions for its expectation and variance given in Lemma~\ref{thm:truncated_gaussian_moments}. Therefore, we omit the proof.}
\textcolor{black}{In differential privacy analysis, a larger value of $\epsilon$ provides weaker privacy protections and, as stated in Remark~\ref{rmk:implication_epsilon}, corresponds to a smaller $\sigma$. Note that~\eqref{ineq:sigma} quantifies that the lower bound of $\sigma$ increases if $\epsilon$ increases, and vice versa. Both the expectation and variance in Theorem~\ref{thm:differential_privacy_accuracy} become smaller as $\epsilon$ grows, which agrees with the intuition that weaker privacy
requires lower-variance noise.} 

\begin{remark}
\label{Remark_T_Gau}
\textcolor{black}{
Theorem~\ref{thm:differential_privacy_accuracy} offers a framework for calibrating privacy levels via acceptable error tolerance. Unlike typical differential privacy implementations that use
unbounded, zero-mean noise, 
Theorem~\ref{thm:differential_privacy_accuracy} indicates that the mean of the private cluster distributed ERNs, given by $\mathbb{E}[\tilde{R}_{\chi_q,\chi_r}]$, differs from the value of $\bar{R}_{\chi_q,\chi_r}$.  This deviation is an inherent trade-off when using the bounded mechanism described in Mechanism~\ref{mech:bounded_Gaussian_mechanism}. Nevertheless, the analytical expressions derived in Theorem~\ref{thm:differential_privacy_accuracy} are valuable for assessing the accuracy of the private cluster distributed ERNs.
We use real-world examples in Section~\ref{sec_Simulation} to demonstrate the effectiveness of the private cluster ERNs generated through Algorithm~\ref{AL_1}, as well as to show that the private ERNs can still provide useful information for analyzing the spreading network.}
\end{remark}

\section{Simulation and Application}
In this section, we present a real-world example to demonstrate how distributed reproduction numbers can be used to analyze disease spread across different regions in a disease spreading network in the United States. We then showcase the implementation of the proposed privacy framework to protect transmission interactions between regions, emphasizing how privacy can be ensured while maintaining the utility of distributed reproduction numbers.
\label{sec_Simulation}
\subsection{Data Processing and Local Distributed ERNs}
\label{sec_data_pre}
Consider a disease spreading network that models the spread of COVID-19 across regions in the United States. Mobility data between pairs of regions are used to reflect and compute the transmission coupling strength. 
The mobility data are provided by SafeGraph~\cite{safegraph2021Distancing}, which tracked the location of approximately $20$ million cell phones in the United States from August 9, 2020 to April 20, 2021. The data include information for over $200,000$ Census Block Groups (CBGs). Each cell phone is assigned a primary residence based on where it spends the majority of its time, and daily visits to other locations are recorded. The CBGs are then mapped to their corresponding local authorities, defining the regions under each authority’s jurisdiction. The transmission rates $[\beta_{ij}]$ between these regions are computed using the approach outlined in~\cite{le2022high}, based on the mobility data stored by the local authorities.
Using spatial-temporal data, we consider $1,023$ local authorities across the country, as illustrated by the colored markers in Figure~\ref{fig:clusters_after_first_clustering}. Each authority is responsible for storing and managing the mobility data of individuals within the region under its jurisdiction. 

\begin{figure}[ht]
\begin{subfigure}{1\linewidth}
    \centering \includegraphics[width=1\columnwidth]{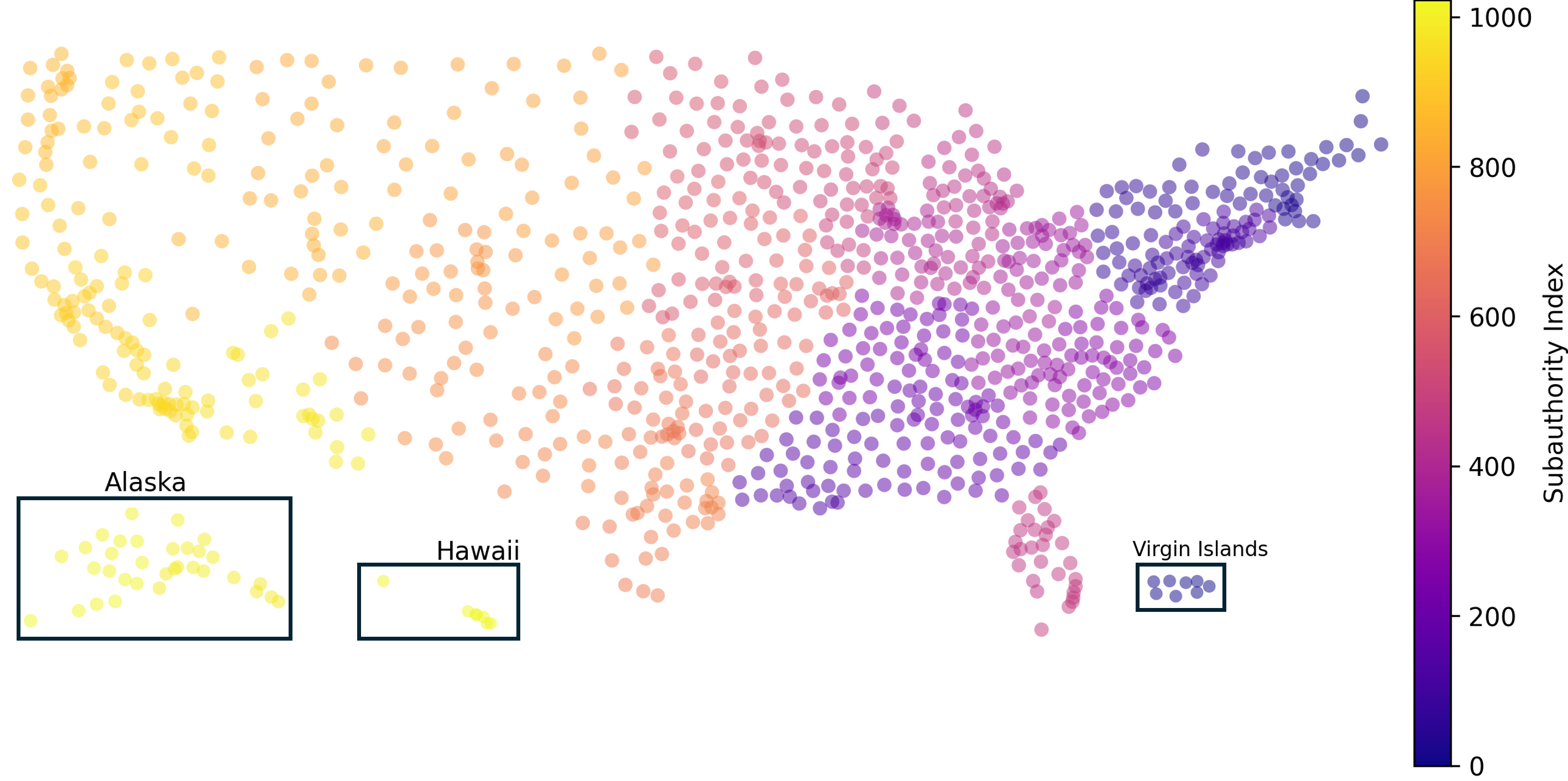}
    \caption{Total of 1,023 local authorities in the USA. Each local authority aggregates a number of Census Block
Groups (CBGs).}
\label{fig:clusters_after_first_clustering}
\end{subfigure}

\begin{subfigure}{1\linewidth}
    \centering \includegraphics[width=1\columnwidth]{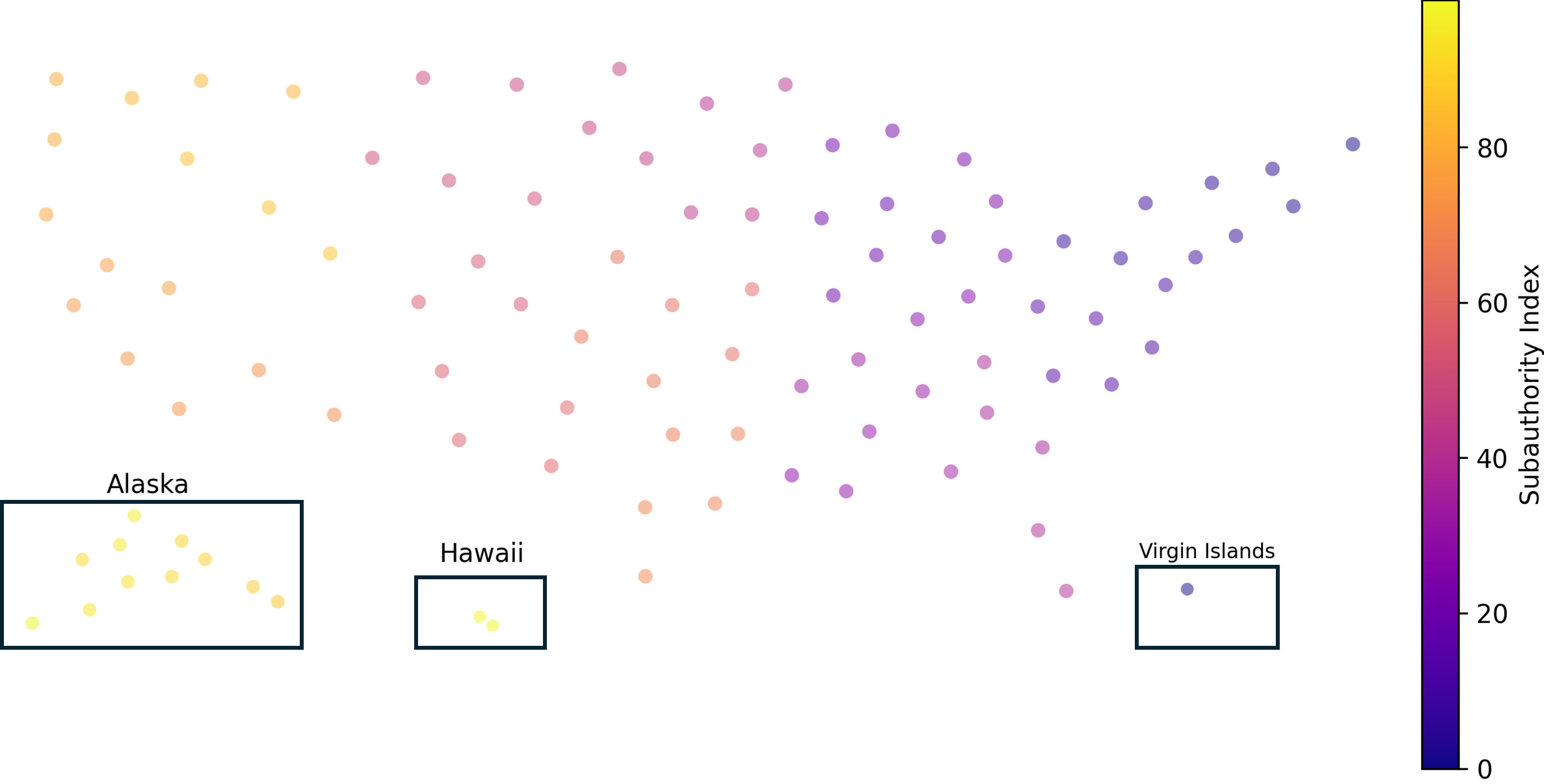}
    \caption{Total of 100 areas in the USA. \baike{The central aggregator of} each area aggregates a number of \baike{local}
    authorities.}
    \label{fig:clusters_after_second_clustering}
\end{subfigure}
\caption{The centroids of the jurisdictions of local authorities and the areas that are of interest.}
\end{figure}

From August 9, 2020, to April 20, 2021, the daily confirmed COVID-19 case numbers were obtained from the Center for Systems Science and Engineering (CSSE) at Johns Hopkins University~\cite{covid192020Dong}. Using these daily confirmed cases, along with total population data, we computed the susceptible and infected proportions of the population under the jurisdiction of each local authority~$i$, denoted as $s_i(t)$ and $x_i(t)$, respectively, for $i \in \{1, \dots, 1023\}$. The total population data were sourced from the SafeGraph Open Census Data~\cite{safegraph2020Census}.

Using these data, the \textcolor{black}{local} distributed ERNs between the jurisdictions of $1,023$ local authorities were computed using Definition~\ref{Def:NRN}, covering the period from August 9, 2020, to April 20, 2021.  
According to Definition~\ref{Def:NRN}, these local distributed ERNs can be organized into a real matrix with $1,023$ rows and $1,023$ columns, where the $i^{th}$  row is computed using only the local data stored by the corresponding $i^{th}$ local authority. This data includes the total population of the region, the susceptible and infected populations of the region, the transmission rates within and into the region, the recovery rate of the region, as well as the infected population and total population from other regions with direct transmission to the region.

These \textcolor{black}{local} distributed ERNs are then post-processed by making some key assumptions: 1) for each local authority $i$, we assume that at least one person is infected when we compute the \textcolor{black}{local} distributed ERN $\bar{R}_{ij}^t$ for each~$j \neq i$, since $\bar{R}_{ij}^t$ in Definition~\ref{Def:NRN} is introduced under the condition that the infected proportion within each entity must be greater than zero. Assuming a non-zero infection level in a global pandemic is a mild assumption.
 2) Each $\bar{R}_{ij}^t$ for all $i$ and $j$ will be projected to a range of $[0, 14]$. \textcolor{black}{Although the effective reproduction number of the COVID-19 pandemic in its early stages was around  $2-4$~\cite{caicedo2020effective}, it is reasonable for $\bar{R}_{ij}$ to be in a wider range. According to the Definition~\ref{Def:NRN}, the ratio of infections also plays a role in determining the \textcolor{black}{local} distributed ERNs between two entities. Therefore, the upper bound on the \textcolor{black}{local} distributed ERNs between different regions can be much higher or lower than the network-level effective reproduction number of the overall spread. In addition,  when $\bar{R}_{ij}$  becomes too large due to the ratio between the infected proportions, we project it to~$14$, establishing an upper limit on the transmission.}

\subsection{Cluster Distributed ERNs and Analyses}
\label{Sec_clus_dis_ERNs}
We consider a central authority in the United States, such as public health officials, elected leaders, and other decision-makers, interested in understanding disease spread across the country. Additionally, we assume that the central authority aims to capture regional trends and patterns at a coarser resolution than the interactions across the $1,023$ regions.
To achieve this, the central authority further groups these $1,023$ local authorities into $100$ regions, as illustrated in Figure~\ref{fig:clusters_after_second_clustering}. Then, each of the $100$ regions has a central aggregator that computes its cluster distributed ERN vector using the local distributed ERN vectors, based on~\eqref{eq_aggregated_reproduction_number_vector} and~\eqref{eq:assemble_central_aggregation}. We first follow the steps in Algorithm~\ref{AL_1} to compute the cluster distributed ERNs without implementing any privacy mechanisms, including the randomizer and the shuffler.

We then select three representative regions from the $100$ regions for analysis: two areas with large populations and one with a small population.
Specifically, we consider the Detroit area in the state of Michigan ($\chi_1$), with a total population of $7,330,520$, the Miami area in the state of Florida ($\chi_2$) with a total population of $8,057,252$, and the Delta Junction area in the state of Alaska ($\chi_3$) with a much smaller total population of $18,898$. 
\textcolor{black}{This selection allows us to study the spreading processes between urban and rural areas, across states with differing public health policies, and within interconnected economic zones.} The infected proportions of the populations in these three areas are shown in Figure~\ref{fig:infected_portion}, and the \textcolor{black}{cluster} distributed ERNs between these regions are illustrated in Figures~\ref{fig:dist_rep_num_DD} and~\ref{fig:dist_rep_num_DM}. 

\begin{figure}[H]
    \centering \includegraphics[width=1\columnwidth]{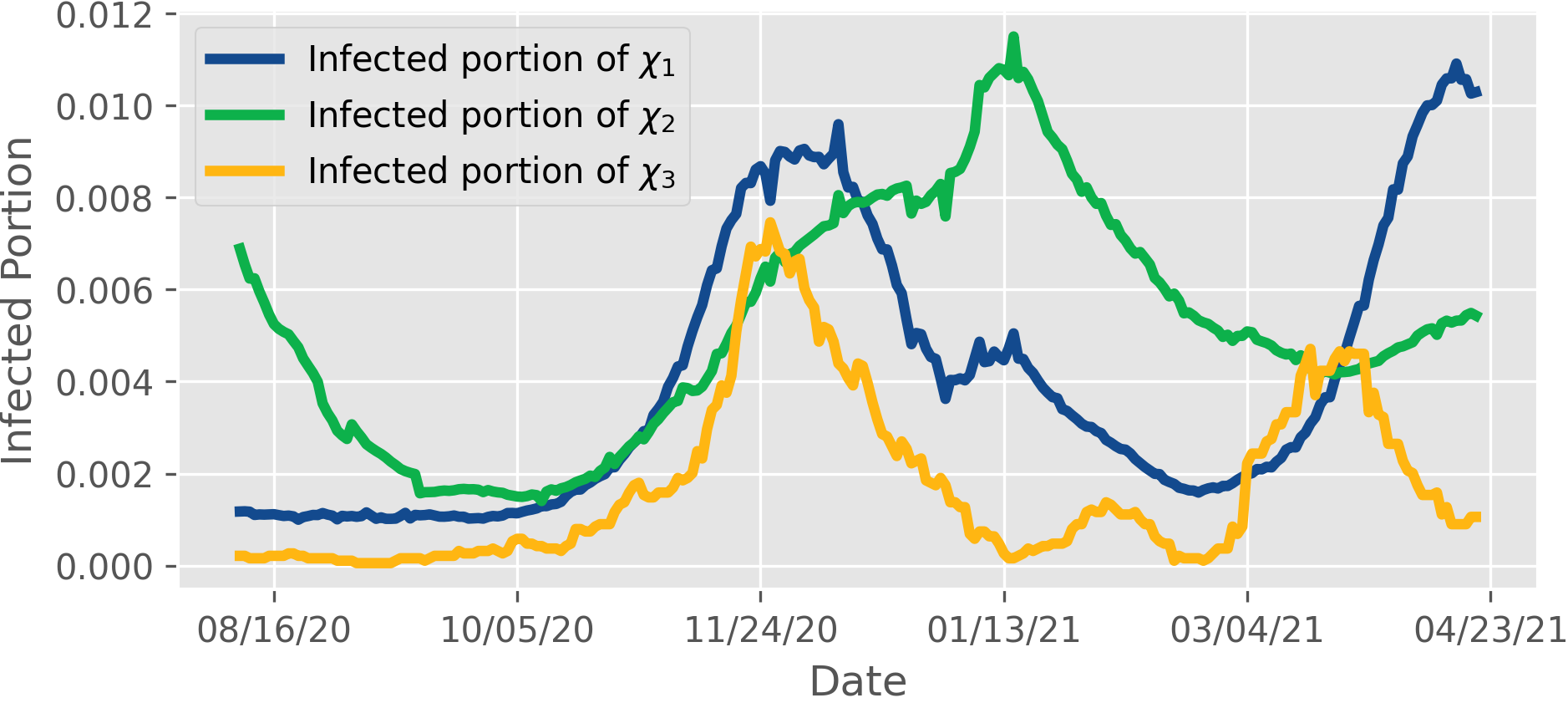}
    \caption{The infected portions of region $\chi_1$ (Detroit area), $\chi_2$ (Miami area), $\chi_3$ (Delta Junction) over time.}
\label{fig:infected_portion}
\end{figure}

\begin{figure}[H]
    \centering \includegraphics[width=1\columnwidth]{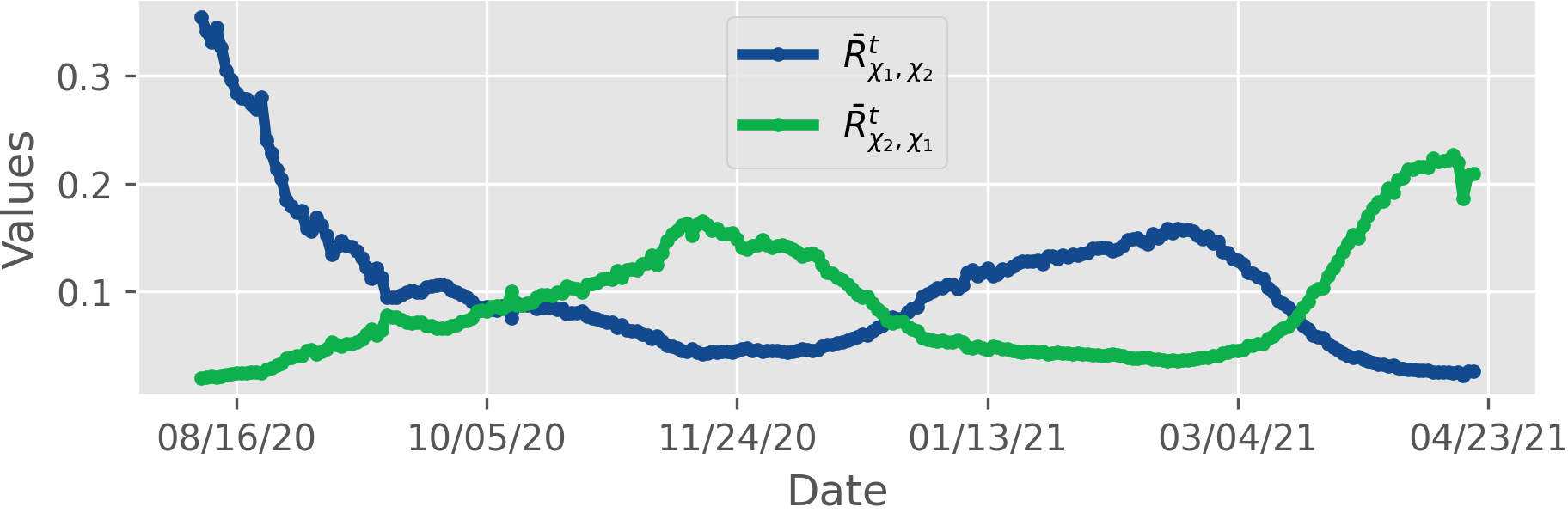}
    \caption{Cluster distributed ERNs between region $\chi_1$ (Detroit area) and $\chi_2$ (Miami area) over time.}
\label{fig:dist_rep_num_DD}
\end{figure}

\begin{figure}[H]
    \centering \includegraphics[width=1\columnwidth]{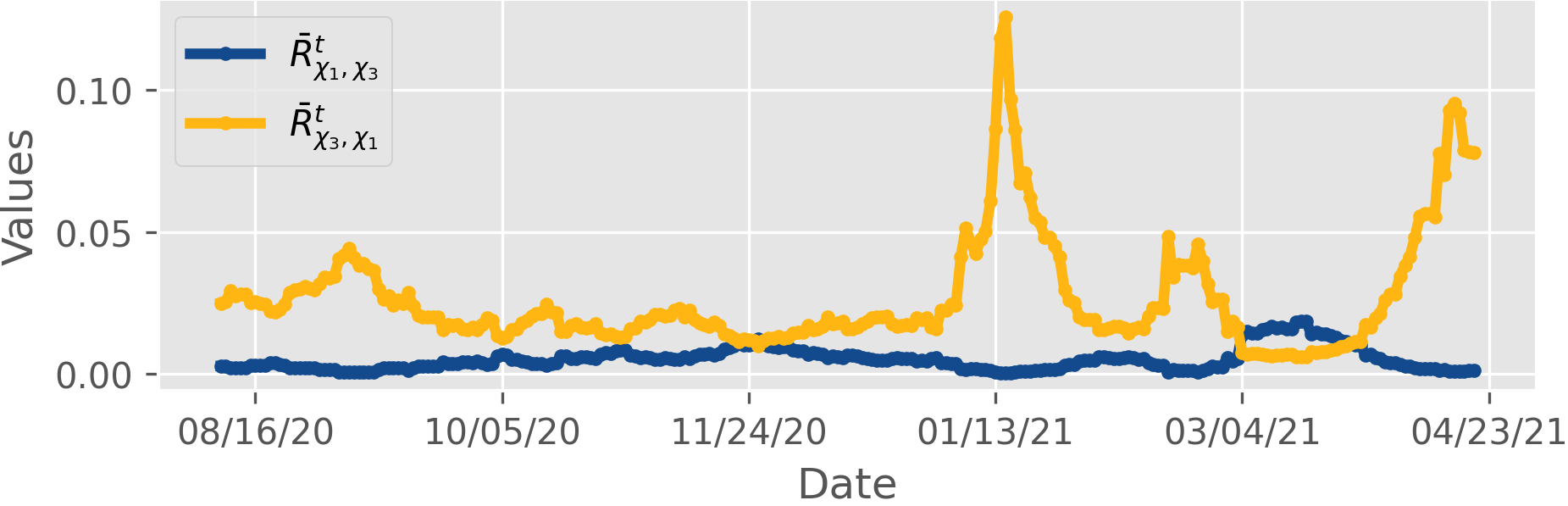}
    \caption{Cluster distributed ERNs between regions $\chi_1$ (Detroit area) and $\chi_3$ (Delta Junction) over time.}
\label{fig:dist_rep_num_DM}
\end{figure}

The green line in Figure~\ref{fig:dist_rep_num_DD} illustrates the \textcolor{black}{cluster} distributed ERNs from $\chi_1$ to  $\chi_2$, given by $\bar{R}^t_{\chi_2,\chi_1}$. It shows two main spikes: the first spike occurs in mid-November 2020, and the second at the end of April 2021. These spikes coincide with increases in the infected proportion of $\chi_2$, as seen by the green line in Figure~\ref{fig:infected_portion}.
However, when the infected proportion of $\chi_2$ reaches its peak in mid-January 2021,
$\bar{R}^t_{\chi_2,\chi_1}$ drops to a relatively low value. This phenomenon can be explained by the definition of $\bar{R}^t_{\chi_q,\chi_r}$  in~\eqref{Eq_Distributed_R_Clusters}.  Recall that $\bar{R}^t_{\chi_q,\chi_r}$ is not only influenced by the infected proportion within a cluster but also by the local distributed ERNs from entities outside the cluster to those within it, i.e., $\bar{R}^t_{ij}$ for $i\in {\chi_q}$ and $j\in {\chi_r}$. Furthermore, $\bar{R}^t_{ij}$ is determined by the ratio of the infected proportions from entity  $j$ to entity $i$ for  $i\in {\chi_q}$ and $j\in {\chi_r}$. 

\textcolor{black}{Hence, we conclude that these spikes in $\bar{R}^t_{\chi_2,\chi_1}$ are due to the relative changes in the infected proportions within these two regions. Specifically, the increase in the infected population in the Miami region and the decrease in the infected population in Delta Junction, Alaska,  generate these spikes.}
\textcolor{black}{When the infected population in $\chi_1$ spikes
{\textemdash}
first in mid-November 2020 and again at the end of April 2021
{\textemdash}
the ratio is large, leading to spikes in $\bar{R}^t_{\chi_2,\chi_1}$ in 
Figure~\ref{fig:dist_rep_num_DD}. Conversely, when the infected population of $\chi_1$ is lower, such as in March 2021, $\bar{R}^t_{\chi_2,\chi_1}$ is also low. 
The same reasoning can be applied to analyze the shape of the blue line representing $\bar{R}^t_{\chi_1,\chi_2}$ in Figure~\ref{fig:dist_rep_num_DD}.}

Figure~\ref{fig:dist_rep_num_DM} illustrates the cluster distributed ERNs between the Detroit area $\chi_1$ and the Delta Junction area $\chi_3$. For $\bar{R}^t_{\chi_3, \chi_1}$, we observe three main spikes starting from mid-January 2021.  Referring to Figure~\ref{fig:infected_portion}, we conclude that these spikes occur when the infected proportion in $\chi_3$ is decreasing and/or the infected proportion in $\chi_1$ is increasing or reaching a local peak. In contrast,
$\bar{R}^t_{\chi_1, \chi_3}$ shows a more stable trend, with only a mild spike occurring in mid-March.  At that time, Figure~\ref{fig:infected_portion} indicates that the infected proportion in $\chi_3$ surpasses the infected proportion in $\chi_1$.  This observation highlights the influence of infected proportions from $\chi_1$ to $\chi_3$ on computing $\bar{R}^t_{\chi_3, \chi_1}$, particularly when analyzing the interactions between a large population area and a region with a significantly smaller population.

From this analysis, we observe that, unlike the network-level ERN, the cluster distributed ERNs can infer not only infection dynamics within the cluster itself but also potential outbreaks or spreading behavior in other connected communities, thereby capturing causal relationships.

\subsection{Private Cluster Distributed ERNs and Accuracy}

In this section, we consider the frequency of interactions between regions, as introduced in Section~\ref{sec_data_pre}, to be sensitive information. When sharing local or cluster distributed ERNs with the public or higher authorities, these distributed reproduction numbers could reveal the frequency of interactions between regions over time. As a result, the population flow between these regions in the United States from August 9, 2020 to April 20, 2021 could be inferred.
Thus, we follow the same procedure as in Section~\ref{Sec_clus_dis_ERNs}, i.e., the steps in Algorithm~\ref{AL_1} with all privacy mechanisms implemented, to generate the private cluster distributed ERNs for the Detroit area, the Miami area, and the Delta Junction area. 

We present the root mean squared error (RMSE) over
time in Figure~\ref{fig:rmse} and  
the private cluster distributed ERNs between these clusters in Figure~\ref{fig:dp_distributed_rep_num}. 
Figures~\ref{fig:Detroit_to_Miami}-\ref{fig:Delta_to_Detroit}  illustrate the comparisons between the cluster distributed ERNs and their corresponding private cluster distributed ERNs, represented by the mean and 
points that are within 
one standard deviation of the mean. Figures~\ref{fig:Detroit_to_Miami}–\ref{fig:Delta_to_Detroit} show $\tilde {R}^t_{\chi_2,\chi_1}$, $\tilde {R}^t_{\chi_1,\chi_2}$, $\tilde {R}^t_{\chi_3,\chi_1}$, and $\tilde {R}^t_{\chi_1,\chi_3}$, respectively.  

\begin{figure}[H]
    \centering \includegraphics[width=1\columnwidth]{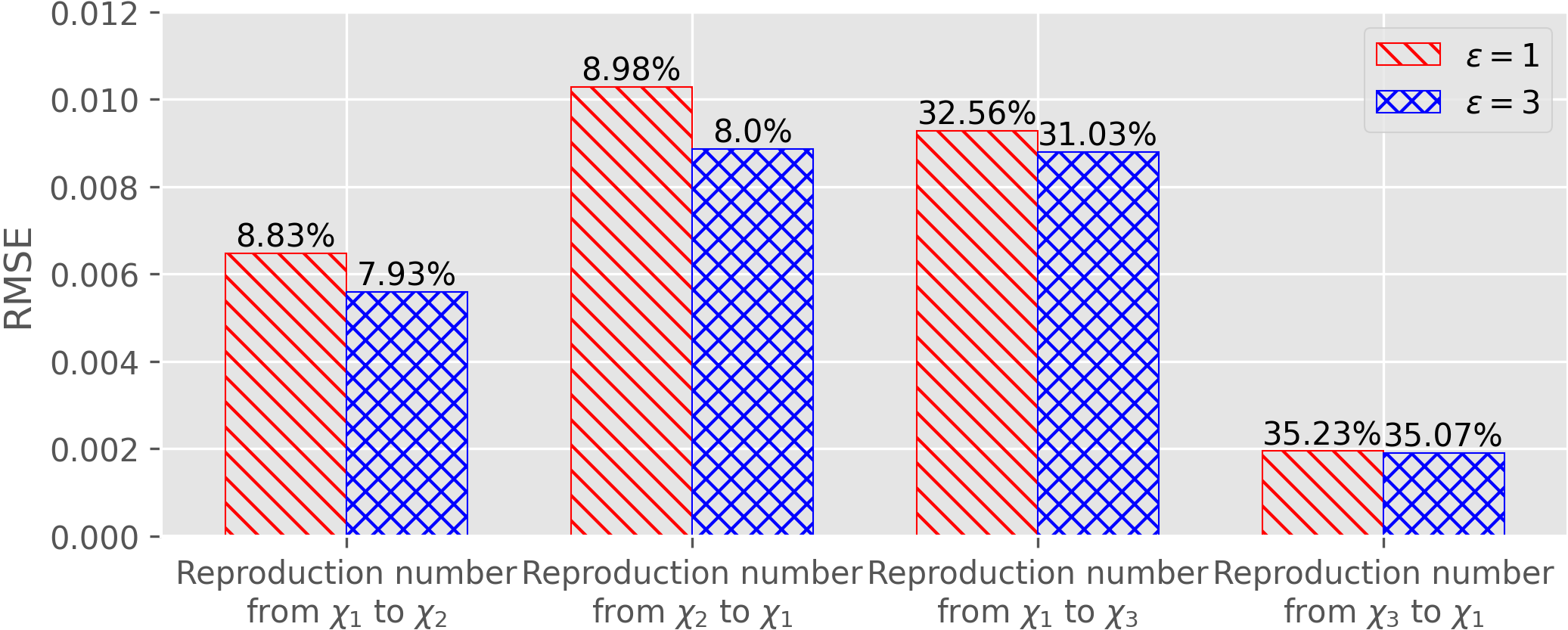}
    \caption{The root mean squared error (RMSE) of the private cluster distributed ERNs  under different privacy levels over time. The average percentage errors are also displayed on top of the bars.
    As~$\epsilon$ decreases from~$3$ to~$1$, privacy strengthens, with only slight increases in the percentage error induced by the privacy mechanism. This result indicates that it is possible to achieve both strong differential privacy and accurate computations of the cluster distributed ERNs simultaneously. 
    }
\label{fig:rmse}
\end{figure}

Figure~\ref{fig:dp_distributed_rep_num} illustrates the accuracy of the private \textcolor{black}{cluster distributed} ERNs.
Specifically, we plot the empirical mean and standard deviation of 100 differentially private samples with privacy level $\epsilon=1$ and adjacency parameter~$k=10^{-5}$ for all $i$. 
The value of this adjacency parameter is chosen by the maximum variation in 
the distributed ERNs that a single mobile data point can cause when it changes by its maximum possible amount.

\begin{figure*}
    \begin{subfigure}[b]{0.5\linewidth}
        \includegraphics[width=\linewidth]{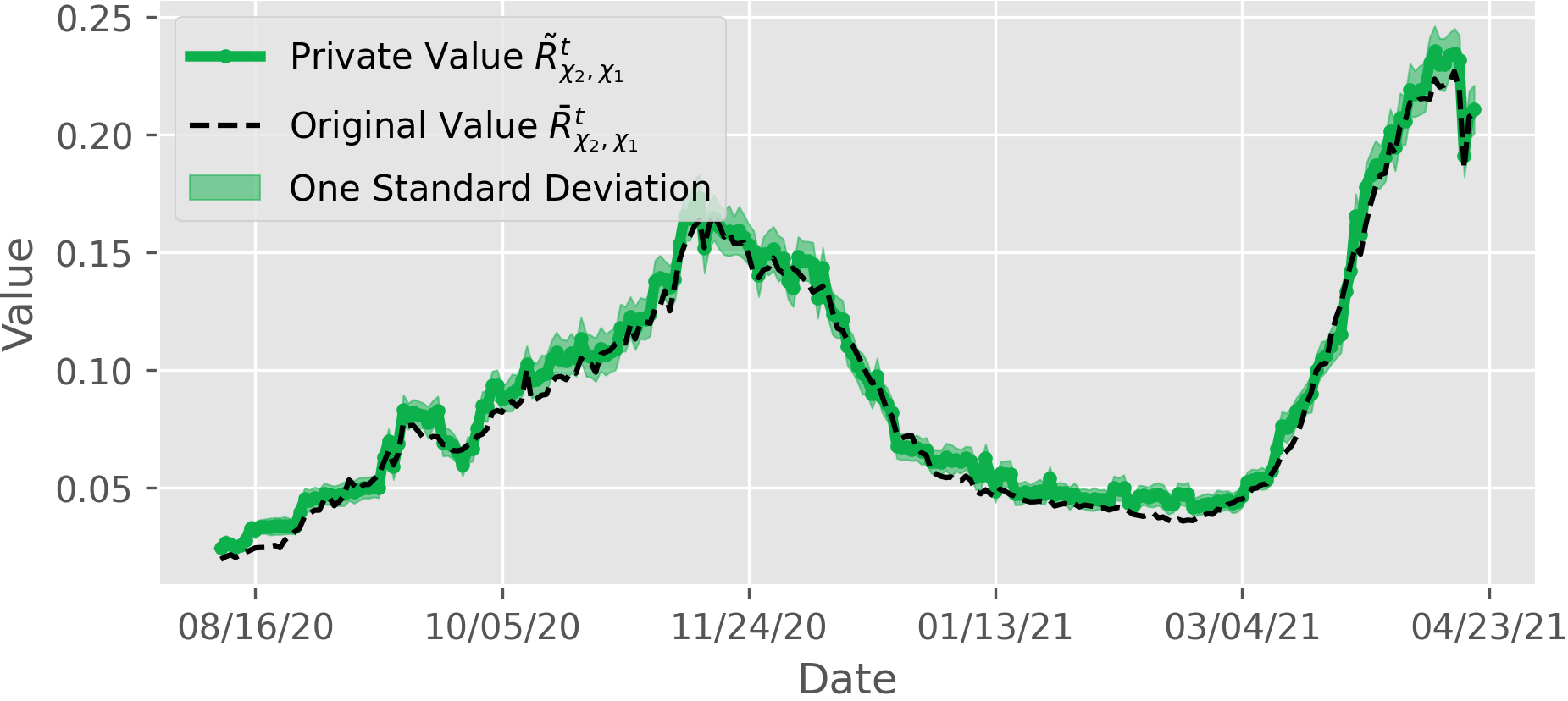}
        \caption{The original value of the cluster distributed ERNs from the region $\chi_1$ (Detroit region) to the region $\chi_2$ (Miami region)  is shown in dashed line. The mean and area within a standard deviation of private values are shown in green.}
        \label{fig:Detroit_to_Miami}
    \end{subfigure}
    \hfill
    \begin{subfigure}[b]{0.5\linewidth}
        \includegraphics[width=\linewidth]{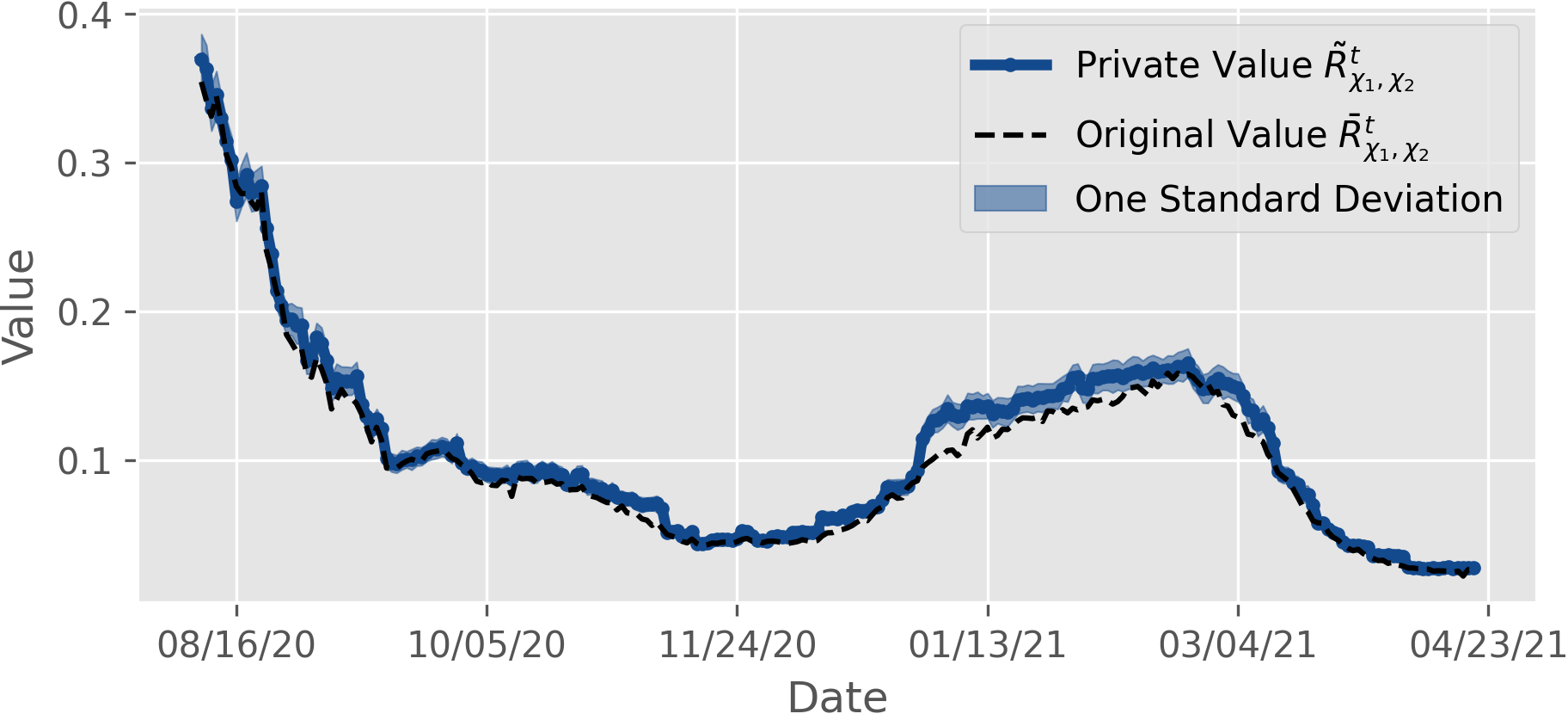}
        \caption{The original value of the cluster distributed ERNs from the region $\chi_2$ (Miami region) to the region $\chi_1$ (Detroit region)  is shown in dashed line. The mean and area within a standard deviation of private values are shown in blue.}
        \label{fig:Miami_to_Detroit}
    \end{subfigure}
    
    \begin{subfigure}[b]{0.5\linewidth}
        \includegraphics[width=\linewidth]{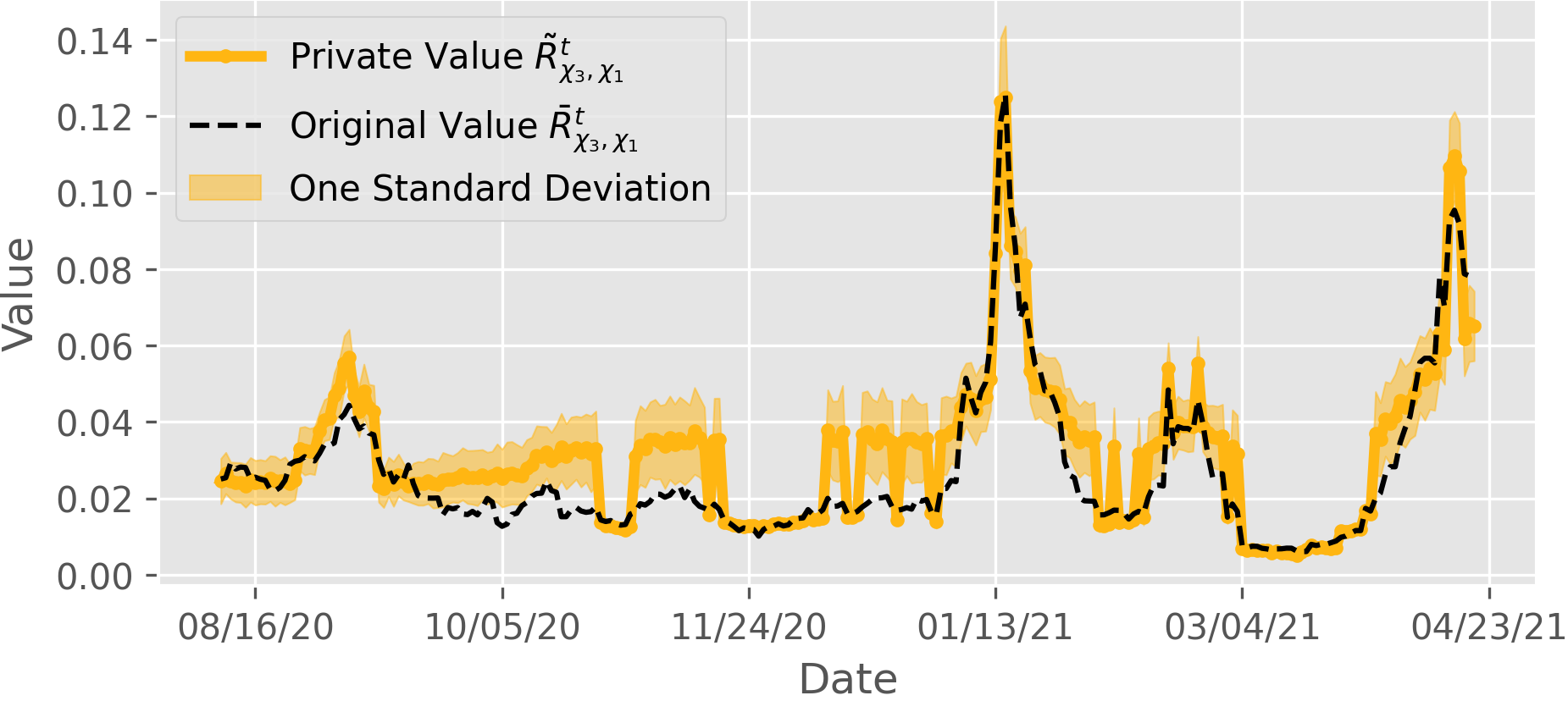}
        \caption{The original value of the cluster distributed ERNs from the region $\chi_1$ (Detroit region) to the region $\chi_3$ (Delta Junction region) is shown in dashed line. The mean and area within a standard deviation of private values are shown in gold.}
        \label{fig:Detroit_to_Delta}
    \end{subfigure}
      \hfill
    \begin{subfigure}[b]{0.5\linewidth}
        \includegraphics[width=\linewidth]{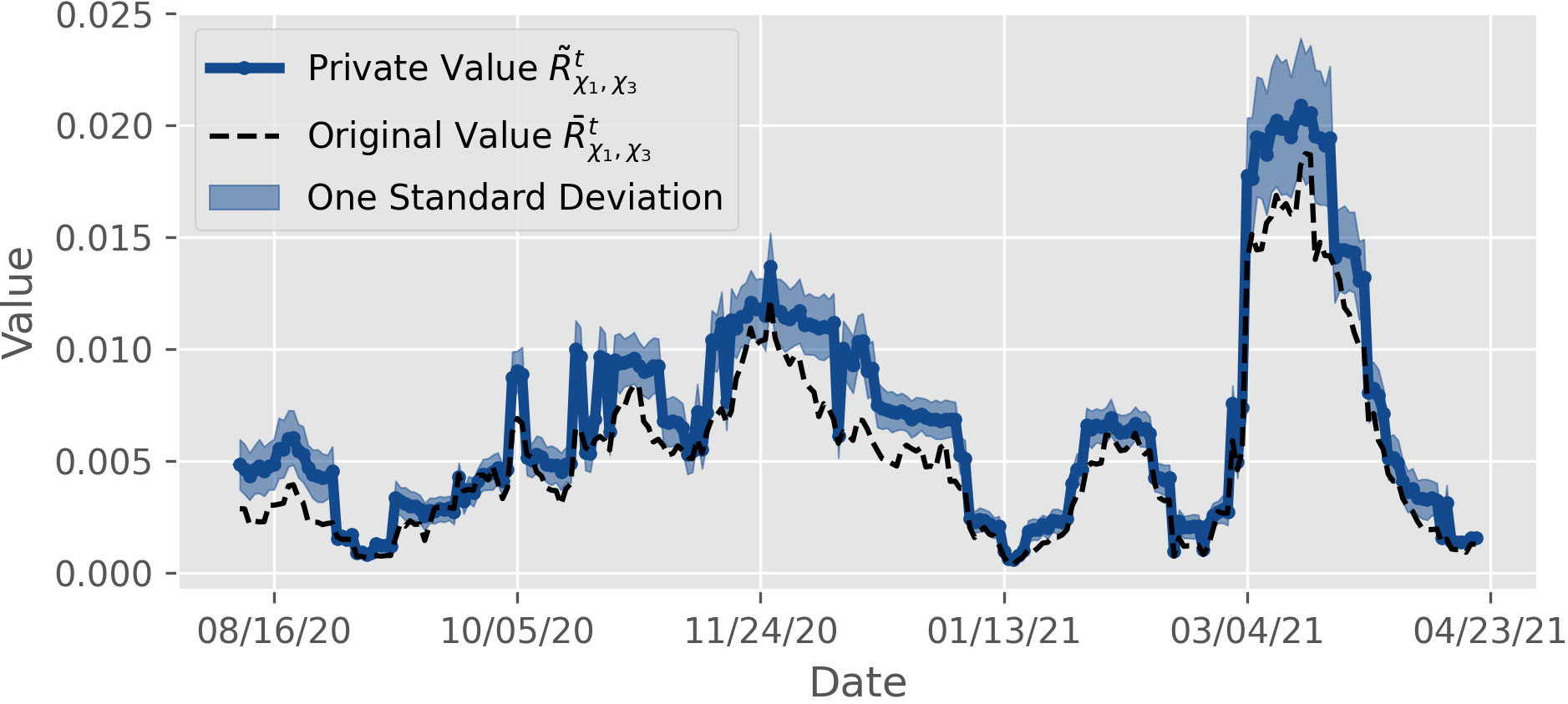}
        \caption{The original value of the cluster distributed ERNs from the region $\chi_3$ (Delta Junction region) to the region $\chi_1$ (Detroit region) is shown in dashed line. The mean and area within a standard deviation of private values are shown in blue.}
        \label{fig:Delta_to_Detroit}
    \end{subfigure}
\caption{Original values of the cluster distributed ERNs and their private values with~$\epsilon=1$.
The private values tend to be mildly conservative
in the sense that they are often larger than their non-private counterparts,
though they are generally close and provide accurate estimates of them. 
}
\label{fig:dp_distributed_rep_num}
\end{figure*}

Figure~\ref{fig:rmse} shows the root mean squared error (RMSE) over time. In Figure~\ref{fig:rmse}, the private cluster-distributed ERNs with a strong differential privacy guarantee ($\epsilon=1$) incur an error percentage ranging from $8.83\%$ to $5.23\%$. 
Specifically, we observe that the magnitudes of errors across all four private cluster distributed ERNs are on the same scale.
However, since the magnitudes of cluster distributed ERNs between $\chi_1$ and $\chi_3$ are significantly smaller than those between $\chi_1$ and $\chi_2$, the percentage error in the private cluster distributed ERNs between $\chi_1$ and $\chi_2$ is higher. 
On the other hand, in Figures~\ref{fig:Detroit_to_Miami}–\ref{fig:Delta_to_Detroit}, we observe that the private cluster distributed ERNs tend to be higher than their non-private counterparts. This observation suggests that estimates using private cluster distributed ERNs may slightly overestimate the severity of the spread. This result is caused by the truncated Gaussian mechanism, as discussed in Remark~\ref{Remark_T_Gau}.
Additionally, we observe that the private values are concentrated around their empirical averages. These observations support the effectiveness of using private cluster distributed ERNs.

Recall that the parameter~$\epsilon$ controls the strength of 
differential privacy's protections. Lower values of  $\epsilon$ correspond to stronger privacy levels, typically achieved by adding more noise to the effective reproduction numbers. In Figure~\ref{fig:rmse}, we observe that the accuracy of the private cluster ERNs remains consistently high regardless of the value of~$\epsilon$. This observation demonstrates that the cluster distributed ERNs with differential privacy are generally robust to the level of privacy enforced.

\section{Conclusion}
This paper developed methods to develop reproduction numbers for
epidemics at varying resolutions. It was shown that these new reproduction numbers can effectively 
give insight into the spread of an epidemic across different regions, and a differential privacy framework was developed
to protect sensitive data, such as individuals' travel patters, when computing them.
These developments were validated through real-world spreading processes, highlighting the utility of distributed reproduction numbers and the balance between model error and privacy strength.
In future work, we aim to leverage local and cluster distributed reproduction numbers to design pandemic control algorithms. Specifically, we will study how changes in local distributed reproduction numbers affect cluster distributed reproduction numbers at different scales, forming a hierarchical control framework that uses reproduction numbers to model and control the spreading network. Furthermore, we will develop a robust control framework that accounts for model errors introduced by the privacy mechanism designed in this work.
\label{sec_Conclusion}
\bibliographystyle{IEEEtran}
\bibliography{main}
\end{document}